\numberwithin{equation}{section}
\theoremstyle{plain}                
\newtheorem{theorem}{Theorem}[section]
\newtheorem{lemma}[theorem]{Lemma}
\newtheorem{proposition}[theorem]{Proposition}
\newtheorem{corollary}[theorem]{Corollary}
\theoremstyle{definition}           
\newtheorem{definition}[theorem]{Definition}
\newtheorem{example}[theorem]{Example}
\newtheorem{assumption}[theorem]{Assumption}
\theoremstyle{remark}
\newtheorem{remark}[theorem]{Remark}
\newcommand{\X}[1]{\mathcal{X}(#1)}
\newcommand{\xinf}{\mathcal{X}^{\infty}}
\newcommand{\ar}{\mathcal{R}}
\newcommand{\ra}{\rho_{\mathcal{A}}}
\newcommand{\Ma}{\mathcal{M}_{\mathcal{A}}}
\newcommand{\bsp}{\mathbf{p}}
\newcommand{\hr}{\hat{\mathcal{R}}}
\newcommand{\boa}{\boldsymbol{a}}
\newcommand{\argmin}{\operatorname{argmin}}
\DeclareMathOperator*\epi{epi}
\newcommand{\abs}[1]{\left| #1 \right|} 
\newcommand{\set}[1]{\left\{#1\right\}} 
\newcommand{\sets}[2]{\set{#1\,:\,#2}} 
\newcommand{\ind}[1]{ {\mathbf 1}_{{#1}}} 
\newcommand{\norm}[1]{{||#1||}} 
\newcommand{\prf}[1]{ ( #1 )_{t\in [0,T]}}
\providecommand{\R}{} \renewcommand{\R}{{\mathbb R}}
\newcommand{\N}{{\mathbb N}}
\newcommand{\PP}{{\mathbb P}}
\newcommand{\QQ}{{\mathbb Q}}
\newcommand{\EE}{{\mathbb E}}
\newcommand{\FF}{{\mathcal F}}
\newcommand{\GG}{{\mathcal G}}
\newcommand{\BB}{{\mathcal B}}
\newcommand{\MM}{{\mathcal M}}
\newcommand{\CC}{{\mathcal C}}
\newcommand{\EN}{{\mathcal E}}
\newcommand{\XX}{{\mathcal X}}
\renewcommand{\AA}{{\mathcal A}}
\newcommand{\eps}{\varepsilon}
\newcommand{\ld}{\lambda}
\newcommand{\el}{{\mathbb L}} 
\newcommand{\lzer}{\el^0}
\newcommand{\lone}{\el^1}
\newcommand{\linf}{\el^{\infty}}
\newcommand{\remove}[1]{\st{#1}}
\renewcommand{\remove}[1]{}
\newcounter{notenum}
\newcommand{\define}[1]{{\em #1}}
\newcommand{\bS}{\mathbf{S}}
\newcommand{\bT}{\boldsymbol{\Theta}}
\newcommand{\bt}{\boldsymbol{\vartheta}}
\newcommand{\bvt}{\boldsymbol{\vartheta}}
\newcommand{\bB}{\boldsymbol{B}}
\newcommand{\agset}{\mathcal{G}_{\mathcal{E}_{1},\mathcal{E}_{2}}}
\renewcommand{\agset}{\mathcal{G}}
\newcommand{\tAA}{\tilde{\AA}}
\newcommand{\bsv}{\boldsymbol{v}}
\newcommand{\bsdel}{\boldsymbol{\delta}}
\newcommand{\sP}{\mathcal{P}}
\newcommand{\bze}{\boldsymbol{0}}
\newcommand{\lzr}{\rho_1\lozenge\dots \lozenge \rho_I}
\newcommand{\sek}[1]{\{#1_k\}_{k\in\N}}
\begin{document}

\title{Partial Equilibria with Convex Capital Requirements:\\
  Existence, Uniqueness and Stability}

\thanks{Both authors were
  supported in part by the National Science Foundation under award
  number DMS-0706947 during the preparation of this work. Any
  opinions, findings and conclusions or recommendations expressed in
  this material are those of the authors and do not necessarily
  reflect those of the National Science Foundation. \linebreak\indent
  The authors would like to thank Ioannis Karatzas, Constantinos
  Kardaras, Mihai Sirbu, Stathis Tompaidis, Thaleia Zariphopoulou, and
  the participants of the Fifth World Congress of the Bachelier
  Finance Society, London, UK, 2008, for fruitful discussions and good
  advice.}

\maketitle

\bigskip

\begin{center}
\begin{minipage}{0.4\textwidth}
\begin{center}
{\bf\large Michail Anthropelos}\\
Department of Mathematics\\
University of Texas at Austin\\
1 University Station, C1200\\
Austin, TX 78712, USA\\
{\tt manthropelos@math.utexas.edu}\\
\end{center}
\end{minipage}
\begin{minipage}{0.4\textwidth}
\begin{center}
{\bf\large Gordan \v Zitkovi\' c}\\
Department of Mathematics\\
University of Texas at Austin\\
1 University Station, C1200\\
Austin, TX  78712, USA\\
{\tt gordanz@math.utexas.edu}\\
\end{center}
\end{minipage}
\end{center}

\ \\[1ex]

\begin{center} \today \end{center}

\bigskip

\begin{quote}
  \noindent{\bf Abstract.} In an incomplete semimartingale model of a
  financial market, we consider several risk-averse financial agents
  who negotiate the price of a bundle of contingent claims.  Assuming
  that the agents' risk preferences are modelled by convex capital
  requirements, we define and analyze their demand functions and
  propose a notion of a partial equilibrium price. In addition to
  sufficient conditions for the existence and uniqueness, we also show
  that the equilibrium prices are stable with respect to
  misspecifications of agents' risk preferences.
 \end{quote}

\medskip

 \noindent{\bf Key words and phrases.}
acceptance sets,
convex capital requirements,
incomplete markets,
mutually agreeable claims,
partial equilibrium allocation,
partial equilibrium price,
stability of equilibria

\medskip

\noindent{\bf 2000 Mathematics Subject Classification. } Primary: 91B70;
\ \ Secondary: 91B30, 60G35.

\section{Introduction}

In complete market models, the price of a contingent claim is simply
given by its replication cost.  In the more realistic, incomplete
models, the arbitrage-free paradigm typically fails to produce a
unique price and yields only a price-interval.  The presence of
\textit{unhedgeable} claims - due to the aforementioned market
incompleteness - necessitates the introduction of another fundamental
principle whenever one wants to produce a unique value for a given
contingent claim. The long history of empirical inquiry into human
behavior under risk dictates that this additional component is related
to some numerical measure of risk-aversion, idiosyncratic to the
agent valuing the claim.  The majority of the existing literature uses
agents' risk preferences to induce a \textit{subjective} ``pricing''
mechanism which provides bid and ask prices for a claim payoff
(consider for instance the indifference- or marginal-utility-based
price concepts; see, e.g., the references in \cite{Car09}). In
reality, however, the observed price of any claim is always a result
of  interaction among a number of agents. In fact, the very notion
of a ``price'' makes sense only as the observed quantity at which a
transaction between two (or more) agents already took place;
consequently, what is called {\em pricing} in the bulk of the
contemporary literature should rather be referred to as {\em valuation}. We
abstain from such a renaming in order to keep in line with the already
established terminology.

In the present paper, we consider several risk-averse financial agents
who negotiate the price of a fixed bundle of claims, and we propose a
{\em partial-equilibrium} pricing scheme in the spirit of the
classical general-equilibrium theory.  We place ourselves in a
(liquid) financial market model driven by a \textit{locally-bounded
  semimartingale}, fix a time horizon $T>0$ and assume that each
agent's risk preferences and investment goals are abstracted in the
notion of an \textit{acceptance set}. Roughly speaking, this set
includes all the positions with maturity up to time $T$ that the
agent deems acceptable at time $0$. Following the literature of convex
risk measures, we assume that each acceptance set satisfies certain
standard properties, such as monotonicity and convexity. An additional
property that relates the agents' acceptance sets to the liquid market and
the set of admissible strategies is also imposed (see Axiom
\textit{Ax4} on page \pageref{axiom 4}).  Thus axiomatized acceptance
sets are naturally identified with \textit{capital requirements} (or
\textit{risk measures});
intuitively, the (convex) capital requirement of a payoff is the
minimum amount of money which, when added to the payoff, creates
a position in the  acceptance set.

The notion of risk measure was introduced to Mathematical Finance in
the late nineties (see \cite{ArtDelEbeHea99}) and has captured a large
part of the research activity in this field since (see, among others,
\cite{Del02} and \cite{FolSch02}, as well as Chapter 4 in
\cite{FolSch04} and references therein). Convex risk measures in the
context of a liquid financial market were first studied in
\cite{CarGemMad01} and \cite{FolSch02} (see also \cite{FolSch02b} and
section 4.8 of \cite{FolSch04}). In \cite{FolSch02}, a convex risk
measure is defined in the presence of a financial market, where the
agent is allowed to trade in discrete time and under specific convex
constrains. In \cite{CarGemMad01}, a pricing scheme for non-replicable
claims based on risk measures is proposed in a
finite-probability-space model. The abstract definition of a convex
capital requirement and its dual representation for a large family of
models was given in \cite{FriSca06} and \cite{BarElk05}.  For the
dynamic version of convex-risk-measure-based pricing in an incomplete
market setting, we refer the reader to \cite{KloSch07} and, for the
sufficient conditions for existence of an optimal trading strategy
that makes a contingent claim acceptable, to \cite{Xu06} and
\cite{Pal06}.

Having described the decision-theoretic set up, we focus on the
interaction of $I\geq 2$ agents, who have access to (possibly)
different financial markets, and we define \textit{mutually agreeable
  bundle}. Given a bundle of contingent claims
$\bB=(B_1,B_2,...,B_n)$, we start by introducing the set of its {\em
  allocations}, i.e., the set of matrices that represent the feasible
ways of sharing $\bB$ among agents. Then, we say that a pair
$(\bB,\boa)$, of a bundle $\bB$ and its allocation $\boa$, is {\em
  mutually agreeable} if there exists a price vector $\bsp$, at which
re-allocation of $\bB$ according to $\boa$ is acceptable to every
agent at price $\bsp$.  This is a generalized version of the notion of
mutually agreeable claims given in \cite{AntZit08}. In section
\ref{sec:agreement}, we study its properties and relate it to the
well-known notion of \textit{Pareto optimality}.

For models that include uncertainty, the concept of a Pareto optimal
allocation was first analyzed in the insurance/reinsurance context in
\cite{Borc62}, \cite{Ger78}, and \cite{BuhJew79} and further developed
in \cite{Buh80}, \cite{Buh84} and \cite{Wyl90}. More recently, the
issues related to Pareto optimality and design of an optimal contract
were studied in the more general settings of convex (coherent) risk
measures (see, e.g., \cite{BarElk04}, \cite{BarSca08},
\cite{BurRus07}, \cite{FilKup08}, \cite{HeaKu04} and
\cite{JouSchTou06}).  In the presence of a financial market, this
problem was addressed in \cite{BarElk05} and \cite{KloSch07}.
Recently, in \cite{FilKup08a} (see also \cite{HeaKu04}), the concept
of Pareto-optimality has been used to determine an equilibrium pricing
rule, where the term ``pricing rule'' refers to a finitely additive
measure (an element of the dual of $\linf$).  More precisely, the
authors provide sufficient and necessary conditions for the existence
of a Pareto optimal allocation of agents' endowments, from which an
equilibrium pricing rule is induced (in fact, the equilibrium pricing
rule is the super-gradient of the representative agent's risk
measure).

In this work, instead of establishing an equilibrium pricing rule from
a Pareto optimal allocation, we take a more direct approach and apply
the classical market-clearing arguments to derive a \textit{partial
  equilibrium price} for a bundle $\bB$ of claims (in addition to a
liquid incomplete financial market).  Provided that the agents are not
already in a Pareto-optimal configuration, an agent's demand of the
vector $\bB$ at a price $\bsp$ is defined as the number of units of
$\bB$ that the agent is willing to buy at price $\bsp$. An equilibrium
price for $\bB$ is, then, the price at which the sum of agents'
demands is equal to zero for each component of $\bB$ and the resulting
re-allocation of the bundle $\bB$ is called the \textit{partial
  equilibrium allocation}.  In section \ref{sec:PEPA}, we give
sufficient conditions for existence and uniqueness of the
partial-equilibrium price and allocation and we relate it to the
notion of {\em agents' agreement}. This result generalizes Theorem 5.8
in \cite{AntZit08}, where the case of two agents with exponential
utility functions is considered.

Having settled the problem of existence and uniqueness of the
partial-equilibrium price, we turn to the following question:
\emph{How is the equilibrium price-allocation affected by (small)
  perturbations of the agents' decision criteria?}  This problem is of
considerable importance, since estimation of the shape of each agent's
acceptance set is an extremely difficult task. It is therefore
reasonable - in the spirit of Hadamard's requirements (see
\cite{Had02}) - to ask that any result, which uses acceptance sets as
exogenously given, should satisfy adequate stability criteria. Despite
its importance, the problem of stability of equilibrium prices with
respect various problem primitives has not been previously studied in
the context of continuous-time finance.  Well-posedness of various
``single-agent'' optimization problems, on the other hand, has been
extensively studied and has always been an important part of the
optimization theory (standard references on stability, and, more
generally, well-posedness of variational problems are \cite{Luc06},
\cite{RocWet98} and \cite{DonZol93}).  However, stability of the
agent's investment decisions in the presence of a financial market has
only recently been investigated, and only for cases of utility
function maximizers (see \cite{CarRas05}, \cite{JouNap04},
\cite{KarZit07}, \cite{LarZit07}). Stability of problems related to
the more general notion of a convex risk measure has still not been
studied.  In our setting, as demonstrated in Theorem \ref{thm:PEPA},
the problem of existence of the partial-equilibrium price can be
viewed as a minimization problem of the sum of the agents' capital
requirements. Considered as such, its stability can be guaranteed by
certain conditions on allowed perturbations of the agents' acceptance
sets.  The central notion in this analysis is the one of
\textit{Kuratowski convergence}; it is applied to sets of
``acceptable'' numbers of units of the given bundle of claims and
provides a framework for sufficient conditions for stability.  As
special cases, we consider the set-ups of \cite{HugKra04} and
\cite{KarZit07}, where agents' risk preferences are modelled by
utility functions.

The structure of the paper is as follows: In section \ref{sec:setup},
we describe the market model, introduce necessary notation and state
some properties of the agents' acceptance sets and capital
requirements. In section \ref{sec:agreement}, we define and discuss
the notion of mutually agreeable claim-allocations and analyze its
relation to the Pareto optimality. Partial-equilibrium
price-allocation is introduced in section \ref{sec:PEPA}, where an
existence and uniqueness result is provided and discussed. Finally, in
section \ref{sec:stability} we exhibit conditions on specification of
the agents' acceptance sets that yield stability of the equilibrium
price.

\bigskip

\section{The Market Set-up}\label{sec:setup}

\subsection{The Liquid Part of the Financial Market}

Our model of the liquid part of the financial market is based on a
filtered probability space $(\Omega ,\mathcal{F},\mathbb{F },\PP)$,
$\mathbb{F}=\left( \mathcal{F}_{t}\right) _{t\in [0,T]}$, $T>0$, which
satisfies the usual conditions of right-continuity and
completeness. There are $d+1$ traded assets ($d\in\mathbb{N}$), whose
discounted price processes are modelled by an
$\mathbb{R}^{d+1}-$valued locally bounded semimartingale \label{stock
  price} $(S^{(0)}_t;\bS_t)_{t\in [0,T]} =(S^{(0)}_{t} ;
S^{(1)}_{t},\dots,S^{(d)}_{t}) _{t\in[ 0,T ] }$. The first asset
$S^{(0)}_{t}$ plays the role of a num\'{e}raire security or a discount
factor.  Operationally, we simply set $S^{(0)}_t\equiv 1$, for all
$t\in [0,T]$, $\PP-$a.s. We also impose the assumption of no free
lunch with vanishing risk (see \cite{DelSch94}). Namely, we define
\[\MM_a=\{\QQ\ll\PP:\bS \text{ is a local
  martingale under }\QQ\}\text{ and } \MM_e=\{\QQ\approx\PP:\bS \text{
  is a local martingale under }\QQ\}\] and make the following standing
assumption.
\begin{assumption}\label{ass:NA}
$\MM_e\neq\emptyset$.
\end{assumption}
\noindent We allow the possibility that the
liquid part of the financial market is incomplete, i.e., that
$\MM_e$ is not a singleton.

\subsection{Admissible Strategies}
For $\sigma$-algebra $\GG\subseteq \FF$, $\lzer(\GG)$ denotes the set of
all $\PP-$a.s.~equivalence classes of $\GG$-measurable random
variables, and $\linf(\GG)$ the set of all (classes of) essentially
bounded elements of $\lzer(\GG)$. When the underlying $\sigma$-algebra
$\GG$ is omitted, it should be assumed that $\GG=\FF$.  Shortcuts
$\BB+\CC=\sets{X+Y}{X\in\BB,\ Y\in\CC}$, $-\BB=\sets{-X}{X\in\BB}$,
$\BB_+=\{X\in\BB\,:\,X\geq 0, \text{ a.s.}\}$, $\BB_-=\{X\in\BB\,:\,
X\leq 0,\text{ a.s.}\}$ for $\BB,\CC\subseteq \lzer=\lzer(\FF)$, will
be used throughout.

A financial agent (with initial wealth $x$) invests in the market by
choosing a portfolio strategy $\bvt\in L(\bS)$, where $L(\bS)$ denotes
the set of predictable stochastic processes integrable with respect to
$\bS$.  The resulting \define{wealth process}, $\prf{X^{x,\bt}_t}$, is
simply the stochastic integral:
\begin{equation}
X^{x,\bt}_t=x+(\bvt\cdot \bS)_t=x+\int_0^t \bvt_u\, d\bS_u.
\end{equation}
We say that a strategy $\bvt$ \textit{admissible} if the induced
wealth process is uniformly bounded from below by a constant and we
denote the set of admissible strategies by $\bT$, i.e.,
\begin{equation}\label{admissible strategies}
  \bT=\{ \bvt\in L(\bS):\exists c\in\R \text{ such that }c
\leq (\bvt\cdot \bS)_t,\text{ } \forall t\in [0,T],\text{ a.s.} \}
\end{equation}
The collection of all wealth processes corresponding to the initial
wealth $x$ and admissible portfolio strategies is denoted by $\XX(x)$,
i.e.,
\[ \XX(x)=\sets{\prf{X^{x,\bt}_t}}{\bt\in \bT}.\]
Furthermore,
we define the sets $\mathcal{X}=\underset{x\in\R}{\bigcup}\X{x}$,
$\xinf=\mathcal{X}\cap\linf$
and $\ar=\sets{X\in\mathcal{X}}{-X\in\mathcal{X}}$.
\begin{remark}\
\begin{enumerate}
\item
Local
boundedness of the price process $\bS$ implies that $X\in\ar$ if and
only if there exists $x\in\R$ and $\bvt\in\bT$ such that
$X=x+(\bvt\cdot \bS)_T$ and $(\bvt\cdot \bS)_t$ is uniformly
bounded. In particular, $\ar\subseteq\linf$.
\item The lower bound on the losses of the admissible strategies is
  imposed to avoid pathologies that the so-called \textit{doubling
    strategies} create. Moreover, Assumption \ref{ass:NA} excludes the
  existence of arbitrage opportunities in the liquid market (see
  \cite{DelSch94}, Corollary 1.2). Note also that $X\in\xinf$ does not
  imply that $-X\in\mathcal{X}$, since there exist admissible
  strategies such that $(\bvt\cdot \bS)_T\in\linf$ but $(-\bvt\cdot
  \bS)_t$ is not uniformly bounded from below.
\end{enumerate}
\end{remark}

\subsection{The Acceptance Sets}
Given the financial market $(S^{(0)};\bS)$ and the set of admissible
strategies $\bT$, we suppose that each agent's risk preferences,
investment goals, possible stochastic income, etc., are
incorporated in a set $\tAA\subseteq\lzer(\FF)$ called the
\textit{acceptance set}.
We interpret $\tAA$ as the set that
contains the discounted net wealths of investment positions with
maturity  $T$ that the agent deems  acceptable at time $t=0$.

In concordance with the standard postulates of the risk-measure
theory, we assume that $\tAA$ satisfies the following axioms:
\begin{itemize}
    \item [\textit{Ax1.}]
      $\tAA+\lzer_+\subseteq\tAA$.
    \item [\textit{Ax2.}] $\tAA$ is convex.
    \item [\textit{Ax3.}] $\tAA\cap\lzer_{-}(\FF)=\{0\}$.
    \item [\textit{Ax4.}] $\tAA-\XX(0) \subseteq \tAA$. \label{axiom 4}.
\end{itemize}
For future use we set $\AA=\tAA\cap\linf$.
\begin{remark}
\label{rem:on-axioms}
Axiom \textit{Ax1} simply states that every investment with payoff
a.s.~above the payoff of an acceptable claim is also
acceptable.  Axiom \textit{Ax2} reflects the fact that diversified
portfolios of acceptable investments should also be acceptable, while
Axiom \textit{Ax3} means that the ``status quo'' (i.e., no investment
at all) is an acceptable position and that the non-trivial investments
which never make money are not acceptable.  Finally, axiom
\textit{Ax4} is the one that provides a link between the liquid market
and the agent's acceptable positions. One should think of
$\tAA$ as an ``already-optimized'' representation of agent's
preferences, in the sense that the fact that the liquid market stands
at the agent's disposal has already been taken into account.
One of the direct consequences of {\em Ax4}, and a more mathematical
reformulation of the last sentence, is the following property:
\begin{equation}\label{equ:property4'}
\text{If there exists }X\in\X{0} \text{ such that }
B+X\in\tAA, \text{ then } B\in\tAA.
\end{equation}
More directly, if a position can be improved to acceptability by
costless trading, it should already be considered acceptable.  The
reader should note that the situation is not entirely symmetric: it
can happen that $B-X$ is acceptable for some $X\in\XX(0)$, but $B$ is
not. The reason is that $X$ may not be bounded from above so that
there is no admissible strategy (with $-X\not\in\XX(0)$ being the
prime candidate) which will bring $B$ into acceptability.
\end{remark}
An important, but by no means only, example of an acceptable set which
satisfies {\em Ax1-Ax4} can be constructed using utility functions:
\begin{example}\label{exp:utility}
  A classical example of an acceptance set that satisfies the axioms
  \textit{Ax1}-\textit{Ax4} is the one induced by a utility function,
  i.e., a mapping $U:(a,\infty) \to\R$, $a\in [-\infty,0]$, which is
  strictly concave, strictly increasing, continuously differentiable
  and satisfies the Inada conditions
$$\underset{x\to a^+}{\lim}U'(x)=+\infty
\text{  and  }\underset{x\to +\infty}{\lim}U'(x)=0.$$
We also include a random endowment (illiquid investments, stochastic
income) whose value at time $T$ given by $\EN\in\linf(\FF_T)$. The
agent's investment goal is to maximize the expected utility by trading
the market assets and for every contingent claim
$B\in\lzer_+-\linf_+$, the resulting indirect utility is defined by
\begin{equation}\label{equ:indirect utility}
    u(x|B)=\underset{X\in\mathcal{X}(0)}{\sup}\EE[U(x+\EN+X+B)],
\end{equation}
where $x>0$ is the agent's initial wealth. For sufficient assumptions
that lead to the existence of the optimal trading strategy, we refer
the interested reader to \cite{CviSchWan01}, \cite{HugKra04}
for the case $a>-\infty$ and \cite{BiaFri08}  and \cite{OweZit06}
for $a=-\infty$.
  The
set of acceptable claims is then given by
\begin{equation}\label{equ:utility acc set}
    \tAA_U(x)=\sets{B\in\lzer_+-\linf_+}{u(x|B)\geq u(x|0)}.
\end{equation}
It is straightforward to check that $\tAA_U(x)$ indeed satisfies the
axioms \textit{Ax1}-\textit{Ax4}, for $x>0$.
\end{example}

\subsection{The Convex Capital Requirement}
Given an acceptance set $\tAA$, we call the map
$\ra:\linf\to\bar{\R}$, defined by
\begin{equation}\label{risk measure definition}
\ra(B)=\inf\{m\in\R:m+B\in\tAA \}, \text{  for every
}B\in\linf,
\end{equation}
the agent's \textit{convex capital requirement} or
\textit{convex risk measure} induced by the acceptance set $\tAA$.
It follows that $\ra(\cdot)$ is
convex, non-increasing and cash invariant, i.e., $\ra(B+m)=\ra(B)-m$,
for every $B\in\linf$ and $m\in\R$.

Axioms \textit{Ax1} and \textit{Ax2} imply that $\ra(0)=0$ and the
inequality $-\|B\|_{\infty}\leq B\leq\|B\|_{\infty}$ together with
axiom \textit{Ax1} force $\rho_{\AA}(B)\in
[-\|B\|_{\infty},\|B\|_{\infty}]\subseteq\R$ for every $B\in\linf$.
The inclusion $\AA\subseteq\sets{B\in\linf}{\ra(B)\leq 0}$ holds
trivially. If, in addition, the set $\AA$ satisfies the following mild
closedness property
\begin{equation}\label{equ:propertyax5}
  \sets{\lambda\in[0,1]}{\lambda
    m+(1-\lambda)B\in\AA}
  \text{ is closed in }[0,1],
  \text{ for every $m\in\R_+$  and $B\in\linf$,}
\end{equation}
the inverse inclusion also holds (see Proposition 4.7 in
\cite{FolSch04}).
Property (\ref{equ:propertyax5}) holds, in particular, if
$\tAA\cap V$ is closed (with respect to any linear topology) for any finite-dimensional subspace
$V\subseteq\linf$.
In what follows, with a slight abuse of terminology, when we
mention the term acceptance set we will refer to the set
$\AA=\tAA\cap\linf$, for $\tAA$ that satisfies
\textit{Ax1}-\textit{Ax4}.
\begin{remark}\label{rem:other rms}
Similar definitions of the convex capital requirement have been
given in \cite{FolSch04} (page 207) and \cite{FriSca06}. In the
former, a given acceptance set $\AA$ is related to the market
through a larger acceptance set $\hat{\AA}$, defined by
\begin{equation}
\hat{\AA}=\{B\in\linf:\exists\, \bvt\in\bT, A\in\AA \text{ such that
}(\bvt\cdot\bS)_T+B\geq A, \PP-\text{a.s.}\}.
\end{equation}
In our case, \eqref{equ:property4'} implies that $\hat{\AA}=\AA$,
which is yet another reformulation of the ``already-optimized'' property of
Remark \ref{rem:on-axioms}.
In \cite{FriSca06}, the authors define the generalized capital
requirement by
\begin{equation*}
\hat{\rho}_{\AA}(B)=\inf\sets{m\in\R}{\exists\, X\in\X{m}\text{ such
that } X+B\in\AA}.
\end{equation*}
If the acceptance set $\tAA$ satisfies the axioms
\textit{Ax1}-\textit{Ax4}, it is straightforward to show that $\ra(B)=
\hat{\rho}_{\AA}(B)$.  The existence of an admissible strategy in the
definitions of $\hat{\rho}_{\AA}(\cdot)$ and $\hat{\AA}$ has been
established in \cite{Xu06}, Theorem 2.6.
\end{remark}

\subsection{A Robust Representation}
It is shown in \cite{FolSch02} that under the assumption that $\AA$ is
weak-$\ast$ closed (closed in the weak topology
$\sigma(\linf,\lone)$), the convex risk measure $\ra(\cdot)$ admits a
robust representation in the sense of \cite{ArtDelEbeHea99} and
\cite{Del02}. The additionally imposed axiom \textit{Ax4} provides
some further information about the penalty function and, in
particular, about its effective domain, denoted by $\Ma$. The
following proposition is similar, but not identical, to the results in
\cite{FolSch04} and \cite{KloSch07}.
\begin{proposition} If $\AA$ is a weak-$\ast$ closed acceptance set, then
\begin{enumerate}
\item  $\ra$ admits a robust representation of the following form
    \begin{equation}\label{equ:representation}
\ra(B)=\underset{\QQ\in\MM_a}{\sup}\{\EE^{\QQ}[-B]-\alpha_{\AA}(\QQ)\}
    \end{equation}
for every $B\in\linf$, where
$\alpha_{\AA}(\QQ)=\underset{B\in\AA}{\sup}\{\EE^{\QQ}[-B]\}$,
i.e., $\Ma\subseteq\MM_a$, and
\item the set of measures, denoted by $\partial\rho_{\AA}(B)$, at
  which the supremum in (\ref{equ:representation}) is attained, is
  non-empty.
\end{enumerate}
\end{proposition}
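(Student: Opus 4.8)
The plan is to adapt the duality theory for weak-$\ast$ closed convex risk measures on $\linf$ (see \cite{FolSch02}, \cite{FolSch04}) to the present market set-up. Two ingredients go beyond the textbook statement: the localization of the effective domain $\Ma$ inside $\MM_a$, which is exactly where Axiom \textit{Ax4} is used, and the existence of a maximizer in \eqref{equ:representation}, which is the delicate part. Throughout one uses that $\ra$ is finite-valued on $\linf$, convex, non-increasing and cash-invariant with $\ra(0)=0$, and that weak-$\ast$ closedness of $\AA$ is equivalent to the Fatou property of $\ra$. For the representation in part (1), weak-$\ast$ closedness means that $\AA$ is the intersection of the $\sigma(\linf,\lone)$-closed half-spaces that contain it, so the Fenchel--Moreau theorem in the pairing $(\linf,\lone)$ gives $\ra(B)=\sup_{\QQ\ll\PP}\{\EE^{\QQ}[-B]-\alpha_\AA(\QQ)\}$, the supremum running over probability measures $\QQ\ll\PP$, with $\alpha_\AA$ the minimal penalty; cash-invariance together with $\ra(0)=0$ then identifies $\alpha_\AA(\QQ)=\sup_{B\in\AA}\EE^{\QQ}[-B]$, the support function of $-\AA$. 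Up to the statement $\Ma\subseteq\MM_a$, this is precisely the content of \cite{FolSch02} (see also \cite{FolSch04}).

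To obtain $\Ma\subseteq\MM_a$ I would argue as follows. Set $\ar_0=\sets{(\bvt\cdot\bS)_T}{\bvt\in\bT,\ (\bvt\cdot\bS)\text{ uniformly bounded}}$; by the first item of the Remark following \eqref{admissible strategies} one has $\ar_0\subseteq\ar\subseteq\linf$, and clearly $0\in\ar_0$ and $\ar_0=-\ar_0$. Axiom \textit{Ax4} gives $\AA-\ar_0\subseteq\AA$, since for $A\in\AA$ and $X\in\ar_0\subseteq\X{0}$ we have $A-X\in\tAA$ by \textit{Ax4} and $A-X\in\linf$ because $A$ and $X$ are both bounded. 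Hence, for every $\QQ$,
\begin{align*}
\alpha_\AA(\QQ)=\sup_{A\in\AA}\EE^{\QQ}[-A]
 &\ \ge\ \sup_{A\in\AA}\ \sup_{X\in\ar_0}\EE^{\QQ}[-(A-X)]\\
 &\ =\ \alpha_\AA(\QQ)+\sup_{X\in\ar_0}\EE^{\QQ}[X],
\end{align*}
the inequality being $\AA-\ar_0\subseteq\AA$ and the equality elementary. If $\QQ\in\Ma$, i.e. $\alpha_\AA(\QQ)<\infty$, this forces $\sup_{X\in\ar_0}\EE^{\QQ}[X]\le 0$, and since $\ar_0=-\ar_0$ one gets $\EE^{\QQ}[(\bvt\cdot\bS)_T]=0$ for every $\bvt\in\bT$ with $(\bvt\cdot\bS)$ uniformly bounded. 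Local boundedness of $\bS$ supplies stopping times $\tau_n\uparrow T$ with each $\bS^{\tau_n}$ bounded; plugging $\bvt=\ind{F}\,\ind{(s\wedge\tau_n,\,t\wedge\tau_n]}\,e_i$ (with $e_i$ the $i$-th unit vector, $s\le t$, $F\in\FF_s$, $i=1,\dots,d$) into the last identity yields $\EE^{\QQ}[\ind{F}(S^{(i)}_{t\wedge\tau_n}-S^{(i)}_{s\wedge\tau_n})]=0$, so each $\bS^{\tau_n}$ is a bounded $\QQ$-martingale and $\bS$ is a $\QQ$-local martingale; as $\QQ\ll\PP$ is already part of the representation, $\QQ\in\MM_a$. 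Thus the supremum in \eqref{equ:representation} may be taken over $\MM_a$, which completes part (1).

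Part (2) --- attainment --- is the main obstacle. Because $\ra$ is finite and convex on the Banach space $\linf$ (indeed $1$-Lipschitz, by cash-invariance and monotonicity), the convex subdifferential $\partial\rho_{\AA}(B)\subseteq(\linf)^*$ is non-empty, convex and weak-$\ast$ compact for every $B$; monotonicity makes every element a non-positive functional and cash-invariance fixes its total mass at $-1$, so each subgradient is $-1$ times a finitely additive probability, and $\partial\rho_{\AA}(B)$ is exactly the set of maximizers in the robust representation once the latter is extended to finitely additive measures. The remaining, genuinely difficult, step is to show these maximizers are in fact \emph{countably additive}: granting that, each is a probability measure absolutely continuous with respect to $\PP$ and, by the computation above, belongs to $\MM_a$, so $\partial\rho_{\AA}(B)\neq\emptyset$ as claimed. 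Countable additivity is what one hopes to extract from the Fatou property via a Yosida--Hewitt decomposition (annihilating the purely finitely additive part by means of a sequence decreasing to $0$ together with cash-invariance); equivalently it amounts to $\sigma(\lone,\linf)$-compactness of the sub-level sets $\sets{\RN{\QQ}{\PP}}{\alpha_\AA(\QQ)\le c}$, i.e. to continuity of $\ra$ from below. Since weak-$\ast$ closedness alone only yields the representation and not a maximizer, this is the step where one must genuinely use the finiteness of $\ra$ on \emph{all} of $\linf$ together with the ``already-optimized'' market structure carried by \textit{Ax4} (or else fall back on the continuity-from-below hypothesis used in the related treatments of \cite{FolSch04} and \cite{KloSch07}).
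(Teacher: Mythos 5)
Your part (1) is correct, and your proof of the inclusion $\Ma\subseteq\MM_a$ takes a genuinely different route from the paper's. The paper argues the contrapositive via Theorem 5.6 of \cite{DelSch94}: for $\QQ\notin\MM_a$ it produces $\bvt\in\bT$ with $(\bvt\cdot\bS)$ bounded from below and $\EE^{\QQ}[(\bvt\cdot\bS)_T]>0$, truncates by setting $B_k=-((\bvt\cdot\bS)_T\wedge k)$, uses \eqref{equ:property4'} to place $\ld B_k$ in $\AA$, and sends $k\to\infty$ and then $\ld\to\infty$ to force $\alpha_{\AA}(\QQ)=+\infty$. You instead work with the symmetric cone $\ar_0$ of uniformly bounded gains, show that finiteness of $\alpha_{\AA}(\QQ)$ forces $\EE^{\QQ}[X]=0$ for every $X\in\ar_0$, and then recover the local martingale property from simple integrands and a localizing sequence. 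Both arguments are sound; yours is more self-contained (it avoids the Delbaen--Schachermayer characterization of non-martingale measures), while the paper's is shorter and deals directly with admissible integrands that are only bounded from below.

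The genuine gap is part (2), and you have flagged it yourself: you show that $\partial\ra(B)$, viewed in $(\linf)^*$, is a non-empty, weak-$\ast$ compact set of negatives of finitely additive probabilities, but you do not establish that any maximizer is countably additive, which is what the statement asserts. Your diagnosis of why this is hard is accurate: weak-$\ast$ closedness of $\AA$ (the Fatou property) yields the representation \eqref{equ:representation} over $\sigma$-additive measures but not its attainment; attainment for every $B\in\linf$ is equivalent to continuity from below (the Lebesgue property), i.e.\ to $\sigma(\lone,\linf)$-compactness of the sublevel sets of $\alpha_{\AA}$, and neither the finiteness of $\ra$ on $\linf$ nor Axiom \textit{Ax4} supplies this. (The superhedging acceptance set $\sets{B\in\linf}{\EE^{\QQ}[B]\geq 0 \text{ for all }\QQ\in\MM_a}$ satisfies \textit{Ax1}--\textit{Ax4} and is weak-$\ast$ closed, yet the corresponding supremum need not be attained in an incomplete market, so the market structure does not rescue the claim.) To be fair, the paper does not prove part (2) either: it defers it entirely to \cite{FolSch04} and \cite{KloSch07}, where the relevant attainment results carry exactly such an additional continuity-type hypothesis. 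So you have correctly located the missing ingredient, but as a proof of the proposition as stated your argument is incomplete at precisely this point.
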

\begin{proof}
  Thanks to the results in \cite{FolSch04} and \cite{KloSch07}, it is
  enough to show that for every $\QQ\notin\MM_a$,
  $\alpha_{\AA}(\QQ)=+\infty$. For every such $\QQ$, there exists an
  admissible terminal wealth $X\in\X{x}$, such that $\EE^{\QQ}[X]>x$,
  i.e., there exists a portfolio $\bvt\in\bT$, such that $(\bvt\cdot
  \bS)_t $ is uniformly bounded from below and $\EE^{\QQ}[(\bvt\cdot
  \bS)_T]>0$ (see Theorem 5.6 in \cite{DelSch94}). Then, for every
  $k\in\mathbb{N}$, we define $B_k=-((\bvt\cdot \bS)_T\wedge k)$,
  which belongs to $\linf$.  Hence, $B_k+(\bvt\cdot \bS)_T=((\bvt\cdot
  \bS)_T-k)\mathbf{1}_{\{(\bvt\cdot \bS)_T\geq k\}}\geq 0$, which
  means that $B_k+(\bvt\cdot \bS)_T\in\tAA$ for every
  $k\in\mathbb{N}$. Also, by \eqref{equ:property4'} we have that
  $\lambda B_k\in\AA$, for all $\lambda >0$. Thus,
  $\alpha_{\AA}(\QQ)\geq \EE^{\QQ}[-\lambda B_k]$, for every
  $k\in\mathbb{N}$. Finally, it is enough to first let $k\to\infty$
  and use the Monotone convergence theorem to get that
  $\alpha_{\AA}(\QQ)\geq \lambda\EE^{\QQ}[(\bvt\cdot \bS)_T ]$, and
  then let $\ld\to\infty$.
\end{proof}
\begin{corollary}\label{cor:repli-invariance}
  If $\AA$ is a weak-$\ast$ closed acceptance set, then $\ra(\cdot)$
  satisfies the following \textit{replication invariance} property:
  for every $B\in\linf$ and every $C\in\ar\cap \XX(x)$, we have
  $\ra(B+C)=\ra(B)-x.$
\end{corollary}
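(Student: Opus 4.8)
The plan is to read off the corollary from the robust representation established in the preceding proposition. By \eqref{equ:representation}, for any $D\in\linf$ we have $\ra(D)=\sup_{\QQ\in\MM_a}\{\EE^{\QQ}[-D]-\alpha_{\AA}(\QQ)\}$, with the supremum ranging only over $\MM_a$. Note first that $B+C\in\linf$, since $B\in\linf$ and $\ar\subseteq\linf$. Applying the representation to $D=B+C$, the identity $\ra(B+C)=\ra(B)-x$ will follow at once once we show that
\[
\EE^{\QQ}[C]=x\qquad\text{for every }\QQ\in\MM_a;
\]
indeed, then $\EE^{\QQ}[-B-C]=\EE^{\QQ}[-B]-x$ for each such $\QQ$, and pulling the constant $-x$ out of the supremum gives the claim.

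To verify $\EE^{\QQ}[C]=x$, recall that an element of $\ar\cap\X{x}$ has the form $C=x+(\bvt\cdot\bS)_T$ for some $\bvt\in\bT$ with the process $(\bvt\cdot\bS)_t$ uniformly bounded (both from above and from below). The crux is the elementary fact that a uniformly bounded local martingale is a genuine martingale: for any $\QQ\in\MM_a$ the process $\bS$ is a $\QQ$-local martingale, hence so is $(\bvt\cdot\bS)$, and being bounded it is a true $\QQ$-martingale with $\EE^{\QQ}[(\bvt\cdot\bS)_T]=(\bvt\cdot\bS)_0=0$. Therefore $\EE^{\QQ}[C]=x$, as needed.

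An alternative, and perhaps more transparent, route bypasses the representation and argues directly from the definition \eqref{risk measure definition}. Writing $C=x+(\bvt\cdot\bS)_T$ as above, both $C-x=(\bvt\cdot\bS)_T$ and $x-C=((-\bvt)\cdot\bS)_T$ belong to $\X{0}$ (the latter since $-\bvt\in\bT$, its wealth process being bounded, hence bounded below). Fix $m\in\R$. Axiom \textit{Ax4}, applied with the element $x-C\in\X{0}$, shows that $(m+x)+B\in\tAA$ implies $(m+x)+B-(x-C)=m+B+C\in\tAA$; applied with $C-x\in\X{0}$ it shows the reverse implication $m+B+C\in\tAA\implies(m+x)+B\in\tAA$. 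Hence $\{m:m+B+C\in\tAA\}=\{m:(m+x)+B\in\tAA\}=\{m'-x:m'+B\in\tAA\}$, and taking infima yields $\ra(B+C)=\ra(B)-x$.

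I do not foresee a genuine obstacle here; the argument is short either way. The only points requiring a little care are the passage from ``uniformly bounded local martingale'' to ``true martingale with zero initial value'' in the first approach, and the harmless identification of $\X{0}$ with the corresponding set of terminal wealths when it appears inside set-algebra expressions such as \textit{Ax4} in the second. I would present the second, self-contained version as the proof.
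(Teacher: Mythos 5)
Your proof is correct, and in fact you give two valid arguments. The first one is exactly the route the paper intends: the corollary is stated immediately after the robust representation \eqref{equ:representation}, and the point is precisely that every $\QQ$ in the effective domain $\Ma\subseteq\MM_a$ prices $C\in\ar\cap\X{x}$ at $x$, so the constant $-x$ factors out of the supremum. (One small gloss: the step ``$\bS$ is a $\QQ$-local martingale, hence so is $\bvt\cdot\bS$'' uses that the integral is bounded from below, via the Ansel--Stricker lemma; a stochastic integral against a local martingale is in general only a $\sigma$-martingale. Since $(\bvt\cdot\bS)_t$ is uniformly bounded here, this is harmless and the conclusion $\EE^{\QQ}[(\bvt\cdot\bS)_T]=0$ stands.) Your second argument is genuinely different: it works directly from the definition \eqref{risk measure definition} and a two-sided application of Axiom \textit{Ax4}, using that both $C-x$ and $x-C$ lie in $\X{0}$ because the wealth process is bounded on both sides. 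What this buys is that replication invariance is seen to be a purely axiomatic consequence of \textit{Ax1}--\textit{Ax4}, with no need for the weak-$\ast$ closedness hypothesis or the dual representation; the cost is only the (correct and explicitly acknowledged) identification of $\X{0}$ with its set of terminal values in \textit{Ax4}. Either version is acceptable; the second is the more self-contained and slightly more general one.
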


\subsection{Risk-equivalence}
The following definition (see also
\cite{AntZit08}) will be used extensively in the sequel:
\begin{definition}
  Two random variables $B,C\in\linf$ are said to be {\em
    risk-equivalent} (or \textit{equivalent with respect to risk}),
  denoted by $B\sim C$, if $B-C\in\ar$.
\end{definition}

It is straightforward to check that the relation $\sim$ is indeed an
equivalence relation in $\linf$.  The condition $B\sim C$ means that
the claims with payoffs $B$ and $C$ carry the same unhedgeable
risk. Moreover, it is easy to see that the condition $B\sim C$ implies
that
\begin{equation}
   \label{equ:linear}
   \begin{split}
     \forall\, \lambda\in[0,1],\ \ra(\lambda
     B+(1-\lambda)C)=\lambda\ra(B)+(1-\lambda)\ra(C).
   \end{split}
\end{equation}

On the other hand, if $B\nsim C$, convex combinations of the payoffs
$B$ and $C$ may lead to reduction of risk. If any such combination of
claims (which do not belong in the same equivalence class) reduces the
risk, the corresponding acceptance set $\AA$ is called
\textit{risk-strictly convex}:
\begin{definition}\label{def:strict-convex}
  An  acceptance set $\AA$ is called \textit{risk-strictly
    convex} if
for all $B,C\in\AA$ with $B\nsim C$ and every $\ld\in (0,1)$,
  there exists a random variable $E\in\linf_+$ and
  $\QQ\in\partial\rho_{\AA}(\ld B + (1-\ld) C)$ such that, $\QQ(E>0)>0$
  and $$\lambda B+(1-\lambda)C-E\in\AA.$$
\end{definition}

\begin{proposition}\label{pro:strictly convex r.m.}
  Let $\AA$ be a weak-$\ast$ closed acceptance set. Then, $\AA$ is
  risk-strictly convex if and only if for all $B,C\in\linf$ the
  condition (\ref{equ:linear}) implies $B\sim C$.
\end{proposition}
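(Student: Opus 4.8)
The plan is as follows. Since it is already recorded (just before Definition~\ref{def:strict-convex}) that $B\sim C$ implies \eqref{equ:linear}, what remains is the equivalence between $\AA$ being risk-strictly convex and the implication ``\eqref{equ:linear} $\Rightarrow$ $B\sim C$'', and I would prove each direction by contraposition. Two reductions would be set up first. Because $\AA$ is weak-$\ast$ closed it satisfies \eqref{equ:propertyax5}, so $\AA=\sets{B\in\linf}{\ra(B)\le 0}$; and since constants belong to $\ar$ with $\ar+c=\ar$ for every constant $c$, cash-invariance of $\ra$ lets me replace any $B,C\in\linf$ by $B':=B+\ra(B)$ and $C':=C+\ra(C)$ without affecting whether $B\sim C$, thereby arranging $\ra(B')=\ra(C')=0$ (hence $B',C'\in\AA$) and merely translating \eqref{equ:linear}. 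I would also invoke the elementary fact that a convex $f$ on $[0,1]$ with $f(\mu)<\mu f(1)+(1-\mu)f(0)$ for some $\mu\in(0,1)$ has this strict inequality for \emph{every} $\mu\in(0,1)$, since the gap $t\mapsto tf(1)+(1-t)f(0)-f(t)$ is nonnegative and concave and vanishes at $t=0,1$.

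For the direction ``risk-strict convexity $\Rightarrow$ the implication'' I would argue contrapositively: assuming $B\nsim C$, pass to $B',C'$ as above, apply Definition~\ref{def:strict-convex} with $\lambda=\tfrac12$ to obtain $E\in\linf_+$ and $\QQ\in\partial\rho_\AA(D')$ (where $D':=\tfrac12 B'+\tfrac12 C'$) with $\QQ(E>0)>0$ and $D'-E\in\AA$, and then feed $D'-E\in\AA$ into the robust representation \eqref{equ:representation}. By definition of $\alpha_\AA$ we get $\alpha_\AA(\QQ)\ge\EE^\QQ[-(D'-E)]=\EE^\QQ[-D']+\EE^\QQ[E]$, whence
\[
\ra(D')=\EE^\QQ[-D']-\alpha_\AA(\QQ)\le-\EE^\QQ[E]<0,
\]
the last step because $\QQ\in\MM_a$ gives $\QQ\ll\PP$, so that $E\ge0$ $\PP$-a.s.\ and $\QQ(E>0)>0$ force $\EE^\QQ[E]>0$. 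Since $\tfrac12\ra(B')+\tfrac12\ra(C')=0$, this is a strict failure of \eqref{equ:linear} for $B',C'$ at $\lambda=\tfrac12$, and undoing the shift via cash-invariance shows \eqref{equ:linear} fails for $B,C$, as desired.

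For the converse I would fix $B,C\in\AA$ with $B\nsim C$ and an arbitrary $\lambda\in(0,1)$. The contrapositive of the hypothesis makes \eqref{equ:linear} fail for $B,C$; as it holds automatically at $\lambda\in\{0,1\}$, the failure is interior, and convexity of $\ra$ turns it into a strict inequality, which by the elementary fact above then holds at the prescribed $\lambda$: $\ra(\lambda B+(1-\lambda)C)<\lambda\ra(B)+(1-\lambda)\ra(C)\le 0$, the last inequality because $B,C\in\AA$. Taking $E:=-\ra(\lambda B+(1-\lambda)C)$, a strictly positive constant, gives $E\in\linf_+$ with $\QQ(E>0)=1$ for every probability measure $\QQ$; and cash-invariance yields $\ra\big((\lambda B+(1-\lambda)C)-E\big)=0$, so $(\lambda B+(1-\lambda)C)-E\in\AA$. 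Finally $\partial\rho_\AA(\lambda B+(1-\lambda)C)\neq\emptyset$ by the robust-representation proposition, so any such $\QQ$ together with this $E$ meets every requirement of Definition~\ref{def:strict-convex}, and $\AA$ is risk-strictly convex.

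I do not expect a deep obstacle; the delicate points are the two preliminary reductions ($\AA=\sets{B}{\ra(B)\le0}$ from weak-$\ast$ closedness, and the harmlessness of shifting $B,C$ by constants) together with the precise use of \eqref{equ:representation} to convert the ``slack'' $E$ plus a supporting measure $\QQ$ into a strict drop of $\ra$ below its chord. The easily-missed subtlety is that strict convexity of $\ra$ at a single point of the segment joining $B$ and $C$ already forces it along the whole open segment, which is exactly what lets the converse direction accommodate an arbitrary prescribed $\lambda$.
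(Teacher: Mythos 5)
Your proof is correct and follows essentially the same route as the paper's: both directions apply the definition of risk-strict convexity to $B+\ra(B)$, $C+\ra(C)$ and exploit the supporting measure $\QQ$ together with $\QQ(E>0)>0$ to force a strict drop of $\ra$ below the chord, and conversely take $E=-\ra(\lambda B+(1-\lambda)C)$. Your only additions are to make explicit two points the paper leaves implicit — that $\AA=\sets{X\in\linf}{\ra(X)\le 0}$ under weak-$\ast$ closedness, and that strict convexity of $\ra$ at one interior point of the segment propagates to every $\lambda\in(0,1)$ — both of which are correct and indeed needed.
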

\begin{proof}
  We first assume that $\AA$ is risk-strictly convex and show the
  contrapositive of the stated implication. For arbitrarily chosen
  $B,C\in\linf$ such that $B\nsim C$, we have that
  $B+\ra(B),C+\ra(C)\in\AA$. Hence, for a given $\lambda\in(0,1)$,
  there exists $E\in\linf_+$ and $\QQ\in\partial\rho_{\AA}(\ld B +
  (1-\ld) C)$ such that $\QQ(E>0)>0$ and $$\lambda B+(1-\lambda)C+
  \lambda\ra(B)+(1-\lambda)\ra(C)-E\in\AA.$$ This implies that
  $\ra(\lambda B+(1-\lambda)C-E)\leq \lambda\ra(B)+(1-\lambda)\ra(C)$,
  and so, by monotonicity of $\rho_{\AA}$, we have
\begin{eqnarray*}
  \ra(\lambda B+(1-\lambda)C) &=&
  \EE^{\QQ}[-\lambda
  B-(1-\lambda)C]-\alpha_{\AA}(\QQ)<  \EE^{\QQ}[-\lambda
  B-(1-\lambda)C+E]-\alpha_{\AA}(\QQ)\\
  &\leq & \underset{\tilde{\QQ}\in\MM_{\AA}}{\sup}\{\EE^{\tilde{\QQ}}[-\lambda
  B-(1-\lambda)C+E]-\alpha_{\AA}(\tilde{\QQ})\}
  = \ra(\lambda
  B+(1-\lambda)C-E)\\
  &\leq & \lambda\ra(B)+(1-\lambda)\ra(C).
\end{eqnarray*}
Conversely, suppose that (\ref{equ:linear}) implies $B\sim C$, for all
$B,C\in\linf$.  Then for any pair $B\nsim C$ and every
$\lambda\in(0,1)$ we must have that $$\ra(\lambda
B+(1-\lambda)C)<\lambda\ra(B)+(1-\lambda)\ra(C),$$ so it is enough to
take $E=-\ra(\lambda B+(1-\lambda)C)$ in the definition of risk-strict
convexity.
\end{proof}
\begin{remark}
An examination of the above proof reveals that the seemingly stronger
condition where the random variable $E$ is replaced by a positive
constant leads to the same concept as in Definition
\ref{def:strict-convex}.
\end{remark}
Under the assumption that the acceptance set $\AA$ is
risk-strictly convex, we can say a bit more about the effective
domain of the penalty function of the induced risk measure,
$\MM_{\AA}$.

\begin{proposition}\label{pro:Q(0) is equivalent.}
If the acceptance set $\AA$ is weak-$\ast$ closed and risk-strictly convex,
then $\MM_{\AA}\subseteq\MM_e$.
\end{proposition}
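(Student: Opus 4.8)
The plan is to argue by contradiction: suppose $\QQ \in \MM_{\AA}$ but $\QQ \notin \MM_e$. Since $\MM_{\AA} \subseteq \MM_a$ by the robust representation proposition, we have $\QQ \ll \PP$ but $\QQ \not\approx \PP$, so there is a set $F \in \FF$ with $\PP(F) > 0$ and $\QQ(F) = 0$. The goal is to manufacture from $F$ two bounded claims $B, C$ that are \emph{not} risk-equivalent but for which $\rho_{\AA}$ is linear along the segment joining them, contradicting Proposition~\ref{pro:strictly convex r.m.}.

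First I would recall that $\QQ \in \MM_{\AA}$ means $\alpha_{\AA}(\QQ) < \infty$, and, by the representation \eqref{equ:representation}, for any $B \in \linf$,
\[
\ra(B) = \EE^{\QQ}[-B] - \alpha_{\AA}(\QQ) + \big(\ra(B) - \EE^{\QQ}[-B] + \alpha_{\AA}(\QQ)\big),
\]
where the bracketed term is $\leq 0$ with equality exactly when $\QQ \in \partial\rho_{\AA}(B)$. The key observation is that modifying a claim on the $\QQ$-null set $F$ does not change $\EE^{\QQ}[-B]$, hence does not change the ``$\QQ$-linear part'' of $\ra$. So take any $B \in \linf$ with $\QQ \in \partial\rho_{\AA}(B)$ (for instance $B = 0$, since $\ra(0) = 0$ and $\alpha_{\AA}(\QQ) \geq \EE^{\QQ}[0] = 0$, with $\alpha_{\AA}(\QQ) = 0$ forced by $0 \in \AA$), and perturb it on $F$. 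Concretely, set $C = B - \ind{F}$. Then $\EE^{\QQ}[-C] = \EE^{\QQ}[-B]$ and, because $\ra$ is non-increasing, $\ra(C) \geq \ra(B)$; on the other hand $\ra(C) \geq \EE^{\QQ}[-C] - \alpha_{\AA}(\QQ) = \EE^{\QQ}[-B] - \alpha_{\AA}(\QQ) = \ra(B)$, and this lower bound is also an upper bound since for every $\lambda \in [0,1]$ the same sandwich gives $\ra(\lambda B + (1-\lambda) C) = \EE^{\QQ}[-B] - \alpha_{\AA}(\QQ) = \ra(B) = \lambda \ra(B) + (1-\lambda)\ra(C)$ once we know $\ra(C) = \ra(B)$.

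The remaining point — and the step I expect to require the most care — is to pin down that $\ra(C) = \ra(B)$ and that $B \nsim C$. For the first: I would use that $\QQ \in \partial\rho_{\AA}(B)$ is optimal, so $\ra(B) = \EE^{\QQ}[-B] - \alpha_{\AA}(\QQ)$; since $\QQ \in \MM_{\AA} \subseteq \MM_a$ the same representation applied to $C$ gives $\ra(C) \geq \EE^{\QQ}[-C] - \alpha_{\AA}(\QQ) = \EE^{\QQ}[-B] - \alpha_{\AA}(\QQ) = \ra(B)$, while monotonicity gives $\ra(C) \geq \ra(B)$ as well; for the reverse inequality note $C = B - \ind{F} \leq B$, hence (since $C$ differs from a smaller claim only by being larger, wait) — more carefully, $C \le B$ pointwise, so monotonicity yields $\ra(C) \ge \ra(B)$, and we need an upper bound: here I invoke that $\QQ$-a.s.\ $C = B$, so in fact the entire family $\{\lambda B + (1-\lambda)C : \lambda \in [0,1]\}$ consists of claims equal $\QQ$-a.s., each with $\ra \geq \EE^{\QQ}[-B] - \alpha_{\AA}(\QQ) = \ra(B)$; combined with convexity of $\ra$ along the segment and $\ra(B) = \ra$ at the endpoint $\lambda=1$, convexity forces $\ra$ to be constant $= \ra(B)$ on $[0,1]$ provided we also bound $\ra(C) \le \ra(B)$ — and that last bound comes from the fact that, $\PP$-a.s., adding $\ind{F}$ back recovers $B$, but $\ind{F}$ need not be hedgeable, so instead I bound directly: $\ra(C) = \ra(B - \ind{F}) \le \ra(B) + \|\ind{F}\|_\infty$ is too weak. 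The clean route is: since $\QQ \in \partial\rho_{\AA}(B)$ and $\EE^{\QQ}[-C] = \EE^{\QQ}[-B]$, every $\tilde\QQ \in \MM_\AA$ satisfies $\EE^{\tilde\QQ}[-C] - \alpha_\AA(\tilde\QQ) \le \ra(C)$, so taking sup and using that at $\tilde\QQ = \QQ$ we attain $\EE^{\QQ}[-B]-\alpha_\AA(\QQ) = \ra(B)$, we get $\ra(C) \ge \ra(B)$; for the opposite inequality observe $C \le B$ only gives one direction, so I instead choose the perturbation in the other direction: set $C = B + \ind{F}$ instead, so that $C \ge B$, monotonicity gives $\ra(C) \le \ra(B)$, while the representation gives $\ra(C) \ge \EE^{\QQ}[-C] - \alpha_\AA(\QQ) = \EE^{\QQ}[-B] - \alpha_\AA(\QQ) = \ra(B)$; hence $\ra(C) = \ra(B)$. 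Then $\ra$ is constant on the segment $[B,C]$, so \eqref{equ:linear} holds, and Proposition~\ref{pro:strictly convex r.m.} forces $B \sim C$, i.e.\ $\ind{F} \in \ar$. Finally I would derive a contradiction from $\ind{F} \in \ar$: since $\ar \subseteq \linf$ consists of replicable claims with $-\ind{F} \in \XX$ as well, there is $\bvt \in \bT$ with $(\bvt \cdot \bS)_T = \ind{F} - c$ for a constant, both $(\bvt\cdot\bS)$ and $(-\bvt\cdot\bS)$ bounded; but $\EE^{\QQ}[\ind{F}] = \QQ(F) = 0$ while $\EE^{\PP}[\ind{F}] = \PP(F) > 0$, and since $\QQ \in \MM_a$ the process $(\bvt\cdot\bS)$ is a $\QQ$-local martingale bounded from both sides, hence a true $\QQ$-martingale, giving $\EE^{\QQ}[(\bvt\cdot\bS)_T] = 0$, i.e.\ $\QQ(F) = c$; the same process under $\PP$, however, is only a $\PP$-local martingale (possibly a supermartingale) — so to get the contradiction cleanly I would instead note $\ind{F} \in \ar \subseteq \linf$ with $\ind{F} \geq 0$ and $\ra(-\ind{F}) = \ra(\ind{F}) + (\text{const})$ via Corollary~\ref{cor:repli-invariance}, yielding $\ind{F} \sim 0$ combined with $\ind{F} \ge 0$, $\PP(\ind{F} > 0) > 0$ forces, by Axiom \textit{Ax3} applied after noting $-\ind{F} \in \ar \subseteq \XX$ so $\ind{F}$ is both an acceptable position and its negative is achievable by trading — this contradicts the no-arbitrage assumption $\MM_e \neq \emptyset$, since a nonnegative, nonzero replicable claim starting from zero initial capital is an arbitrage. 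This last bookkeeping — choosing the sign of the perturbation consistently and converting $\ind{F} \in \ar$ into an arbitrage — is the only genuinely delicate part; everything else is a direct application of the robust representation and Proposition~\ref{pro:strictly convex r.m.}.
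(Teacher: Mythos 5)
There is a genuine gap at the very first step of your construction. You need a claim $B\in\linf$ at which the given measure $\QQ\in\MM_{\AA}\setminus\MM_e$ attains the supremum in the robust representation, i.e.\ $\QQ\in\partial\rho_{\AA}(B)$, because your entire sandwich argument ($\ra(B+\ind{F})\leq\ra(B)$ by monotonicity, $\ra(B+\ind{F})\geq \EE^{\QQ}[-B]-\alpha_{\AA}(\QQ)=\ra(B)$ by optimality of $\QQ$) hinges on the equality $\ra(B)=\EE^{\QQ}[-B]-\alpha_{\AA}(\QQ)$. Your justification that $B=0$ works --- ``$\alpha_{\AA}(\QQ)=0$ forced by $0\in\AA$'' --- is false: $0\in\AA$ only gives $\alpha_{\AA}(\QQ)=\sup_{C\in\AA}\EE^{\QQ}[-C]\geq 0$, not equality, so $\QQ$ need not lie in $\partial\rho_{\AA}(0)$. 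More seriously, membership in $\MM_{\AA}$ means only that $\alpha_{\AA}(\QQ)<\infty$; there is no guarantee that an arbitrary element of the effective domain is attained as a maximizer at \emph{some} $B\in\linf$, so the required $B$ may simply not exist. Without it, the representation only yields the one-sided bound $\ra(B+\ind{F})\geq\EE^{\QQ}[-B]-\alpha_{\AA}(\QQ)$, which cannot be matched against $\ra(B)$.

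The paper's proof circumvents exactly this obstacle with a device your plan is missing: from $\alpha_{\AA}(\QQ)=\sup_{C}(\EE^{\QQ}[-C]-\ra(C))<\infty$ and $\QQ(A)=0$ one gets the uniform-in-$n$ bound $\ra(C-n\ind{A})\leq\EE^{\QQ}[C]+\alpha_{\AA}(\QQ)$; writing $C-\ind{A}$ as the convex combination $\tfrac{1}{n}(C-n\ind{A})+(1-\tfrac1n)C$ and letting $n\to\infty$ then gives $\ra(C-\ind{A})\leq\ra(C)$, hence equality by monotonicity, \emph{for every} $C\in\linf$ and using only finiteness of the penalty. This is the step you would need to import (or an equivalent one) to make the argument work for a general $\QQ\in\MM_{\AA}$. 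The remainder of your plan --- deducing property \eqref{equ:linear} along the segment, invoking Proposition~\ref{pro:strictly convex r.m.} to conclude $\ind{F}\in\ar$, and turning a nonnegative, $\PP$-nontrivial element of $\ar$ with zero $\QQ$-expectation into an arbitrage --- is sound in outline, though your closing paragraph meanders; the clean finish is that $\ind{F}\in\ar$ means $\ind{F}=x+(\bvt\cdot\bS)_T$ with $(\bvt\cdot\bS)$ bounded, hence a true $\QQ$-martingale, so $x=\QQ(F)=0$ and $(\bvt\cdot\bS)_T=\ind{F}\geq 0$ with $\PP(F)>0$ violates NFLVR.
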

\begin{proof}
  Suppose, to the contrary, that there exists
  $\QQ\in\MM_{\AA}\setminus \MM_e$. Since $\MM_{\AA}\subseteq\MM_a$,
  there exists $A\in\FF$ be such that $\QQ[A]=0$, but
  $\PP[A]>0$. Then, since $ \alpha_{\AA}(\QQ)=\sup_{C\in\linf} \left(
    \EE^{\QQ}[-C]-\ra(C)\right)$, we have \[\ra(C-n\ind{A})\leq
  \EE^{\QQ}[C]+\alpha_{\AA}(\QQ)<\infty,\] for all $n\in\N$, and all
  $C\in\linf$. By convexity,
\[
\rho(C-\ind{A}) \leq \tfrac{1}{n}\rho(C-n\ind{A})+ (1-\tfrac{1}{n})
\rho(C) \leq \tfrac{1}{n}( \EE^{\QQ}[C]+\alpha_{\AA}(\QQ) )
+(1-\tfrac{1}{n}) \rho(C)),
\]
for each $n\in\N$, so $\rho(C-\ind{A})=\rho(C)$. It follows now from
Proposition \ref{pro:strictly convex r.m.} that $\ind{A}\in\ar$, which
is in contradiction with the assumption of No Free Lunch with
Vanishing Risk.
\end{proof}

\begin{remark}
  In the terminology of \cite{FolSch04} (see page 173), a risk measure
  $\rho$ is called \emph{sensitive} if $\rho(-B)>0$ for every
  $B\in\linf_+\backslash\{0\}$.  The fact, as stated in Proposition
  \ref{pro:Q(0) is equivalent.}, that the minimizers of the penalty
  function $\alpha_{\AA}(\cdot)$ are equivalent to $\PP$ when $\AA$ is
  risk-strictly convex, implies that a risk-strictly convex risk
  measures are sensitive. Indeed, by (\ref{equ:representation}),
  $\ra(-B)= \EE^{\QQ}[B]-\alpha_{\AA}(\QQ)$, for any $\QQ\in\partial
  \rho_{\AA}(0)$.  Clearly $\EE^{\QQ}[B]>0$ and
  $\alpha_{\AA}(\QQ)(B)\leq 0$, so $\ra(-B)>0$.
\end{remark}
Another direct property of risk-strictly convex acceptance
set is the following:
\begin{proposition}\label{pro:sum-of-rm-positive}
  Let $\rho_{\AA}$ be the risk measure corresponding to a
  weak-$\ast$ closed, risk-strictly convex acceptance set
  $\AA$. Then, $$\ra(B)+\ra(-B)>0,\text{ for }
  B\in\linf\setminus\ar.$$ In particular, $\AA\cap
  (-\AA)=\mathcal{X}(0)\cap\ar$.
\end{proposition}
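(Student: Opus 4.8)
The plan is to establish the strict inequality $\ra(B) + \ra(-B) > 0$ for $B \in \linf \setminus \ar$ by contradiction, and then to derive the set-theoretic identity as a consequence. First I would record the easy direction: for \emph{any} $B \in \linf$, cash-invariance and convexity give
\[
0 = \ra(0) = \ra\big(\tfrac12 B + \tfrac12(-B)\big) \leq \tfrac12 \ra(B) + \tfrac12 \ra(-B),
\]
so $\ra(B) + \ra(-B) \geq 0$ always holds, with the chain of inequalities applied at $\lambda = \tfrac12$. The point is to show the inequality is strict off $\ar$. So suppose $B \in \linf$ satisfies $\ra(B) + \ra(-B) = 0$; I want to conclude $B \in \ar$, i.e.\ $B \sim 0$.

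The key step is to feed this equality into Proposition \ref{pro:strictly convex r.m.}, which says (under weak-$\ast$ closedness and risk-strict convexity) that if the affine-on-$[0,1]$ identity \eqref{equ:linear} holds for a pair of payoffs then they are risk-equivalent. The natural pair to test is $B$ and $-B$: I need to verify that
\[
\ra\big(\lambda B + (1-\lambda)(-B)\big) = \lambda \ra(B) + (1-\lambda)\ra(-B) \quad \text{for all } \lambda \in [0,1].
\]
Convexity of $\ra$ gives ``$\leq$'' for every $\lambda$. For the reverse inequality, observe that $\lambda B + (1-\lambda)(-B) = (2\lambda - 1)B$, and using cash-invariance together with $\ra(-B) = -\ra(B)$ one can rewrite the target as a statement purely about $\ra$ along the line $\{tB : t \in [-1,1]\}$; the assumed equality $\ra(B) = -\ra(-B)$ forces the convex function $t \mapsto \ra(tB)$ to be affine on $[-1,1]$, since a convex function on an interval whose values at the two endpoints are $\ra(B)$ and $\ra(-B) = -\ra(B)$ and whose value at the midpoint $0$ is $\ra(0) = 0$ must be affine there (equality in the midpoint-convexity inequality propagates). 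Hence \eqref{equ:linear} holds for the pair $B, -B$, and Proposition \ref{pro:strictly convex r.m.} yields $B - (-B) = 2B \in \ar$, i.e.\ $B \in \ar$ since $\ar$ is a linear subspace closed under scalar multiples (it contains $\tfrac12 \cdot 2B$). This proves the contrapositive, hence the first assertion.

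For the final identity $\AA \cap (-\AA) = \mathcal{X}(0) \cap \ar$, I would argue both inclusions directly. If $B \in \AA \cap (-\AA)$ then $\ra(B) \leq 0$ and $\ra(-B) \leq 0$ (using $\AA \subseteq \{\ra \leq 0\}$), so $\ra(B) + \ra(-B) \leq 0$; combined with the $\geq 0$ inequality just shown and the strict inequality off $\ar$, this forces $B \in \ar$, and then $\ra(B) = \ra(-B) = 0$ together with replication invariance (Corollary \ref{cor:repli-invariance}, applied to $B \in \ar$) identifies the constant $x$ with $\ra(B) = 0$, giving $B \in \XX(0)$; so $B \in \XX(0) \cap \ar$. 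Conversely, if $B \in \XX(0) \cap \ar$ then $B \in \ar$ and $-B \in \XX(0)$ as well, so Corollary \ref{cor:repli-invariance} gives $\ra(B) = \ra(0 + B) = \ra(0) - 0 = 0$ and likewise $\ra(-B) = 0$, whence $B + \ra(B) = B \in \AA$ and $-B \in \AA$, i.e.\ $B \in \AA \cap (-\AA)$. The main obstacle I anticipate is the middle step: carefully justifying that the single equality $\ra(B) + \ra(-B) = 0$ upgrades to the full affine identity \eqref{equ:linear} needed to invoke Proposition \ref{pro:strictly convex r.m.}, i.e.\ the propagation of equality in convexity from the midpoint to the whole segment $[-1,1]$; the rest is bookkeeping with cash-invariance and Corollary \ref{cor:repli-invariance}.
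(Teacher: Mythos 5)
Your proof is correct, but it follows a genuinely different route from the paper's. The paper argues directly from Definition \ref{def:strict-convex}: since $\ra(B)+B$ and $\ra(-B)-B$ both lie in $\AA$ and are not risk-equivalent (their difference is $2B$ plus a constant, and $B\notin\ar$), risk-strict convexity at $\lambda=\tfrac12$ produces a nontrivial $E\in\linf_+$ with $\tfrac12(\ra(B)+\ra(-B))-E\in\AA$, and acceptability of this constant-minus-$E$ forces the constant to be strictly positive. You instead argue the contrapositive through the derived characterization, Proposition \ref{pro:strictly convex r.m.}: assuming $\ra(B)+\ra(-B)=0$, you show the convex section $t\mapsto\ra(tB)$ on $[-1,1]$ attains the chord value at the midpoint $t=0$ and hence is affine there, which is exactly condition \eqref{equ:linear} for the pair $(B,-B)$, whence $2B\in\ar$ and so $B\in\ar$. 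Both arguments are sound; the propagation-of-equality step you flag as the main obstacle is a standard fact about convex functions on an interval and is the same mechanism already used implicitly in the converse direction of Proposition \ref{pro:strictly convex r.m.}. What your route buys is that it avoids the one slightly delicate point in the paper's argument, namely that $\ra(-E)>0$ for the exhibited $E$ (which really uses the condition $\QQ(E>0)>0$ for $\QQ\in\partial\ra(0)$, i.e.\ sensitivity, rather than bare monotonicity); what it costs is an extra layer of indirection through Proposition \ref{pro:strictly convex r.m.}. Your treatment of the identity $\AA\cap(-\AA)=\XX(0)\cap\ar$ via Corollary \ref{cor:repli-invariance} matches the paper's intent, just written out in full.
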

\begin{proof}
For any such $B\in\linf\setminus\ar$, $\rho_{\AA}(B)+B\in\AA$ and
$\rho_{\AA}(-B)-B\in\AA$. By assumption (since $2B\notin\ar$),
there exists $E\in\linf_+\backslash\{0\}$ such that
$$\frac{1}{2}(\rho_{\AA}(B)+\rho_{\AA}(-B))-E\in\AA.$$ Hence,
$\frac{1}{2}(\rho_{\AA}(B)+\rho_{\AA}(-B))-\ra(-E)\geq 0$ and by
monotonicity of $\ra$, we get that $\rho_{\AA}(B)+\rho_{\AA}(-B)>0$.
The last statement follows from Corollary \ref{cor:repli-invariance}.
\end{proof}

\bigskip

\section{Mutually-Agreeable Bundles}\label{sec:agreement}

\subsection{The Agents}
We consider $I\geq 2$ financial agents and suppose that each agent $i$
has access to a sub-market $\bS_i$ of $\bS$, i.e., she is allowed to
invest only in $(S^{(0)}_{t} ;
S^{(j^i_1)}_{t},\dots,S^{(j^i_{d_i})}_{t}) _{t\in[ 0,T ] }$, where
$1\leq j^i_1 < \dots < j^i_{d_i}\leq d$. Note that the num\'{e}raire
$S^{(0)}\equiv 1$ is accessible to each agent. In order to take the
whole market into account and avoid trivialities, we also assume that
each component of $\bS$ is accessible to at least one agent.  Note
that the Assumption \ref{ass:NA} implies that $\MM_a^i\not=\emptyset$,
for all $i$, where
\begin{equation*}
\MM_a^i=\sets{\QQ\ll\PP}{\bS_i \text{ is a local
martingale under }\QQ},\  i=1,2,\dots, I.
\end{equation*}
We define the sets $\XX_i$, $\XX_i(x)$, $\bT_i$ and $\ar_i$ exactly as
in section \ref{sec:setup}, with $\bS_i$ used in lieu of
$\bS$. Moreover, each agent is assumed to have an acceptance set
$\tAA_i$ which satisfies the axioms \textit{Ax1}-\textit{Ax4}. The
induced risk measure $\rho_{\AA_i}$ on $\linf$ will be denoted by
$\rho_i$, and $\MM_i$ will be the shortcut for $\MM_{\AA_i}$, i.e., it
will stand for the effective domain of the corresponding penalty
function $\alpha_i$, $i=1,2,\dots, I$.  If we further assume that the
intersection $\tAA_i\cap\linf$, denoted by $\AA_i$, is weak-$\ast$
closed and hence the induced risk measure $\rho_i=\rho_{\AA_i}$ admits the
following robust representation
\begin{equation}\label{equ:irob-repr}
\rho_{i}(B)=\underset{\QQ\in\MM_i}{\sup}\{\EE^{\QQ}[-B]-\alpha_{i}(\QQ)\},
\end{equation}
where $\MM_i\subseteq\MM^{i}_a$ for all $i$.
As above,  $\partial\rho_{i}(B)$ denotes the set of all  maximizers
in \eqref{equ:irob-repr},
for $B\in\linf$ and $i=1,2,...,I$.

\subsection{Bundles, Allocations and Agreement}
For bundle of claims $\bB=(B_1,B_2,...,B_n)\in(\linf)^{n}$,
$n\in\mathbb{N}$, a matrix $(a_{i,k})=\boa\in\R^{I\times n}$ is called
a \emph{feasible allocation} or simply an \textit{allocation}, if
$\sum_{i=1}^I a_{i,k} = 0$ for all $k=1,2,...,n$. For convenience, the
$i$-th row $(a_{i,k})_{k=1}^n$ of $\boa$ will be denoted by $\boa_i$ ;
it counts the quantities of each of the $n$ components of $\bB$ held
by the agent $i$.  The set of all feasible allocations is denoted by
$\mathbf{F}$, i.e.,
\begin{equation}\label{equ:def_of_allo}
\mathbf{F}=\{\boa\in\R^{I\times n}\text{ such that }\sum_{i=1}^I
\boa_i = (0,0,...,0)\}.
\end{equation}
We usually think of the elements of $\bB$ as the claims (typically not
replicable in the liquid market $\bS$) the agents are trading among
themselves.  These claims are in zero net supply, i.e., some of the
agents will be taking positive and some negative positions in them.
Clearly, the agents will be willing to share the bundle of claims
$\bB$ according to an allocation $\boa\in\mathbf{F}$ only if there
exists a price vector $\bsp\in\R^n$, for which the position
$\boa_i\cdot\bB-\boa_i\cdot\bsp$ is acceptable for each agent
$i$. More precisely, we give the following definition:
\begin{definition}\label{def:gagreement}
  The pair $(\bB,\boa)\in(\linf)^{n}\times\mathbf{F}$ of a bundle of
  claims and an allocation is called \emph{mutually
    agreeable}\index{mutually agreeable} if there exists a (price)
  vector $\bsp\in\R^n$ such that
  $\boa_i\cdot\bB-\boa_i\cdot\bsp\in\AA_i$, for all $i=1,2,...,I$.
\end{definition}
For an allocation $\boa$, let $\agset^{\boa}$ denote the set of all
feasible allocations of $\bB$, acceptable for every agent:
$$\agset^{\boa}=\{\bB\in (\linf)^n:\text{ } (\bB,\boa)\text{ is
  mutually agreeable}\}.$$ We also set
$\hr_{\boa}=\{\bB\in(\linf)^n:\boa_i\cdot \bB\in\ar_i,\text{ } \forall
i=1,2,...,I\}$, for $\boa\in\mathbf{F}$.
\begin{proposition}\label{pro:first-prop-gG}
  For every allocation $\boa\in \mathbf{F}$, $\agset^{\boa}$ is
  convex. If, additionally, $\AA_i$ is weak*-closed and risk-strictly
  convex for every $i$, then $$\agset^{\boa}\cap
  (-\agset^{\boa})=\hr_{\boa},\text{ for all $\boa\in\mathbf{F}$}.$$
\end{proposition}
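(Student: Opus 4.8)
The plan is to handle convexity separately (it is immediate) and then split the identity $\agset^{\boa}\cap(-\agset^{\boa})=\hr_{\boa}$ into its two inclusions, essentially all the work residing in the second one. For convexity: given $\bB,\bB'\in\agset^{\boa}$ with witnessing price vectors $\bsp,\bsp'\in\R^n$ and $\ld\in[0,1]$, I would propose $\ld\bsp+(1-\ld)\bsp'$ as a price for $\ld\bB+(1-\ld)\bB'$; for each agent $i$ the corresponding position
$\boa_i\cdot(\ld\bB+(1-\ld)\bB')-\boa_i\cdot(\ld\bsp+(1-\ld)\bsp')$ equals $\ld(\boa_i\cdot\bB-\boa_i\cdot\bsp)+(1-\ld)(\boa_i\cdot\bB'-\boa_i\cdot\bsp')$, a convex combination of two elements of $\AA_i$, hence in $\AA_i$ by \textit{Ax2}. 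No closedness or strict convexity is used here.

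For the inclusion $\hr_{\boa}\subseteq\agset^{\boa}\cap(-\agset^{\boa})$, fix any $\QQ\in\MM_a$ (non-empty by Assumption \ref{ass:NA}) and, for $\bB\in\hr_{\boa}$, offer the ``market price'' $\bsp:=(\EE^{\QQ}[B_1],\dots,\EE^{\QQ}[B_n])$. Since $\boa_i\cdot\bB\in\ar_i$ one may write $\boa_i\cdot\bB=c_i+(\bvt_i\cdot\bS_i)_T$ with $(\bvt_i\cdot\bS_i)$ uniformly bounded, hence a true $\QQ$-martingale, so $c_i=\EE^{\QQ}[\boa_i\cdot\bB]=\boa_i\cdot\bsp$; consequently both $\boa_i\cdot\bB-\boa_i\cdot\bsp=(\bvt_i\cdot\bS_i)_T$ and its negative (note $-\bvt_i$ is admissible, the gains being bounded) belong to $\XX_i(0)$, and then Axioms \textit{Ax3}, \textit{Ax4} give $\XX_i(0)\cap(-\XX_i(0))\subseteq\tAA_i$, so $\boa_i\cdot\bB-\boa_i\cdot\bsp\in\AA_i$ (equivalently, Corollary \ref{cor:repli-invariance} yields $\rho_i(\boa_i\cdot\bB-\boa_i\cdot\bsp)=0$ and $\AA_i$ is weak-$\ast$ closed). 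The identical computation applied to $-\bB$ with price $-\bsp$ shows $-\bB\in\agset^{\boa}$, so $\bB\in\agset^{\boa}\cap(-\agset^{\boa})$.

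The substantive inclusion is $\agset^{\boa}\cap(-\agset^{\boa})\subseteq\hr_{\boa}$. Take $\bB$ in the left-hand side, with a price vector $\bsp$ making $\boa_i\cdot\bB-\boa_i\cdot\bsp\in\AA_i$ for all $i$ and a price vector $\bsp'$ making $\boa_i\cdot(-\bB)-\boa_i\cdot\bsp'\in\AA_i$ for all $i$. Using the always-valid inclusion $\AA_i\subseteq\{C\in\linf:\rho_i(C)\le 0\}$ together with cash-invariance of $\rho_i$, these conditions read $\rho_i(\boa_i\cdot\bB)+\boa_i\cdot\bsp\le 0$ and $\rho_i(-\boa_i\cdot\bB)+\boa_i\cdot\bsp'\le 0$ for every $i$. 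Now I would sum each family over $i$: because $\boa\in\mathbf{F}$, i.e.\ $\sum_i\boa_i=(0,\dots,0)$, the price contributions $\sum_i\boa_i\cdot\bsp$ and $\sum_i\boa_i\cdot\bsp'$ both vanish, and adding the two summed inequalities gives
\[\sum_{i=1}^{I}\bigl(\rho_i(\boa_i\cdot\bB)+\rho_i(-\boa_i\cdot\bB)\bigr)\le 0.\]
Each summand is nonnegative (from convexity of $\rho_i$ and $\rho_i(0)=0$, since $0=\rho_i(0)\le\tfrac12\rho_i(\boa_i\cdot\bB)+\tfrac12\rho_i(-\boa_i\cdot\bB)$), and by Proposition \ref{pro:sum-of-rm-positive} it is \emph{strictly} positive unless $\boa_i\cdot\bB\in\ar_i$. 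Hence every summand must vanish, forcing $\boa_i\cdot\bB\in\ar_i$ for all $i$, i.e.\ $\bB\in\hr_{\boa}$. The main obstacle is exactly this last step, and the mechanism that unlocks it is the cancellation $\sum_i\boa_i\cdot\bsp=0$ built into feasibility: summing the mutual-agreement inequalities over agents eliminates the unknown price vector and reduces everything to a scalar inequality that Proposition \ref{pro:sum-of-rm-positive} (where weak-$\ast$ closedness and risk-strict convexity are genuinely used) settles; without risk-strict convexity one retains only $\hr_{\boa}\subseteq\agset^{\boa}\cap(-\agset^{\boa})$, and the inclusion may be proper.
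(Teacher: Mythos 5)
Your proof is correct and follows essentially the same route as the paper's: convexity from \textit{Ax2}, the inclusion $\hr_{\boa}\subseteq\agset^{\boa}\cap(-\agset^{\boa})$ via the martingale price $\EE^{\QQ}[\bB]$ and replication invariance, and the converse inclusion reduced to Proposition \ref{pro:sum-of-rm-positive}. The only (harmless) difference is in how the unknown prices are eliminated: the paper first uses convexity of $\AA_i$ and \textit{Ax3} to deduce $\boa_i\cdot\bsp=\boa_i\cdot\bsp'$ agent by agent and hence $\rho_i(\boa_i\cdot\bB)+\rho_i(-\boa_i\cdot\bB)\le 0$ for each $i$, whereas you sum over $i$ to cancel the prices globally and then redistribute the resulting zero using the nonnegativity of each summand.
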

\begin{proof}
The convexity follows directly from the convexity of $\AA_i$'s.

For the second statement, suppose first
that $\bB\in\agset^{\boa}\cap
(-\agset^{\boa})$, i.e., there exist $\bsp, \hat{\bsp}\in\R^n$ such
that $\boa_i\cdot\bB-\boa_i\cdot\bsp\in\AA_i$ and
$-\boa_i\cdot\bB+\boa_i\cdot\hat{\bsp}\in\AA_i$, for all $i$. The
convexity of $\AA_i$ then implies that
$\frac{1}{2}\boa_i\cdot(\hat{\bsp}-\bsp)\in\AA_i$, which yields that
$\boa_i\cdot(\hat{\bsp}-\bsp)\geq 0$, for all $i=1,2,...,I$. Since
$\sum_{i=1}^I\boa_i=0$, we conclude that for every agent
$\boa_i\cdot\bsp=\boa_i\cdot\hat{\bsp}$.  It follows that
$\rho_i(\boa_i\cdot\bB)\leq-\boa_i\cdot\bsp$ and also
$\rho_i(-\boa_i\cdot\bB)\leq\boa_i\cdot\bsp$ for all $i$. Consequently,
$\rho_i(\boa_i\cdot\bB)+\rho_i(-\boa_i\cdot\bB)\leq 0$,
which means (thanks to the risk-strict convexity of $\rho_i$) that
$\boa_i\cdot\bB\in\ar_i$ for all $i$, i.e., $\bB\in\hr_{\boa}$.

Conversely, let $\bB\in \hr_{\boa}$ so that $\boa_i\cdot \bB\in\ar_i$,
for all $i$.  We pick an arbitrary $\QQ\in\MM_a$ and define
$\bsp=\EE^{\QQ}[\bB]=(\EE^{\QQ}[\bB_1],\dots,
\EE^{\QQ}[\bB_n])\in\R^n$. Since $\QQ\in\MM^a_i$ for all $i$, we have
$\boa_i\cdot \bB\in \XX_i( \boa_i\cdot \bsp)$.  By Corollary
\ref{cor:repli-invariance}, $\boa_i\cdot
\bB-\boa_i\cdot\bsp+\in\AA_i$, for all $i$, so that
$\hr_{\boa}\subseteq \agset^{\boa}$. It remains to observe that
$\hr_{-\boa}=\hr_{\boa}$ and that $\agset^{-\boa}=-\agset^{\boa}$.
\end{proof}
For a bundle $\bB\in(\linf)^n$, the set $\agset^{\bB}$ is, in a sense,
polar to $\agset^{\boa}$:
$$\agset^{\bB}=\sets{\boa\in\mathbf{F}}{(\bB,\boa)
  \text{ is mutually agreeable}}.$$ Following the proof of Proposition
\ref{pro:first-prop-gG}, one can show that $\agset^{\bB}$ is convex,
closed in $\R^{n\times I}$ and that
$\boa\in\agset^{\bB}\cap\agset^{-\bB}$ implies that
$\bB\in\hr_{\boa}$.
Another important notion in this setting is the inf-convolution of risk
measures, first introduced in \cite{BarElk04}.
\begin{definition}\label{def:inf-convo}
The \emph{inf-convolution} of the risk measures
$\rho_1,\rho_2,...,\rho_I$ is the map
$
\lzr:\linf\to\R\cup\{-\infty\}$, defined for $C\in\linf$ by
\begin{equation}\label{equ:inf-conv}
  (
  \lzr)(C)=
  \inf\sets{\sum_{i=1}^I \rho_i(B_i)}{
    B_1,\dots,B_I\in\linf,\, \sum_{i=1}^I B_i=C}.
\end{equation}
\end{definition}

Let $\MM$ denote the intersection $\bigcap_{i=1}^{I}\MM_i$. The
following assumption is equivalent to $(\lzr)(0)>-\infty$ (see
\cite{BarElk05}):
\begin{assumption}\label{ass:not empty intersectio}
$\MM\neq\emptyset$.
\end{assumption}
\begin{remark}
  Thanks to the assumption that each component of $\bS$ is available
  to at least one agent, we have that
  $\bigcap_{i=1}^{I}\MM_e^i=\MM_e$. Hence, if $\AA_i$'s are
  risk-strictly convex, it holds that $\MM\subseteq\MM_e$ and hence
  Assumption \ref{ass:not empty intersectio} is a strengthening of
  Assumption \ref{ass:NA}.
\end{remark}
The following Proposition is a mild generalization of Theorem
3.6 in \cite{BarElk05}, where the case of $I=2$ is addressed. The
proof when $I\geq 2$ is similar and hence omitted.
\begin{proposition}\label{pro:inf-convolution}
If $\AA_i$ is
weak-$\ast$ closed for every $i=1,2,...,I$, the Assumption \ref{ass:not empty intersectio}
implies that the map
$\rho_1\lozenge\dots \lozenge \rho_I:\linf\to\R$ is a convex risk measure, with penalty
function $h(\QQ)=\sum_{i=1}^I \alpha_i(\QQ)$ whose effective
domain is $\MM$.
\end{proposition}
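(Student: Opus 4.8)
The plan is to verify directly that $\lzr$ is a convex risk measure with the stated penalty function, by combining the standard inf-convolution duality with the robust representations of the individual $\rho_i$ established earlier. First I would check finiteness: Assumption \ref{ass:not empty intersectio} gives some $\QQ_0\in\MM=\bigcap_i\MM_i$, and for any decomposition $C=\sum_i B_i$ one has $\sum_i\rho_i(B_i)\ge\sum_i(\EE^{\QQ_0}[-B_i]-\alpha_i(\QQ_0))=\EE^{\QQ_0}[-C]-\sum_i\alpha_i(\QQ_0)>-\infty$; together with the trivial upper bound obtained by the decomposition $(C,0,\dots,0)$, this shows $(\lzr)(C)\in\R$ for every $C\in\linf$. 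Monotonicity, convexity and cash-invariance of $\lzr$ then follow mechanically from the corresponding properties of each $\rho_i$ (e.g. cash-invariance: shifting $C$ by $m$ can be absorbed into $B_1$, and $\rho_1(B_1+m)=\rho_1(B_1)-m$). Hence $\lzr$ is a convex risk measure.

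Next I would identify the penalty function. Since $\lzr$ is a finite convex risk measure on $\linf$, it admits the robust representation $(\lzr)(C)=\sup_{\QQ\in\MM_a}\{\EE^{\QQ}[-C]-h^\ast(\QQ)\}$ with $h^\ast(\QQ)=\sup_{C\in\linf}\{\EE^{\QQ}[-C]-(\lzr)(C)\}$; the point is to show $h^\ast=h:=\sum_i\alpha_i$ with effective domain $\MM$. For the inequality $h^\ast\le h$: fix $\QQ$ and $C$, and for any decomposition $C=\sum_i B_i$ write $\EE^{\QQ}[-C]-\sum_i\rho_i(B_i)=\sum_i(\EE^{\QQ}[-B_i]-\rho_i(B_i))\le\sum_i\alpha_i(\QQ)$; taking the supremum over decompositions (on the left, with the sign convention of \eqref{equ:inf-conv}) and then over $C$ yields $h^\ast(\QQ)\le\sum_i\alpha_i(\QQ)$. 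For the reverse inequality $h^\ast\ge h$: fix $\QQ\in\MM$ and, for each $i$, pick $C_i\in\linf$ with $\EE^{\QQ}[-C_i]-\rho_i(C_i)$ close to $\alpha_i(\QQ)$ (possible since $\QQ\in\MM_i$ so $\alpha_i(\QQ)<\infty$); then with $C=\sum_iC_i$ and this particular decomposition, $(\lzr)(C)\le\sum_i\rho_i(C_i)$, so $h^\ast(\QQ)\ge\EE^{\QQ}[-C]-(\lzr)(C)\ge\sum_i(\EE^{\QQ}[-C_i]-\rho_i(C_i))$, which is within $\eps$ of $\sum_i\alpha_i(\QQ)$. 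Thus $h^\ast=\sum_i\alpha_i$ on $\MM$, and $h^\ast(\QQ)=+\infty$ for $\QQ\notin\MM$ because some $\alpha_i(\QQ)=+\infty$ there; in particular $h^\ast$ is finite exactly on $\MM$, so its effective domain is $\MM$. Finally, since each $\alpha_i$ is defined on $\MM_i\subseteq\MM^i_a$ and $\MM=\bigcap_i\MM_i\subseteq\MM_a$, the representation is over a subset of $\MM_a$ as required, and in particular $\lzr$ indeed maps into $\R$ (not merely $\R\cup\{-\infty\}$).

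The only genuinely delicate point is confirming that $\lzr$, once known to be finite and convex, really is \emph{weak-$\ast$ lower semicontinuous} so that the robust representation with penalty $h^\ast$ is valid (a finite convex risk measure on $\linf$ need not be representable without this). Here I would either invoke the standard fact that a finite inf-convolution of weak-$\ast$ l.s.c.\ convex risk measures is again weak-$\ast$ l.s.c.\ (the epigraph of $\lzr$ is, up to the finiteness just proved, the image of $\prod_i\epi\rho_i$ under the continuous summation map, combined with a closedness argument using the uniform penalty bound from Assumption \ref{ass:not empty intersectio}), or simply note that this is precisely the content of Theorem 3.6 in \cite{BarElk05} for $I=2$ and that the induction on $I$ is immediate by associativity of $\lozenge$, writing $\rho_1\lozenge\cdots\lozenge\rho_I=(\rho_1\lozenge\cdots\lozenge\rho_{I-1})\lozenge\rho_I$ and applying the two-agent case together with the inductive description of the penalty function. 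Given the excerpt's own remark that the proof is "similar and hence omitted", the cleanest route is this induction, with the duality computation above supplying the penalty-function bookkeeping at each step.
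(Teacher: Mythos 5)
Your proposal is correct and follows essentially the route the paper intends: the paper omits the proof and defers to Theorem 3.6 of \cite{BarElk05}, and your induction on $I$ via associativity of $\lozenge$, with the duality computation supplying the penalty function $\sum_i\alpha_i$ and effective domain $\MM$ at each step, is exactly the "similar" argument being alluded to. You also correctly isolate the one genuinely delicate point (weak-$\ast$ lower semicontinuity of the inf-convolution, without which the finite convex risk measure need not admit a representation over $\sigma$-additive measures); just note that of your two suggested fixes, the "image of $\prod_i\epi\rho_i$ under the summation map" sketch is not by itself a proof (such images need not be weak-$\ast$ closed), so the inductive appeal to the two-agent theorem of \cite{BarElk05} is the one to rely on.
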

\begin{definition}\label{def:Pareto}
We say that the agents are in a \textit{Pareto-optimal configuration} if
\[(\lzr)(0)=0.\]
\end{definition}

In words, Pareto optimality implies that there is no wealth-preserving
transaction that will be acceptable for everyone and strictly
acceptable for at least one agent.  The problem of Pareto-optimality
is closely related to the problem of \textit{optimal risk sharing}
(sometimes called \textit{Pareto optimal allocation}), which was
recently addressed by many authors in the cases where agents use
convex risk measures to value claim payoffs (see \cite{BarElk05}, \cite{BarSca08}, \cite{BurRus07}, \cite{FilKup08}, 
\cite{HeaKu04}, \cite{JouSchTou06}). Below, we state a well-known characterization of the
Pareto optimality in terms of the minimizers of the penalty functions
$\alpha_i$. We remind the reader that $\partial\rho_i(B)$ stands for
the {\em set} of maximizers in the robust representation
(\ref{equ:irob-repr}) of $\rho_i(B)$ and omit the standard proof:
\begin{proposition}\label{pro:Pareto and measures}
The agents are in a Pareto-optimal configuration if and only if
\[\cap_{i=1}^I \partial\rho_i(0)\not=\emptyset.\]
\end{proposition}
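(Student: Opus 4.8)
The plan is to characterize Pareto optimality $(\rho_1\lozenge\dots\lozenge\rho_I)(0)=0$ via the robust representation of the inf-convolution supplied by Proposition \ref{pro:inf-convolution}. By that proposition (under Assumption \ref{ass:not empty intersectio}, which I may invoke since it is equivalent to $(\lzr)(0)>-\infty$), the map $\lzr$ is a convex risk measure with penalty function $h(\QQ)=\sum_{i=1}^I\alpha_i(\QQ)$ and effective domain $\MM=\bigcap_i\MM_i$. Hence
\[
(\lzr)(0)=\sup_{\QQ\in\MM}\bigl(\EE^{\QQ}[0]-h(\QQ)\bigr)=\sup_{\QQ\in\MM}\Bigl(-\sum_{i=1}^I\alpha_i(\QQ)\Bigr)=-\inf_{\QQ\in\MM}\sum_{i=1}^I\alpha_i(\QQ).
\]
Since each $\alpha_i\geq 0$ (because $\rho_i(0)=0$ gives $\EE^{\QQ}[-B]\leq\rho_i(B)$ for $B\in\AA_i$, so $\alpha_i(\QQ)=\sup_{B\in\AA_i}\EE^{\QQ}[-B]\geq 0$ from $0\in\AA_i$), we always have $(\lzr)(0)\leq 0$, consistent with Definition \ref{def:Pareto} being an equality rather than merely $\geq$.

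First I would prove the ``if'' direction. Suppose $\QQ^\ast\in\bigcap_{i=1}^I\partial\rho_i(0)$. Then for each $i$, the supremum in \eqref{equ:irob-repr} at $B=0$ is attained at $\QQ^\ast$, i.e.\ $\rho_i(0)=\EE^{\QQ^\ast}[0]-\alpha_i(\QQ^\ast)=-\alpha_i(\QQ^\ast)$; but $\rho_i(0)=0$, so $\alpha_i(\QQ^\ast)=0$ for every $i$. In particular $\QQ^\ast\in\MM$, so $\inf_{\QQ\in\MM}\sum_i\alpha_i(\QQ)\leq\sum_i\alpha_i(\QQ^\ast)=0$, and combined with nonnegativity the infimum is exactly $0$; by the displayed identity $(\lzr)(0)=0$, which is Pareto optimality.

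For the ``only if'' direction, assume $(\lzr)(0)=0$, so $\inf_{\QQ\in\MM}\sum_{i=1}^I\alpha_i(\QQ)=0$. The issue is that this is an infimum, not necessarily attained; the main obstacle is therefore an attainment/compactness argument. The clean route is to invoke part (2) of the Proposition on the robust representation (applied to the convex risk measure $\lzr$ in place of $\ra$): the set $\partial(\lzr)(0)$ of maximizers of $\sup_{\QQ\in\MM}(-h(\QQ))$ is nonempty, so there exists $\QQ^\ast\in\MM$ with $h(\QQ^\ast)=\sum_{i=1}^I\alpha_i(\QQ^\ast)=0$. Since each $\alpha_i(\QQ^\ast)\geq 0$ and they sum to zero, $\alpha_i(\QQ^\ast)=0$ for every $i$. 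Then $\EE^{\QQ^\ast}[-B]-\alpha_i(\QQ^\ast)=-\alpha_i(\QQ^\ast)=0=\rho_i(0)$, so $\QQ^\ast$ attains the supremum in \eqref{equ:irob-repr} at $B=0$, i.e.\ $\QQ^\ast\in\partial\rho_i(0)$ for all $i$; hence $\bigcap_{i=1}^I\partial\rho_i(0)\ni\QQ^\ast$ is nonempty. The only point requiring care is justifying that the weak-$\ast$ closedness of the $\AA_i$ (hence of the inf-convolution's acceptance set) legitimately hands us the attainment in part (2) of the robust-representation Proposition; if one prefers to avoid reproving attainment for $\lzr$, an alternative is to run a direct argument showing $\sum_i\rho_i(B_i)\geq 0$ for all decompositions $\sum_i B_i=0$ using $\rho_i(B_i)\geq\EE^{\QQ}[-B_i]-\alpha_i(\QQ)$ for a common $\QQ\in\MM$ and summing, but the attainment step is still where the substance lies.
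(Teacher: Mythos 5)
The paper omits its own proof of this proposition (``we omit the standard proof''), so I can only judge your argument on its merits. Your reduction is certainly the intended one: since $\rho_i(0)=0$ and $\alpha_i\geq 0$, a measure $\QQ$ lies in $\bigcap_{i=1}^I\partial\rho_i(0)$ precisely when $\alpha_i(\QQ)=0$ for every $i$, i.e.\ when $h(\QQ)=\sum_i\alpha_i(\QQ)=0$; and your ``if'' direction is complete and correct (it can even be done without Proposition \ref{pro:inf-convolution}: for any decomposition $\sum_i B_i=0$ one has $\sum_i\rho_i(B_i)\geq\sum_i\bigl(\EE^{\QQ^\ast}[-B_i]-\alpha_i(\QQ^\ast)\bigr)=0$).

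The gap is exactly where you suspect it is, and your proposed repair does not close it. Granting Proposition \ref{pro:inf-convolution}, one has $(\lzr)(0)=-\inf_{\QQ\in\MM}h(\QQ)$ and $\bigcap_i\partial\rho_i(0)=\sets{\QQ\in\MM}{h(\QQ)=0}$, so the ``only if'' direction is \emph{literally} the assertion that the infimum of $h$ over $\MM$ is attained when it equals zero; there is no way to finesse this. Applying part (2) of the robust-representation proposition to $\lzr$ would require the acceptance set of $\lzr$ (which is tied to the Minkowski sum $\AA_1+\dots+\AA_I$) to be a weak-$\ast$ closed acceptance set, and this does \emph{not} follow from weak-$\ast$ closedness of the individual $\AA_i$: sums of weak-$\ast$ closed convex sets need not be weak-$\ast$ closed, and Proposition \ref{pro:inf-convolution} identifies only the penalty function and its effective domain, saying nothing about attainment. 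Your fallback ``direct argument'' merely re-derives the inequality $\sum_i\rho_i(B_i)\geq -\inf_{\QQ\in\MM}h(\QQ)$, hence the easy direction. To finish one needs a genuine compactness or attainment input: for instance, weak compactness in $\lone$ of a sublevel set of some $\alpha_i$ combined with the $\sigma(\lone,\linf)$-lower semicontinuity of all the $\alpha_j$ (so that a minimizing sequence $\QQ_k$, which necessarily satisfies $\alpha_i(\QQ_k)\to 0$ for each $i$, has a weak limit point $\QQ^\ast$ with $\alpha_i(\QQ^\ast)=0$ for all $i$), or an explicit citation of the corresponding exactness statement in \cite{BarElk05} or \cite{KloSch07}. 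As written, the ``only if'' direction is asserted rather than proved.
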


The following proposition states that if the agents are in
Pareto-optimal configuration, the risk-strict convexity assumption
implies that transactions involving non-replicable claims result
in strictly increased risk for at least one of the agents involved in this
transaction.
\begin{proposition}\label{pro:Pareto and Agreement}
Assume that $\AA_i$ are weak-$\ast$ closed and risk-strictly convex for
all $i=1,2,..,I$ and suppose that $(\lzr)(0)=0$. For any
choice of $B_1,B_2,...,B_{I}$ in $\linf$ with $\sum_{i=1}^I B_i=0$, it holds that
$$\sum_{i=1}^{I}\rho_i(B_i)=0\text{ if
and only if }B_i\in\ar_i, \text{  for all  } i=1,2,...,I.$$
\end{proposition}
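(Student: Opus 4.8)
The plan is to prove the two implications separately, with the forward direction being the substantive one. For the easy direction, suppose $B_i\in\ar_i$ for all $i$. By Corollary \ref{cor:repli-invariance} (replication invariance), since $B_i\in\ar_i\cap\XX_i(x_i)$ for some $x_i\in\R$, we have $\rho_i(B_i)=-x_i$; writing $B_i=x_i+(\bvt_i\cdot\bS_i)_T$ with a bounded wealth process and using that $\QQ\in\MM$ makes every such stochastic integral a true martingale (actually a bounded local martingale), one gets $\EE^\QQ[B_i]=x_i$ for any fixed $\QQ\in\MM$ (nonempty by Assumption \ref{ass:not empty intersectio}). Then $\sum_i\rho_i(B_i)=-\sum_i x_i=-\sum_i\EE^\QQ[B_i]=-\EE^\QQ[\sum_i B_i]=0$, using $\sum_i B_i=0$. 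So this direction is immediate from the tools already in place.

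For the forward direction, assume $\sum_i B_i=0$ and $\sum_i\rho_i(B_i)=0$. Since $(\lzr)(0)=0$, Proposition \ref{pro:Pareto and measures} gives a common maximizer $\QQ^*\in\bigcap_i\partial\rho_i(0)$. The idea is to exploit that $\QQ^*$ simultaneously attains the penalty-function infimum: $\rho_i(0)=-\alpha_i(\QQ^*)=0$, so $\alpha_i(\QQ^*)=0$ for each $i$. Now from the robust representation, $\rho_i(B_i)\ge\EE^{\QQ^*}[-B_i]-\alpha_i(\QQ^*)=\EE^{\QQ^*}[-B_i]$. Summing over $i$ and using $\sum_i B_i=0$ yields $0=\sum_i\rho_i(B_i)\ge\EE^{\QQ^*}[-\sum_i B_i]=0$, so equality holds throughout; in particular $\rho_i(B_i)=\EE^{\QQ^*}[-B_i]-\alpha_i(\QQ^*)$ for every $i$, i.e.\ $\QQ^*\in\partial\rho_i(B_i)$ for all $i$.

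The remaining step is to upgrade "$\QQ^*\in\partial\rho_i(B_i)$ and $\alpha_i(\QQ^*)=0$" to "$B_i\in\ar_i$," and this is where risk-strict convexity enters and where I expect the main obstacle to lie. The natural route is a contradiction argument in the spirit of the proof of Proposition \ref{pro:sum-of-rm-positive}. Suppose some $B_j\notin\ar_j$. Set $C_i=B_i+\rho_i(B_i)\in\AA_i$, so $\sum_i C_i=\sum_i\rho_i(B_i)=0$, and $C_j\nsim 0$ (since $C_j-0=B_j+\rho_j(B_j)\in\ar_j$ would force $B_j\in\ar_j$ by cash-invariance and replication invariance — actually one needs to check $C_j\notin\ar_j$, equivalently $B_j\notin\ar_j$, which holds by assumption). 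Applying risk-strict convexity to the pair $C_j,0\in\AA_j$ (both acceptable, $C_j\nsim 0$) with $\lambda=\tfrac12$: there is $E\in\linf_+$ with $\QQ'(E>0)>0$ for some $\QQ'\in\partial\rho_j(\tfrac12 C_j)$ and $\tfrac12 C_j-E\in\AA_j$. The delicate point is that the $\QQ'$ produced by the definition is a maximizer at $\tfrac12 C_j$, not necessarily our $\QQ^*$; one must either argue that $\QQ^*\in\partial\rho_j(\tfrac12 C_j)$ too (which follows since $\tfrac12 C_j=\tfrac12 C_j+\tfrac12\cdot 0$ is a convex combination of points at which $\QQ^*$ is optimal, and $\QQ^*$-optimality is preserved under convex combinations along a segment in the domain — this uses the linearity of $B\mapsto\EE^{\QQ^*}[-B]$ and the representation), or use the remark after Proposition \ref{pro:strictly convex r.m.} that $E$ may be taken a positive constant, in which case $\QQ'(E>0)>0$ is automatic and the $\QQ'$-dependence disappears. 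With $E$ a positive constant $c>0$: $\tfrac12 C_j-c\in\AA_j$ gives $\rho_j(\tfrac12 C_j)\le -c<0$. But $\QQ^*\in\partial\rho_j(\tfrac12 C_j)$ (argued as above) together with $\alpha_j(\QQ^*)=0$ gives $\rho_j(\tfrac12 C_j)=\EE^{\QQ^*}[-\tfrac12 C_j]=-\tfrac12\EE^{\QQ^*}[C_j]=-\tfrac12(\EE^{\QQ^*}[B_j]+\rho_j(B_j))=-\tfrac12(\EE^{\QQ^*}[B_j]-\EE^{\QQ^*}[B_j])=0$, a contradiction. Hence $B_i\in\ar_i$ for all $i$. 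The step needing the most care is verifying $\QQ^*\in\partial\rho_j(\tfrac12 C_j)$; the clean way is: $\rho_j(\tfrac12 C_j)\ge\EE^{\QQ^*}[-\tfrac12 C_j]-\alpha_j(\QQ^*)=0-0=0$, while $\rho_j(C_j)\le 0$ and $\rho_j(0)=0$ give $\rho_j(\tfrac12 C_j)\le\tfrac12\rho_j(C_j)+\tfrac12\rho_j(0)\le 0$ by convexity, so $\rho_j(\tfrac12 C_j)=0$ and the lower bound is attained at $\QQ^*$, i.e.\ $\QQ^*\in\partial\rho_j(\tfrac12 C_j)$.
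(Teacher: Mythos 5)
Your proof is correct, but the forward implication takes a genuinely different route from the paper's. The paper stays on the primal side: assuming $B_k\notin\ar_k$ for some $k$, it applies risk-strict convexity to the pair $B_k+\rho_k(B_k),\ 0\in\AA_k$ to obtain $\rho_k(\lambda B_k)<\lambda\rho_k(B_k)$ for $\lambda\in(0,1)$, uses plain convexity, $\rho_i(\lambda B_i)\leq\lambda\rho_i(B_i)$, for the remaining agents, and concludes that $\sum_{i=1}^I\rho_i(\lambda B_i)<\lambda\sum_{i=1}^I\rho_i(B_i)=0$, which contradicts $(\lzr)(0)=0$ because $\sum_{i=1}^I\lambda B_i=0$. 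You instead pass to the dual side first: you extract a common maximizer $\QQ^*\in\bigcap_{i=1}^I\partial\rho_i(0)$ with $\alpha_i(\QQ^*)=0$ via Proposition \ref{pro:Pareto and measures}, deduce $\QQ^*\in\partial\rho_i(B_i)$ for every $i$ from $\sum_i B_i=0$, and then obtain a contradiction localized at the single bad agent $j$ by showing $\rho_j(\tfrac12 C_j)=0$ while risk-strict convexity forces $\rho_j(\tfrac12 C_j)<0$. Both arguments ultimately rest on the same application of Definition \ref{def:strict-convex} to the pair $\bigl(B_j+\rho_j(B_j),\,0\bigr)$; yours is longer and invokes the (unproved in the paper) Proposition \ref{pro:Pareto and measures}, but it yields the extra information that one measure $\QQ^*$ is simultaneously optimal for every $\rho_i(B_i)$, whereas the paper's version needs only the definition of the inf-convolution. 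One small remark: the "delicate point" you flag about which measure witnesses $\QQ(E>0)>0$ is actually harmless even without your (correct) verification that $\QQ^*\in\partial\rho_j(\tfrac12 C_j)$ --- the measure $\QQ'$ supplied by the definition already gives the chain $\rho_j(\tfrac12 C_j)=\EE^{\QQ'}[-\tfrac12 C_j]-\alpha_j(\QQ')<\EE^{\QQ'}[-\tfrac12 C_j+E]-\alpha_j(\QQ')\leq\rho_j(\tfrac12 C_j-E)\leq 0$, exactly as in the proof of Proposition \ref{pro:strictly convex r.m.}; the common maximizer $\QQ^*$ is needed only for the opposite bound $\rho_j(\tfrac12 C_j)\geq 0$. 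Your converse direction coincides with the paper's.
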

\begin{proof}
  Assume that there exists $k\in\{1,2,...,I\}$ such that
  $B_k\notin\ar_k$. Then, by risk-strict convexity, for each $
  \lambda\in (0,1)$ there exists $E\in\linf_+\backslash\{0\}$ such
  that $\rho_k(\lambda B_k-E)\leq\lambda\rho_k(B_k)$. This implies
  that $\rho_k(\lambda B_k)<\lambda\rho_k(B_k)$. Since $\rho_i(\lambda
  B_i)-\lambda\rho_i(B_i)\leq 0$, $\forall i=1,2,...,I$, we have
$$\sum_{i=1}^{I}\rho_i(\lambda B_i)<\lambda
\sum_{i=1}^{I}\rho_i(B_i)=0,$$ which contradicts the assumption
$(\lzr)(0)=0$. The converse implication follows from the fact that
when $B_i\in\ar_i$, we have $\rho_i(B_i)=-\EE^{\QQ}[B_i]$, for any
$\QQ\in\MM^i_a$.
\end{proof}

\begin{corollary}
  Assume that all $\AA_i$ are weak-$\ast$ closed and risk-strictly
  convex and pick
  $\boa\in\mathbf{F}$ and $\bB\in\agset^{\boa}$.
If $\cap_{i=1}^I \partial\rho_i(0)\not=\emptyset$ then,
  $\boa_i\cdot \bB\in\ar_i$, for every $i=1,2,...,I$.
\end{corollary}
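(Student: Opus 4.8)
The plan is to reduce the assertion to Proposition~\ref{pro:Pareto and Agreement} by manufacturing, out of the data $(\bB,\boa)$, a wealth-preserving family of positions whose risk measures sum to zero. First, since $\bB\in\agset^{\boa}$, Definition~\ref{def:gagreement} provides a price vector $\bsp\in\R^n$ such that $\boa_i\cdot\bB-\boa_i\cdot\bsp\in\AA_i$ for every $i$. Put $B_i:=\boa_i\cdot\bB-\boa_i\cdot\bsp\in\linf$. Feasibility of $\boa$, namely $\sum_{i=1}^I\boa_i=0$, gives $\sum_{i=1}^I B_i=(\sum_{i=1}^I\boa_i)\cdot\bB-(\sum_{i=1}^I\boa_i)\cdot\bsp=0$; and $B_i\in\AA_i$ forces $\rho_i(B_i)\le 0$ for each $i$ (by the definition of $\rho_i$ and cash invariance).

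Next I would bring in the hypothesis. By Proposition~\ref{pro:Pareto and measures}, the condition $\cap_{i=1}^I\partial\rho_i(0)\neq\emptyset$ is precisely the statement that the agents are in a Pareto-optimal configuration, i.e.\ $(\lzr)(0)=0$. By the very definition of the inf-convolution applied to the decomposition $\sum_{i=1}^I B_i=0$, we get $\sum_{i=1}^I\rho_i(B_i)\ge(\lzr)(0)=0$, which together with the reverse inequality from the previous paragraph yields $\sum_{i=1}^I\rho_i(B_i)=0$. Since all $\AA_i$ are weak-$\ast$ closed and risk-strictly convex and $(\lzr)(0)=0$, Proposition~\ref{pro:Pareto and Agreement} applies to $B_1,\dots,B_I$ and delivers $B_i\in\ar_i$ for every $i$.

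Finally I would push this back to $\boa_i\cdot\bB$. The number $\boa_i\cdot\bsp$ is a real constant, and $\ar_i$ is a linear subspace of $\linf$ that contains all constants (a constant $c$ is the time-$T$ value of the admissible wealth process started at $c$ with the zero strategy, and the same holds for $-c$). Hence $\boa_i\cdot\bB=B_i+\boa_i\cdot\bsp\in\ar_i$ for all $i=1,2,\dots,I$, as claimed. I do not expect any genuine obstacle: the corollary is essentially a repackaging of Propositions~\ref{pro:Pareto and measures} and~\ref{pro:Pareto and Agreement}, and the only points demanding a little care are that the identity $\sum_i B_i=0$ relies on feasibility of the allocation $\boa$ (not of the bundle $\bB$), and that $\ar_i$ absorbs the constant shift $\boa_i\cdot\bsp$ — which is in any case immediate from Corollary~\ref{cor:repli-invariance}.
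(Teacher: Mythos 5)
Your proof is correct and follows essentially the same route as the paper: use the agreement price and feasibility of $\boa$ to get $\sum_i\rho_i\le 0$, invoke Pareto optimality (via Proposition \ref{pro:Pareto and measures}) for the reverse inequality, and conclude with Proposition \ref{pro:Pareto and Agreement}. The only cosmetic difference is that the paper applies the argument directly to $B_i=\boa_i\cdot\bB$ rather than to the price-shifted positions, which lets it skip your final step of absorbing the constant $\boa_i\cdot\bsp$ into $\ar_i$.
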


\begin{proof}
  $\bB\in\agset^{\boa}$ means that there exists a price vector $\bsp$,
  such that $\boa_i\cdot\bB-\boa_i\cdot\bsp\in\AA_i$, for all $i$.
  This implies that $\sum_{i=1}^{I}\rho_i(\boa_i\cdot\bB)\leq 0$,
  which, by the hypotheses and Proposition \ref{pro:Pareto and
    Agreement} yields that $\boa_i\cdot \bB\in\ar_i$ for all
  $i=1,2,...,I$.
\end{proof}

\begin{example}
  Suppose that all $I$ agents are exponential-utility maximizers with
  possibly different risk-aversion coefficients $\gamma_i$,
  $i=1,2,\dots,I$, i.e., $U_i(x)=-\exp(-\gamma_i x)$ (for details on
  the set of admissible strategies, we refer the reader to
  \cite{DelGraRheSamSchStr02} and \cite{ManSch05}). Let $\EN_i$,
  $i=1,2,\dots, I$, denote the agents' random endowments.  If we
  follow the arguments of Proposition 3.15 in \cite{AntZit08}, we can
  conclude that in the case $\bS_i=\bS$ for all $i\in\{1,2,...,I\}$,
  the agents will be in the Pareto-optimal configuration if and only
  if $\frac{\gamma_i}{\gamma_j}\EN_i\sim\EN_j$, for all
  $i,j=1,2,...,I$. A special case of this condition occurs when the
  agents' random endowments are replicable. However, this is not the
  case when agents have access to different markets.  To see that, let
  us consider the case $I=2$, with $\bS_1\neq\bS_2$ and
  $\EN_1=\EN_2=0$. It follows from Theorem 2.2 in
  \cite{DelGraRheSamSchStr02}, that
  $\partial\rho_1(0)=\partial\rho_2(0)$ (both are singletons in this
  case) if and only if
  $\gamma_1(\bvt^{(0)}_1\cdot\bS_1)_T=\gamma_2(\bvt^{(0)}_2\cdot\bS_2)_T$,
  where $\bvt^{(0)}_i$ is the optimal trading strategy in the market
  $\bS_i$ of the agent $i$, $i=1,2$.
\end{example}

\begin{example}
  The case where agents use power utility function,
  $U(x)=\frac{1}{p}x^p$, where $p\neq 0$ is the relative risk aversion
  coefficient, is similar to the previous example. Without going into
  details, let us mention that it is known (see for instance
  \cite{Sch04}) that if all agents have access to the same market,
  $\partial\rho_i(0)=\partial\rho_{j}(0)$, for all $i,j$ if and only if the agents'
  relative risk aversion coefficients are equal, regardless of their
  initial wealths. However, when agents have access to different
  markets, it is easy to construct counterexamples.
\end{example}

\bigskip

\section{The Partial Equilibrium Price Allocation}\label{sec:PEPA}

\subsection{The Demand Correspondence}
Having introduced the setup consisting of $I$ agents, their acceptance
sets and accessible assets, we turn to the partial-equilibrium pricing
problem for a fixed bundle of claims $\bB\in(\linf)^n$.  Our first
task is to analyze single agent's demand for $\bB$ under the natural
assumption that in a set of payoffs, an agent will choose the one
which minimizes the capital requirement. For linear combinations of
the components of $\bB$, we have the following, more precise,
definition:
\begin{definition}\label{def:gdemand}
  For $i\in\{1,2,...,I\}$, the {\em agent $i$'s demand correspondence}
  $Z_i:\R^n\to 2^{\R^n}$ is defined by
\begin{equation}
  Z_i(\bsp)=\underset{\boa\in{\R}^n}{\argmin}\{\rho_i(\boa\cdot\bB-\boa\cdot
  \bsp)\}
\end{equation}
\end{definition}
\begin{definition}\label{def:strict-wrt-B}
  Let $\AA$ be a weak-$\ast$ closed acceptance set. For a bundle $\bB\in
  (\linf)^n$, we say that $\AA$ is \textit{strictly convex with
    respect to $\bB$} if for every
  $(\boa,m),(\boldsymbol{\delta},k)\in\AA(\bB)$, where
\begin{equation}\label{equ:AA(B)}
  \AA(\bB)=\{(\boa,m)\in\R^n\times\R:\boa\cdot\bB+m\in\AA\}
\end{equation}
such that $\boa\neq\boldsymbol{\delta}$ the following statement holds:

for every $\lambda\in (0,1)$ there exists a random variable
$E\in\linf_+$ and
$\QQ\in\partial\rho_i(\lambda\boa+(1-\lambda)\boldsymbol{\delta})\cdot\bB)$,
such that $\QQ (E>0)>0$ and
$$\lambda(\boa\cdot\bB+m)+(1-\lambda)(\boldsymbol{\delta}\cdot\bB+k)-E\in\AA.$$
\end{definition}
\begin{remark}
  Following the proof of Proposition \ref{pro:strictly convex r.m.}
  we can show that an acceptance set $\AA_i$ is strictly convex with
  respect to $\bB$ if and only if the function $\R^n\ni\boa\mapsto
  \rho_i(\boa\cdot\bB)$ is strictly convex.  Moreover, it is clear
  that the requirements of Definition (\ref{equ:AA(B)}) are implied by
  risk-strict convexity defined above whenever
  $\boa\cdot\bB\not\in\ar_i$ for $\boa\neq 0$.
\end{remark}
Before we proceed to our next auxiliary result, we set
\begin{equation}
\nonumber
   \begin{split}
 \MM_i^{\bB} & =\bigcup_{\boldsymbol{\delta}\in\R^n}\partial
    \rho_i(\boldsymbol{\delta}\cdot \bB)\subseteq \MM_i,\
\mathcal{P}_i^{\bB}=\sets{\EE^{\QQ}[\bB]}{\QQ\in\MM_i^{\bB}}\subseteq\R^n.
   \end{split}
\end{equation}
\begin{lemma}\label{lem:sup_sets}
Pick $i\in \set{1,2,\dots, I}$ and assume that
 that  $\AA_i$ is weak-$\ast$ closed and
strictly convex with respect to $\bB$. Let
$\sek{\boa}$ be a sequence of the form $\boa_k=\boa_0+ \gamma_k
\bsv\in\R^n$, for some $\bsv\in\R^n\setminus\set{0}$,
$\boa_0\in \R^n$, and a
sequence $\sek{\gamma}$ of positive constants with
$\lim_{k} \gamma_k=\infty$.
If  $\sek{\QQ}$ is  an
arbitrary
sequence of probability measures with
$\QQ_k\in\partial\rho_i(-\boa_k\cdot \bB)$, then
\begin{equation}\label{equ:general sup_sets}
\lim_{k\to\infty} \EE^{\QQ_k}[ \bsv\cdot \bB]
=\sup_{\QQ\in\MM_i} \EE^{\QQ}[\bsv\cdot\bB]
=\sup_{\bsp\in\mathcal{P}_i^{\bB}} \bsv\cdot \bsp.
\end{equation}
\end{lemma}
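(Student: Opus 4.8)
The plan is to pass to the one-dimensional convex function $h\colon\R\to\R$ defined by $h(t)=\rho_i\big(-(\boa_0+t\bsv)\cdot\bB\big)$, and to identify the common value of the three quantities in \eqref{equ:general sup_sets} with its asymptotic slope $L:=\lim_{t\to\infty}h(t)/t$. From the robust representation \eqref{equ:irob-repr} one has
\[
h(t)=\sup_{\QQ\in\MM_i}\Big\{\EE^{\QQ}[\boa_0\cdot\bB]+t\,\EE^{\QQ}[\bsv\cdot\bB]-\alpha_i(\QQ)\Big\},
\]
a supremum of affine functions of $t$, so $h$ is convex; it is also real-valued, since $|h(t)|\le\|\boa_0\cdot\bB\|_\infty+|t|\,\|\bsv\cdot\bB\|_\infty$, so in particular $\limsup_{t\to\infty}h(t)/t\le\|\bsv\cdot\bB\|_\infty<\infty$. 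A finite convex function on $\R$ satisfies $h(t)/t\to\sup_{s}h'_+(s)=\lim_{s\to\infty}h'_+(s)=\lim_{s\to\infty}h'_-(s)$ as $t\to\infty$, so the limit $L$ exists, is finite by the preceding bound, and the one-sided derivatives $h'_\pm$ are nondecreasing.

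For the first equality in \eqref{equ:general sup_sets} I would argue by a squeeze. Fix $k$; since $\QQ_k\in\partial\rho_i(-\boa_k\cdot\bB)$ attains the supremum above at $t=\gamma_k$ (recall $\boa_k=\boa_0+\gamma_k\bsv$), we have $h(\gamma_k)=\EE^{\QQ_k}[\boa_0\cdot\bB]+\gamma_k\EE^{\QQ_k}[\bsv\cdot\bB]-\alpha_i(\QQ_k)$, whereas the same affine function evaluated at any $t$ stays $\le h(t)$; subtracting gives $h(t)-h(\gamma_k)\ge(t-\gamma_k)\,\EE^{\QQ_k}[\bsv\cdot\bB]$ for every $t\in\R$. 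Dividing by $t-\gamma_k$ (positive, resp.\ negative) and letting $t\downarrow\gamma_k$, resp.\ $t\uparrow\gamma_k$, and using the monotonicity of the difference quotients of the convex function $h$, yields
\[
h'_{-}(\gamma_k)\ \le\ \EE^{\QQ_k}[\bsv\cdot\bB]=\bsv\cdot\EE^{\QQ_k}[\bB]\ \le\ h'_{+}(\gamma_k).
\]
Since $\gamma_k\to\infty$, both $h'_{\pm}(\gamma_k)\to L$, so $\lim_k\EE^{\QQ_k}[\bsv\cdot\bB]=L$.

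Finally I would identify $L$. For $L\ge\sup_{\QQ\in\MM_i}\EE^{\QQ}[\bsv\cdot\bB]$, fix $\QQ\in\MM_i$, keep only its term in the supremum defining $h(t)$, divide by $t$ and let $t\to\infty$ (using $\alpha_i(\QQ)<\infty$); for the reverse inequality, use $\alpha_i\ge0$ (a consequence of $\rho_i(0)=0$ and \eqref{equ:irob-repr}) to get $h(t)\le\|\boa_0\cdot\bB\|_\infty+t\sup_{\QQ\in\MM_i}\EE^{\QQ}[\bsv\cdot\bB]$, divide by $t$ and let $t\to\infty$. The last equality in \eqref{equ:general sup_sets} then follows because $\MM_i^{\bB}=\bigcup_{\bdelta\in\R^n}\partial\rho_i(\bdelta\cdot\bB)\subseteq\MM_i$ gives $\sup_{\bsp\in\mathcal{P}_i^{\bB}}\bsv\cdot\bsp\le\sup_{\QQ\in\MM_i}\EE^{\QQ}[\bsv\cdot\bB]$, while each $\QQ_k$ belongs to $\partial\rho_i\big((-\boa_k)\cdot\bB\big)\subseteq\MM_i^{\bB}$, so $\EE^{\QQ_k}[\bB]\in\mathcal{P}_i^{\bB}$ and the limit just computed yields $\sup_{\bsp\in\mathcal{P}_i^{\bB}}\bsv\cdot\bsp\ge\lim_k\bsv\cdot\EE^{\QQ_k}[\bB]=L$ (such sequences $\sek{\QQ}$ exist since $\partial\rho_i(\cdot)\neq\emptyset$ by the earlier Proposition). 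The only step requiring genuine care is the passage from the supremum-attainment identity for $\rho_i$ to the two-sided estimate of $\EE^{\QQ_k}[\bsv\cdot\bB]$ by $h'_{\pm}(\gamma_k)$, together with the monotone-limit argument behind the squeeze; the remaining estimates are elementary.
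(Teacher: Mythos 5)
Your proof is correct and rests on the same ingredients as the paper's: the robust representation restricted to the line $t\mapsto\boa_0+t\bsv$, the attainment of the supremum by $\QQ_k$ at $t=\gamma_k$, the bound $\alpha_i\geq 0$, monotonicity of difference quotients of a finite convex function on $\R$, and the inclusion $\MM_i^{\bB}\subseteq\MM_i$ for the last equality. The only organizational difference is that you obtain the two-sided squeeze $h'_-(\gamma_k)\leq\EE^{\QQ_k}[\bsv\cdot\bB]\leq h'_+(\gamma_k)$ directly from the subgradient inequality and then identify the asymptotic slope, whereas the paper proves the lower bound on $\lim_k\rho_i(-\boa_k\cdot\bB)/\gamma_k$ by an $\eps$-argument and closes the chain of inequalities through $\sup_{\QQ\in\MM_i^{\bB}}$; the content is the same.
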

\begin{proof}
Our first  claim is  that
\begin{equation}\label{equ:lemma}
  \lim_{k\to\infty}
  \frac{\rho_i(-\boa_k\cdot\bB)}{\gamma_k}\geq S,\text{ where }S=\sup_{\QQ\in\MM_i}
  \EE^{\QQ}[\bsv\cdot \bB].
\end{equation}
For $\eps>0$ there exists $\QQ^{\eps}\in\MM_i$ such that
\[ \EE^{\QQ^{\eps}}[\bsv\cdot\bB]\geq S-\eps/2,\] and, for that choice
of $\QQ^{\eps}$, there exists $K\in\N$ such that for $k\geq K$ we have
\[ \tfrac{\alpha(\QQ^{\eps})+\delta}{\gamma_k}\leq \eps/2\text{ where } \delta=
\abs{\rho_i(-\boa_0\cdot\bB)}+\norm{\boa_0\cdot\bB}_{\linf}.\]
Thanks to convexity of $\rho_i$,  the ratio
  $\frac{\rho_i(-\boa_n\cdot\bB)-\rho_i(-\boa_0\cdot\bB)}{\gamma_n}$
  is  nondecreasing, so its limit as $n\to\infty$ exists in
  $(-\infty,\infty]$ and
\begin{equation}
\nonumber
   \begin{split}
     \lim_{k\to\infty} \frac{\rho_i(-\boa_k\cdot\bB)}{\gamma_k} &=
     \sup_{k\geq K}\frac{\rho_i(-\boa_k\cdot\bB)-\rho_i(-\boa_0\cdot\bB)}{\gamma_k}\\
     &= \sup_{k\geq K}\, \sup_{\QQ\in\MM_i}
     \left\{\frac{\EE^{\QQ}[\boa_k\cdot\bB]}{\gamma_k}-
       \frac{\alpha_i(\QQ)}{\gamma_k} - \frac{\rho_i(-\boa_0\cdot
         \bB)}{\gamma_k}
     \right\}\\
     &= \sup_{\QQ\in\MM_i}\sup_{k\geq K}
     \left\{\EE^{\QQ}[\bsv\cdot\bB]
       -\frac{\alpha_i(\QQ)+\rho_i(-\boa_0\cdot
         \bB)+\EE^{\QQ}[-\boa_0\cdot\bB] }{\gamma_k}
     \right\} \\
     &\geq \EE^{\QQ^{\eps}}[\bsv\cdot\bB]
     -\frac{\alpha_i(\QQ^{\eps})+\rho_i(-\boa_0\cdot
       \bB)+\EE^{\QQ^{\eps}}[-\boa_0\cdot\bB] }{\gamma_k} \geq
     S-\eps, \end{split}
\end{equation}
and (\ref{equ:lemma}) follows.

We continue by noting that since
$\QQ_k\in\partial\rho_i(-\boa_k\cdot\bB)$ and $\alpha_i(\cdot)\geq 0$, we have
\[ \tfrac{1}{\gamma_k} \rho_i(-\boa_k\cdot \bB)\leq
\tfrac{1}{\gamma_k}\EE^{\QQ_k}[\boa_k\cdot \bB]\leq
\EE^{\QQ_k}[\bsv\cdot\bB]+\tfrac{1}{\gamma_k} \norm{a_0\cdot\bB},\]
and so
\begin{equation}
   \nonumber
   \begin{split}
 \sup_{\QQ\in\MM_i} \EE^{\QQ_k}[\bsv\cdot\bB]&\leq \lim_{k\to\infty}
\tfrac{\rho_i(-\boa_k\cdot\bB)}{\gamma_k}\leq
\liminf_{k\to\infty}
\EE^{\QQ_k}[\bsv\cdot\bB]\leq \limsup_{k\to\infty}
\EE^{\QQ_k}[\bsv\cdot\bB]\\
&\leq
\sup_{\QQ\in\MM^{\bB}_i}\EE^{\QQ_k}[\bsv\cdot\bB].
   \end{split}
\end{equation}
We conclude the proof by noting that since $\MM_i^{\bB}\subseteq
\MM_i$, all the inequalities above are, in fact, equalities.
\end{proof}

\begin{lemma}\label{lem:gradient of rho}
  For $i\in\{1,2,...,I\}$, let $\bB\in (\linf)^n$ be a bundle of
  claims for which there is no $\boa\in\R^n$ such that
  $\boa\cdot\bB\in\ar_i$. If $\AA_i$ is weak-$\ast$ closed and
  strictly convex with respect to $\bB$, then the expectation
  $\EE^{\QQ}[\boa\cdot \bB]$ is the same for all
  $\QQ\in\partial\rho_i(\boa\cdot\bB)$,
the function $r_i:\R^n\to\R$,
  defined by
  \begin{equation}
   \label{equ:def-r-i}
   \begin{split}
     r_i(\boa)=\rho_i(\boa\cdot\bB)
   \end{split}
\end{equation}
is continuously differentiable and
\[\nabla r_i(\boa)=\EE^{\QQ}[-\bB],\text{ for any }
\QQ\in
\partial\rho_i(\boa\cdot\bB)
.\]
\end{lemma}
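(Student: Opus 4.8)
The plan is to treat $r_i$ as a finite-valued convex function on $\R^{n}$ and to reduce all three assertions to the single statement that the subdifferential $\partial r_i(\boa)$ is a one-point set.

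Since $\boa\mapsto\boa\cdot\bB$ is linear into $\linf$ and $\rho_i$ is convex and finite, $r_i(\boa)=\rho_i(\boa\cdot\bB)$ is convex and finite on all of $\R^{n}$; hence it is continuous, locally Lipschitz, subdifferentiable at every point with $\partial r_i(\boa)$ nonempty, convex and compact, and it is differentiable at $\boa$ exactly when $\partial r_i(\boa)$ is a singleton (in which case $\nabla r_i(\boa)$ is that point). By the Remark following Definition~\ref{def:strict-wrt-B}, the hypothesis on $\AA_i$ is precisely strict convexity of $r_i$; note in passing that this already forbids $\boa\neq 0$ with $\boa\cdot\bB\in\ar_i$, since by replication invariance (Corollary~\ref{cor:repli-invariance}) such a $\boa$ would make $r_i$ affine along the line $\R\boa$. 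Moreover, $\partial\rho_i(\boa\cdot\bB)\neq\emptyset$ by the robust--representation result recalled in Section~\ref{sec:setup}, and for any $\QQ\in\partial\rho_i(\boa\cdot\bB)$, subtracting the subgradient inequality $r_i(\boldsymbol{\delta})\ge\EE^{\QQ}[-\boldsymbol{\delta}\cdot\bB]-\alpha_i(\QQ)$ from the identity $r_i(\boa)=\EE^{\QQ}[-\boa\cdot\bB]-\alpha_i(\QQ)$ shows at once that $\EE^{\QQ}[-\bB]\in\partial r_i(\boa)$. Thus $\{\EE^{\QQ}[-\bB]:\QQ\in\partial\rho_i(\boa\cdot\bB)\}\subseteq\partial r_i(\boa)$, so each such vector is a candidate for $\nabla r_i(\boa)$.

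For the opposite inclusion I would compute the one-sided directional derivatives. Fixing $\bsv\in\R^{n}$ and choosing, for each $t>0$, a maximizer $\QQ_t\in\partial\rho_i((\boa+t\bsv)\cdot\bB)$, one gets $t^{-1}\big(r_i(\boa+t\bsv)-r_i(\boa)\big)\le-\EE^{\QQ_t}[\bsv\cdot\bB]$. Since every $\QQ\in\MM_i$ is a probability measure, the vectors $\EE^{\QQ_t}[\bB]$ stay in a fixed bounded subset of $\R^{n}$; letting $t\downarrow 0$ along a suitable sequence and using weak-$\ast$ compactness of the unit ball of $\linfd$ together with weak-$\ast$ lower semicontinuity of $\alpha_i$ (valid because $\AA_i$ is weak-$\ast$ closed), a short argument in the spirit of Lemma~\ref{lem:sup_sets} shows that the finitely many expectations $\EE^{\QQ_t}[B_1],\dots,\EE^{\QQ_t}[B_n]$ pass to the limit and that the limit is attained at some $\QQ\in\partial\rho_i(\boa\cdot\bB)$, with no relevant mass escaping to a purely finitely additive part. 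It follows that $r_i'(\boa;\bsv)=\sup\{\EE^{\QQ}[-\bsv\cdot\bB]:\QQ\in\partial\rho_i(\boa\cdot\bB)\}$ for every $\bsv$, i.e. $\partial r_i(\boa)=\overline{\conv}\{\EE^{\QQ}[-\bB]:\QQ\in\partial\rho_i(\boa\cdot\bB)\}$. Granted this, $r_i$ is differentiable at $\boa$ iff $\EE^{\QQ}[\bB]$ is independent of $\QQ\in\partial\rho_i(\boa\cdot\bB)$ --- which gives the constancy statement for $\EE^{\QQ}[\boa\cdot\bB]$ and the gradient formula $\nabla r_i(\boa)=\EE^{\QQ}[-\bB]$ --- and continuous differentiability follows either from the standard fact that a finite convex function on $\R^{n}$ that is differentiable everywhere is automatically $C^{1}$, or directly from the same cluster-point argument applied along $\boa_k\to\boa$.

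The crux is therefore to show that $\partial r_i(\boa)$ is a singleton. Suppose not: by the previous step there are $\QQ_1,\QQ_2\in\partial\rho_i(\boa\cdot\bB)$ with $\bsp_1:=\EE^{\QQ_1}[\bB]\neq\EE^{\QQ_2}[\bB]=:\bsp_2$, so the whole segment $[\QQ_1,\QQ_2]$ lies in $\partial\rho_i(\boa\cdot\bB)$ and $A_0:=\boa\cdot\bB+r_i(\boa)$, which belongs to $\AA_i$ (by weak-$\ast$ closedness, $\AA_i=\{\rho_i\le 0\}$), realizes $\alpha_i(\QQ)$ for every $\QQ$ on that segment. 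Put $\bsv=\bsp_1-\bsp_2\neq 0$; since $\AA_i(\bB)=\epi r_i$ (cf.\ \eqref{equ:AA(B)}), the points $(\boa+\bsv,r_i(\boa+\bsv))$ and $(\boa-\bsv,r_i(\boa-\bsv))$ lie in $\AA_i(\bB)$ and have distinct first coordinates, so strict convexity of $\AA_i$ with respect to $\bB$, applied with $\lambda=\tfrac12$, yields $E\in\linf_+$ and $\QQ_*\in\partial\rho_i(\boa\cdot\bB)$ with $\QQ_*(E>0)>0$ and $\boa\cdot\bB+\tfrac12\big(r_i(\boa+\bsv)+r_i(\boa-\bsv)\big)-E\in\AA_i$. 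Testing this acceptable position against $\QQ_*$ in the robust representation, and simultaneously bounding $r_i(\boa\pm\bsv)$ from above via maximizers at $\boa\pm\bsv$ and from below via $\QQ_1,\QQ_2$ at $\boa$, should force the strictly positive slack $\EE^{\QQ_*}[E]$ to coexist with an already-saturated convexity gap in the direction $\bsv$, a contradiction. I expect this last step to be the main obstacle: strict convexity of $r_i$ by itself does \emph{not} imply differentiability (a strictly convex function on $\R$ may have corners), so the argument must genuinely use the extra acceptable room $E$ supplied by Definition~\ref{def:strict-wrt-B} in tandem with the representation identity $\partial r_i(\boa)=\overline{\conv}\{\EE^{\QQ}[-\bB]:\QQ\in\partial\rho_i(\boa\cdot\bB)\}$ from the previous step; the weak-$\ast$ bookkeeping there (ruling out escape of mass to a finitely additive part) is the other delicate point.
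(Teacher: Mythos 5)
Your reduction to showing that the finite convex function $r_i$ has a one-point subdifferential is the same starting point as the paper (which invokes Proposition I.5.3 of Ekeland--T\'emam), and the easy inclusion $\{\EE^{\QQ}[-\bB]:\QQ\in\partial\rho_i(\boa\cdot\bB)\}\subseteq\partial r_i(\boa)$ is correct. But both of the substantive steps you outline are left open, and the mechanisms you sketch for closing them do not work as stated. First, the identity $\partial r_i(\boa)=\overline{\conv}\{\EE^{\QQ}[-\bB]:\QQ\in\partial\rho_i(\boa\cdot\bB)\}$ requires passing a sequence of maximizers $\QQ_t\in\partial\rho_i((\boa+t\bsv)\cdot\bB)$ to a limit \emph{inside} $\partial\rho_i(\boa\cdot\bB)$; the weak-$\ast$ cluster points of such a sequence in $\linfd$ are a priori only finitely additive, and Lemma~\ref{lem:sup_sets} is of no help here because its proof relies entirely on the normalization by $\gamma_k\to\infty$, which kills the penalty term --- there is no analogous normalization in the local regime $t\downarrow 0$. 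The paper sidesteps this entirely: it uses Lemma~\ref{lem:sup_sets} only to show that a putative subgradient $\boa^{\ast}$ with $-\boa^{\ast}\notin\mathcal{P}_i(\bB)$ would violate the componentwise slope bounds $\inf_{\QQ\in\MM_i}\EE^{\QQ}[-B_l]\le a^{\ast}_l\le\sup_{\QQ\in\MM_i}\EE^{\QQ}[-B_l]$, and never needs the full convex-hull representation.

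Second, and more seriously, your proposed final contradiction does not materialize. Applying Definition~\ref{def:strict-wrt-B} with $\lambda=\tfrac12$ to the epigraph points $(\boa\pm\bsv,r_i(\boa\pm\bsv))$ produces $E\in\linf_+$ and $\QQ_*\in\partial\rho_i(\boa\cdot\bB)$ with $\boa\cdot\bB+\tfrac12\bigl(r_i(\boa+\bsv)+r_i(\boa-\bsv)\bigr)-E\in\AA_i$; testing this against $\QQ_*$ in the robust representation yields exactly $\tfrac12\bigl(r_i(\boa+\bsv)+r_i(\boa-\bsv)\bigr)\ge r_i(\boa)+\EE^{\QQ_*}[E]>r_i(\boa)$, i.e.\ strict midpoint convexity of $r_i$ --- which, as you yourself observe, is perfectly compatible with $\partial r_i(\boa)$ containing two points. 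The ``extra room'' $E$ never interacts with the direction $\bsv$ in a way that excludes a kink. The paper's actual mechanism is different: if a subgradient of $r_i$ at $\boa$ has the form $\EE^{\hat\QQ}[-\bB]$ with $\hat\QQ\in\partial\rho_i(\hat{\bsdel}\cdot\bB)$ for some $\hat{\bsdel}\neq\boa$, then combining the subgradient inequality (tested at $\bsdel=\hat{\bsdel}$) with the representation inequality $\rho_i(\boa\cdot\bB)\ge\EE^{\hat\QQ}[-\boa\cdot\bB]-\alpha_i(\hat\QQ)$ forces $\hat\QQ\in\partial\rho_i(\boa\cdot\bB)$ as well, whence $r_i$ is \emph{affine} on the segment $[\boa,\hat{\bsdel}]$ --- and affinity on a nondegenerate segment is precisely what strict convexity with respect to $\bB$ forbids. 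In short, the contradiction must come from a maximizer shared between two distinct points $\boa$ and $\hat{\bsdel}$, not from two distinct barycenters at the single point $\boa$. Until you either prove the convex-hull identity with genuinely countably additive limits or replace your last step by an argument of this ``shared maximizer $\Rightarrow$ affine segment'' type, the proof is incomplete.
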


\begin{proof}
  Thanks to convexity of $r_i$ and Proposition I.5.3 in
  \cite{EkeTem99}, it will be enough to show that $\EE^{\QQ}[-\bB]$ is
  the unique subgradient of $r_i$ at $\boa$ for any
  $\QQ\in \partial\rho_i(\boa\cdot\bB)$.  To proceed, we suppose that
  $\boa^{\ast}\in\R^n$ satisfies $r_i(\bsdel)\geq
  r_i(\boa)-\boa^{\ast}\cdot(\boa-\bsdel)$, i.e.
\begin{equation}\label{equ:subgradient}
\rho_i(\boldsymbol{\delta}\cdot\bB)\geq\rho_i(\boa\cdot\bB)-\boa^{\ast}\cdot(\boa-\boldsymbol{\delta}),\text{
 }
\end{equation}
for all $\bsdel\in\R^n$, and that $\boa^{\ast}\neq \EE^{\QQ}[-\bB]$. 
Consider, first, the case when $-\boa^{\ast}\in\mathcal{P}_i(\bB)$, i.e., when there
exists $\hat{\boldsymbol{\delta}}\in\R^n$ such that
$\hat{\boldsymbol{\delta}}\neq\boa$ and
$\boa^{\ast}=\EE^{\hat{\QQ}}[-\bB]$ for some
$\hat{\QQ}\in\partial\rho_i(\hat{\bsdel})$. If we substitue
$\hat{\boldsymbol{\delta}}$ for ${\boldsymbol{\delta}}$ in
(\ref{equ:subgradient}) we get
$$\rho_i(\hat{\bsdel}\cdot\bB)+\EE^{\hat{\QQ}}[\hat{\bsdel}\cdot\bB]
\geq\rho_i(\boa\cdot\bB)+\EE^{\hat{\QQ}}[\boa\cdot\bB].$$ Note,
however, that
\begin{eqnarray*}
  \rho_i(\boa\cdot\bB) & \geq & \EE^{\hat{\QQ}}[-\boa\cdot\bB]-\alpha_i(\hat{\QQ}) =\EE^{\hat{\QQ}}[-\boa\cdot\bB]+\rho_i(\hat{\bsdel}\cdot\bB)+
  \EE^{\hat{\QQ}}[\hat{\boldsymbol{\delta}}\cdot\bB],
\end{eqnarray*}
and the equality holds if and only if
$\hat{\QQ}\in\partial\rho_i(\boa\cdot\bB)$.
This
implies that
$$\frac{1}{2}(\rho_i(\boa\cdot\bB)+\rho_i(\hat{\boldsymbol{\delta}}\cdot\bB))=
\rho_i \left(\frac{(\hat{\boldsymbol{\delta}}+\boa)\cdot\bB}{2}\right),$$
which contradicts the assumption of strict convexity with respect to $\bB$.

It is left to show that if
$\boa^{\ast}=(a^{\ast}_1,\dots,a^{\ast}_n)\in\R^n$ satisfies
(\ref{equ:subgradient}), then $-\boa^{\ast}\in\mathcal{P}_i(\bB)$. We
argue by contradiction and assume that this is not the case. By Lemma
\ref{lem:sup_sets} this means that there exists a component
$l\in\{1,2,...,n\}$ such that
either $a^{\ast}_l\geq\underset{\QQ\in\MM_i}{\sup}\{\EE^{\QQ}[-B_l]\}$ or
$a^{\ast}_l\leq\underset{\QQ\in\MM_i}{\inf}\{\EE^{\QQ}[-B_l]\}$.
We can assume without loss of generality that the former holds and that
$l=1$. The inequality (\ref{equ:subgradient}), in which
${\boa^{1}}=(a_1+1,a_2,a_3,...,a_n)$ is substituted for $\bsdel$,
implies that
$$\rho_i({\boa^{1}}\cdot\bB)+\underset{\QQ\in\MM_i}{\inf}
\EE^{\QQ}[B_1]\geq\rho_i(\boa\cdot\bB).$$  On the
other hand, for $\QQ^{1}\in\partial\rho_i(\boa^{1}\cdot\bB)$,
we have
\begin{eqnarray*}
  \rho_i(\boa\cdot\bB) & \geq & \EE^{\QQ^{1}}[-\boa\cdot\bB]-\alpha_i(\QQ^{1})
  =\rho_i({\boa^1}\cdot\bB)+\EE^{\QQ^{1}}[B_1]  >
  \rho_i({\boa^1}\cdot\bB)+\underset{\QQ\in\MM_i}{\inf}\{\EE^{\QQ}[B_1]\},
\end{eqnarray*}
a contradiction.
\end{proof}

\begin{remark}\label{rem:Z(p) is a function}
  If for $\bB\in (\linf)^n$ there is no $\boa\in\R^n$ such that
  $\boa\cdot\bB\in\ar_i$ and if $\AA_i$ is strictly convex with
  respect to $\bB$, the demand correspondence is $Z_i(\bsp)$ is
  non-empty only when $\bsp\in\sP_i(\bB)$.  Indeed,
  thanks to its definition as a minimizer of a differentiable convex
  function, the set $Z_i(\bsp)$, consists of the solutions $\boa$, of
  the equation
\begin{equation}
   \label{equ:qqq}
   \begin{split}
   \nabla r_i(\boa)=\bsp,
   \end{split}
\end{equation}
 where $r_i$ is defined in
  (\ref{equ:def-r-i}) above. Lemma \ref{lem:gradient of rho}
  states, however, that $\nabla r_i(\boa)\in\sP_i(\bB)$. Moreover,
  when $\bsp\in\sP_i(\bB)$, $Z_i(\bsp)$ is a singleton; this follows
  directly from strict convexity of $r_i$. Finally, it follows easily from
(\ref{equ:qqq}) that $Z_i$ is an injection on $\sP_i(\bB)$.
\end{remark}

\begin{proposition}\label{pro:Z_i-is-decreasing}
Let $i\in\{1,2,...,I\}$ and $\bB\in (\linf)^n$ be a bundle
for which there is no $\boa\in\R^n$ such that
$\boa\cdot\bB\in\ar_i$. If $\AA_i$ is weak-$\ast$ closed and strictly
convex with respect to $\bB$, then the demand function $Z_i$ is
continuous in $\mathcal{P}_i(\bB)$ and satisfies the monotonicity property
$$(Z_i(\bsp_1)-Z_i(\bsp_2))\cdot(\bsp_1-\bsp_2)<0,$$ for every
$\bsp_1,\bsp_2\in\mathcal{P}_i(\bB)$ with $\bsp_1\neq\bsp_2$.
\end{proposition}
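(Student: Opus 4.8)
The plan is to recognize that, under the stated hypotheses, the demand $Z_i$ is nothing but the sign-reversed inverse of the gradient of the strictly convex, $C^1$ function $r_i(\boa)=\rho_i(\boa\cdot\bB)$ of Lemma~\ref{lem:gradient of rho}, and to read both assertions off from that structure. As a reduction, cash-invariance of $\rho_i$ gives $\rho_i(\boa\cdot\bB-\boa\cdot\bsp)=r_i(\boa)+\boa\cdot\bsp$ for all $\boa,\bsp\in\R^n$, so $Z_i(\bsp)$ is the minimizer set of the convex map $\boa\mapsto r_i(\boa)+\boa\cdot\bsp$. By Lemma~\ref{lem:gradient of rho} and the remark following Definition~\ref{def:strict-wrt-B}, $r_i$ is continuously differentiable and strictly convex, so this set is $\{\boa:\nabla r_i(\boa)=-\bsp\}$, which is a singleton whenever it is non-empty; moreover, since Lemma~\ref{lem:gradient of rho} identifies $\nabla r_i(\boa)=\EE^{\QQ}[-\bB]$ for any $\QQ\in\partial\rho_i(\boa\cdot\bB)$, the set on which $Z_i$ is defined is exactly $\sP_i(\bB)=-\nabla r_i(\R^n)$, and there $Z_i(\bsp)=(\nabla r_i)^{-1}(-\bsp)$.

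The monotonicity is then immediate. For $\bsp_1\neq\bsp_2$ in $\sP_i(\bB)$ put $\boa_j=Z_i(\bsp_j)$, so $\nabla r_i(\boa_j)=-\bsp_j$ and, by injectivity of $\nabla r_i$, $\boa_1\neq\boa_2$. Strict convexity of $r_i$ gives the strict inequalities
\[
r_i(\boa_2)>r_i(\boa_1)+\nabla r_i(\boa_1)\cdot(\boa_2-\boa_1),\qquad
r_i(\boa_1)>r_i(\boa_2)+\nabla r_i(\boa_2)\cdot(\boa_1-\boa_2);
\]
adding them and cancelling $r_i(\boa_1)+r_i(\boa_2)$ yields $(\nabla r_i(\boa_1)-\nabla r_i(\boa_2))\cdot(\boa_1-\boa_2)>0$, i.e. $(\bsp_2-\bsp_1)\cdot(Z_i(\bsp_1)-Z_i(\bsp_2))>0$, which is the asserted inequality $(Z_i(\bsp_1)-Z_i(\bsp_2))\cdot(\bsp_1-\bsp_2)<0$.

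For continuity I would use the standard fact that the gradient of a differentiable, strictly convex function $\R^n\to\R$ is a homeomorphism onto its (necessarily open) image --- equivalently, that a continuous injection $\R^n\to\R^n$ is an open map by Brouwer's invariance-of-domain theorem (see, e.g., \cite{RocWet98}). Thus $\nabla r_i$ is a homeomorphism of $\R^n$ onto $\nabla r_i(\R^n)$, which by Lemma~\ref{lem:gradient of rho} equals $\{\EE^{\QQ}[-\bB]:\QQ\in\MM_i^{\bB}\}=-\sP_i(\bB)$; hence $(\nabla r_i)^{-1}$ is continuous on $-\sP_i(\bB)$ and $Z_i(\bsp)=(\nabla r_i)^{-1}(-\bsp)$ is continuous on $\sP_i(\bB)$. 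I expect continuity to be the only real obstacle: the delicate point hidden in the homeomorphism statement is the coercivity of $\nabla r_i$, i.e. that $Z_i(\bsp_k)$ cannot run off to infinity as $\bsp_k\to\bsp\in\sP_i(\bB)$.

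If one prefers a self-contained argument not invoking invariance of domain, this coercivity can be extracted from Lemma~\ref{lem:sup_sets}: were some subsequence to satisfy $|Z_i(\bsp_k)|\to\infty$ with $Z_i(\bsp_k)/|Z_i(\bsp_k)|\to\bsv$ for a unit vector $\bsv$, then the bounded sequence $\nabla r_i(Z_i(\bsp_k))=-\bsp_k\to-\bsp$ would have to be reconciled with the ray-asymptotics $\nabla r_i(t\bsv)\cdot\bsv\to\sup_{\QQ\in\MM_i}\EE^{\QQ}[-\bsv\cdot\bB]$ supplied by Lemma~\ref{lem:sup_sets} --- here the standing hypothesis that no $\boa\in\R^n$ makes $\boa\cdot\bB$ replicable is exactly what prevents these asymptotic slopes from degenerating --- and a short computation shows this cannot occur, so $\{Z_i(\bsp_k)\}$ is bounded. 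Once boundedness is known, every subsequential limit $\boa^{\ast}$ of $Z_i(\bsp_k)$ satisfies $\nabla r_i(\boa^{\ast})=-\bsp$ by continuity of $\nabla r_i$, hence equals $Z_i(\bsp)$ by uniqueness, whence $Z_i(\bsp_k)\to Z_i(\bsp)$.
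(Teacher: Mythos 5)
Your proof is correct. The monotonicity part is essentially the paper's own argument in different clothing: the paper adds the two strict-minimality inequalities $\rho_i(Z_i(\bsp_j)\cdot\bB)+Z_i(\bsp_j)\cdot\bsp_j<\rho_i(\boa\cdot\bB)+\boa\cdot\bsp_j$, while you add the two strict-convexity gradient inequalities for $r_i$; via the first-order condition $\nabla r_i(Z_i(\bsp))=-\bsp$ these are the same computation, and your sign bookkeeping is the consistent one (the paper's Remark preceding the proposition writes $\nabla r_i(\boa)=\bsp$, but its own use in Theorem \ref{thm:stability} confirms $\nabla r_i(\hat{\boa}_i\cdot\bB)=-\hat{\bsp}$). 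Where you genuinely diverge is continuity: the paper disposes of it with a one-line appeal to Berge's Maximum Theorem, which as usually stated requires a compact-valued constraint correspondence and therefore implicitly needs exactly the boundedness of $Z_i(\bsp_k)$ along convergent price sequences; you instead identify $Z_i=(\nabla r_i)^{-1}\circ(-\mathrm{id})$ and invoke invariance of domain to get that $\nabla r_i$ is a homeomorphism onto its open image $-\mathcal{P}_i(\bB)$, with an explicit coercivity fallback via Lemma \ref{lem:sup_sets}. Your route buys a fully explicit treatment of the only delicate point (demand not escaping to infinity), at the cost of importing Brouwer's theorem; the paper's is shorter but leaves the localization implicit. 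One small quibble: invariance of domain itself requires no coercivity of $\nabla r_i$ --- continuity plus injectivity on the open set $\R^n$ already suffice --- so your remark that coercivity is ``hidden in the homeomorphism statement'' is not quite right, though it is harmless since your alternative argument supplies coercivity anyway.
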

\begin{proof}
  The continuity is a direct application of the Berge's Maximum
  Theorem (see, for instance, Theorem 17.31 in \cite{AliBor06}). To
  establish monotonicity, we recall that the properties of $Z_i$ exposed
  in Remark \ref{rem:Z(p) is a function} above imply that
\begin{equation*}
\rho_i(Z_i(\bsp_j)\cdot\bB)+Z_i(\bsp_j)\cdot\bsp_j<\rho_i(\boa\cdot\bB)+\boa\cdot\bsp_j
\end{equation*}
for every $\boa\in\R^n$ with $\boa\neq Z_i(\bsp_j)$, for $j=1,2$.
Since $Z_i(\bsp_1)\neq Z_i(\bsp_2)$, we have \begin{equation*}
  \rho_i(Z_i(\bsp_1)\cdot\bB)-\rho_i(Z_i(\bsp_2)\cdot\bB)+(Z_i(\bsp_1)-Z_i(\bsp_2))\cdot\bsp_1<0
\end{equation*}
and
\begin{equation*}
  \rho_i(Z_i(\bsp_2)\cdot\bB)-\rho_i(Z_i(\bsp_1)\cdot\bB)-(Z_i(\bsp_1)-Z_i(\bsp_2))\cdot\bsp_2<0.
\end{equation*}
The required monotonicity follows when we add the above
inequalities. \end{proof}

\subsection{Equilibrium}
Our goal in this subsection is to prove existence and uniqueness of a
partial-equilibrium price for a given bundle $\bB\in(\linf)^n$.  We
follows the classical paradigm: the price vector $\bsp$ is a {\em
  partial-equilibrium price} of $\bB$, if, when $\bB$ trades at
$\bsp$, demand and supply for each of its components offset each
other, i.e., the market for $\bB$ clears. It should be noted that we
do not require that the agents' positions in the liquid markets clear
as well. A mathematically precise formulation of the above principle,
where we also consider the equilibrium allocation,
is given in the folowing definition:
\begin{definition}\label{def:gequilibrium}
We say that the pair
$(\bsp,\boa)\in\big(\bigcap_{i=1}^{I}\mathcal{P}_i(\bB)\big)\times
\mathbf{F}$ is a \textit{partial-equilibrium price-allocation
(PEPA)}\index{partial equilibrium price allocation}, if
$\boa_i=Z_i(\bsp)$\index{PEPA} for every $1\leq i\leq I$, i.e., if
\begin{equation}\label{equ:pepq}
    \sum_{i=1}^{I}Z_i(\bsp)=\bze\in\R^n.
\end{equation}
\end{definition}
The following two assumptions will be in effect throughout the section:
\begin{assumption}\label{ass:strictly convex}
The acceptance set $\AA_i$ is weak-$\ast$ closed and strictly convex
with respect to $\bB$, for all $i=1,2,...,I$.
\end{assumption}

\begin{assumption}\label{A1}
  For each $\boldsymbol{\delta}\in\R^n\setminus\{\mathbf{0}\}$
\[
\inf_{\QQ\in\MM} \EE^{\QQ}[\boldsymbol{\delta}\cdot\bB]
<
\sup_{\QQ\in\MM} \EE^{\QQ}[\boldsymbol{\delta}\cdot\bB].
\]
\end{assumption}

\begin{remark}
Assumption \ref{A1} implies, in particular, that for every
$i=1,2,...,I$, there is no
$\boldsymbol{\delta}\in\R^n\setminus\{\mathbf{0}\}$ such that
$\boldsymbol{\delta}\cdot\bB\in\mathcal{X}^{\infty}_i$. That
means, further, that there is no
$\boldsymbol{\delta}\in\R^n\setminus\{\mathbf{0}\}$ such
$\boldsymbol{\delta}\cdot\bB\in\ar_i$.
\end{remark}

\begin{remark}
  If Assumptions \ref{ass:strictly convex} and \ref{A1} hold true and
  the agents are in a Pareto optimal configuration then the pair
  $(\bsp,\boa)\in\big(\bigcap_{i=1}^{I}\mathcal{P}_i(\bB)\big)\times
  \mathbf{F}$ is a PEPA if and only if $\boa=\bze$. To see this,
  assume that there exists an
  $\boa\in\mathbf{F}\backslash\{\mathbf{0}\}$ such that $(\bsp,\boa)$
  is a PEPA. Then, by the definition of $Z_i$, we have
  $\rho_i(\boa_i\cdot\bB)+\boa_i\cdot\bsp\leq 0$ for all $i$, and, due
  to the strict convexity of the risk measures, the inequality is
  strict for at least two agents. This implies that $\sum_{i=1}^I
  \rho_i(\boa_i\cdot\bB)<0$, which contradicts the assumption of
  Pareto optimality. The economic interpretation of the above
  statement is clear - the agents will not engage in trade if they are
  already in a configuration which cannot be improved upon.
\end{remark}
The following theorem contains the main result of the section:
\begin{theorem}\label{thm:PEPA}
  Under Assumptions \ref{ass:not empty intersectio}, \ref{ass:strictly
    convex} and \ref{A1}, there exists a unique PEPA
  $(\hat{\bsp},\hat{\boa})\in\big(\bigcap_{i=1}^I\mathcal{P}_i(\bB)\big)\times\mathbf{F}$.
\end{theorem}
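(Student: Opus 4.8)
The plan is to recast Theorem \ref{thm:PEPA} as a strictly convex minimization problem over a linear subspace. Put
\[
\Psi(\boa)=\sum_{i=1}^I r_i(\boa_i),\qquad r_i(\boa_i)=\rho_i(\boa_i\cdot\bB),
\]
for $\boa=(\boa_1,\dots,\boa_I)$, and consider minimizing $\Psi$ over $\mathbf{F}=\{\boa\in\R^{I\times n}:\sum_{i=1}^I\boa_i=\bze\}$. By Assumption \ref{ass:strictly convex} together with Assumption \ref{A1} (which, through the Remark following it, excludes $\boa\cdot\bB\in\ar_i$ whenever $\boa\neq\bze$), Lemma \ref{lem:gradient of rho} and Remark \ref{rem:Z(p) is a function} apply: each $r_i$ is strictly convex and continuously differentiable on $\R^n$, and, using cash-invariance to rewrite $\rho_i(\boa\cdot\bB-\boa\cdot\bsp)=r_i(\boa)+\boa\cdot\bsp$, one has $\boa\in Z_i(\bsp)$ if and only if $\nabla r_i(\boa)=-\bsp$, which in turn forces $\bsp\in\mathcal{P}_i(\bB)$. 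Under the Frobenius inner product on $\R^{I\times n}$, the orthogonal complement $\mathbf{F}^{\perp}$ consists exactly of those matrices all of whose rows coincide; hence $(\hat\bsp,\hat\boa)$ is a PEPA if and only if $\hat\boa\in\mathbf{F}$ and $\nabla r_1(\hat\boa_1)=\dots=\nabla r_I(\hat\boa_I)=-\hat\bsp$, equivalently, if and only if $\hat\boa$ is a critical point of $\Psi|_{\mathbf{F}}$ and $\hat\bsp$ equals minus this common gradient (so that $\hat\bsp\in\bigcap_{i=1}^I\mathcal{P}_i(\bB)$ automatically, by Lemma \ref{lem:gradient of rho}). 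Since, by convexity, the critical points of $\Psi|_{\mathbf{F}}$ are precisely its minimizers, it remains only to prove that $\Psi|_{\mathbf{F}}$ has a unique minimizer.

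Granting existence, uniqueness is immediate, because $\Psi$ is strictly convex on $\R^{I\times n}$ (a sum of strictly convex functions of the separate blocks $\boa_i$) and therefore so is its restriction to $\mathbf{F}$. For existence I would combine the Weierstrass theorem with the standard fact that a finite convex (hence continuous) function on a Euclidean space attains its infimum as soon as its recession function is strictly positive away from the origin; thus the real task is to show that the recession function of $\Psi|_{\mathbf{F}}$ is positive on $\mathbf{F}\setminus\{\bze\}$.

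For a direction $\bsw\in\R^n$, convexity of $r_i$ and $r_i(\bze)=\rho_i(0)=0$ make $t\mapsto t^{-1}r_i(t\bsw)$ nondecreasing, so the recession value $r_i^{\infty}(\bsw)=\lim_{t\to\infty}t^{-1}\rho_i(t\,\bsw\cdot\bB)$ exists, and the computations in the proof of Lemma \ref{lem:sup_sets} (specialized to $\boa_0=\bze$, direction $-\bsw$, and $\gamma_k=k$) identify it as $\sup_{\QQ\in\MM_i}\EE^{\QQ}[-\bsw\cdot\bB]$; hence the recession function of $\Psi$ is $\Psi^{\infty}(\bsv)=\sum_{i=1}^I\sup_{\QQ\in\MM_i}\EE^{\QQ}[-\bsv_i\cdot\bB]$. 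Now fix $\bsv\in\mathbf{F}\setminus\{\bze\}$ and $\QQ_0\in\MM$ (nonempty by Assumption \ref{ass:not empty intersectio}). Since $\MM=\bigcap_i\MM_i$, each supremand dominates $\EE^{\QQ_0}[-\bsv_i\cdot\bB]$, so, using $\sum_i\bsv_i=\bze$,
\[
\Psi^{\infty}(\bsv)\ \geq\ \sum_{i=1}^I\EE^{\QQ_0}[-\bsv_i\cdot\bB]\ =\ \EE^{\QQ_0}\Big[-\Big(\sum_{i=1}^I\bsv_i\Big)\cdot\bB\Big]\ =\ 0.
\]
Were this an equality, every nonnegative term $\sup_{\QQ\in\MM_i}\EE^{\QQ}[-\bsv_i\cdot\bB]-\EE^{\QQ_0}[-\bsv_i\cdot\bB]$ would vanish; as $\QQ_0\in\MM$ was arbitrary, $\QQ\mapsto\EE^{\QQ}[\bsv_i\cdot\bB]$ would be constant on $\MM$ for every $i$. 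But $\bsv\neq\bze$ gives some $\bsv_j\neq\bze$, and Assumption \ref{A1} applied with $\boldsymbol{\delta}=\bsv_j$ asserts $\inf_{\QQ\in\MM}\EE^{\QQ}[\bsv_j\cdot\bB]<\sup_{\QQ\in\MM}\EE^{\QQ}[\bsv_j\cdot\bB]$, a contradiction. Therefore $\Psi^{\infty}>0$ on $\mathbf{F}\setminus\{\bze\}$, $\Psi|_{\mathbf{F}}$ attains its minimum at a (unique) $\hat\boa$, and the reduction of the first paragraph produces the unique PEPA $(\hat\bsp,\hat\boa)$.

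The main obstacle is exactly the computation above: deducing strict positivity of the recession function of the aggregate capital requirement from the non-degeneracy Assumption \ref{A1} together with $\MM\neq\emptyset$; the remainder is bookkeeping with the first-order conditions and routine convex analysis (equivalence of compact sublevel sets with a positive recession function; stability of strict convexity and $C^1$-smoothness under restriction to a subspace). The one technical point I would be careful about is the identification $r_i^{\infty}(\bsw)=\sup_{\QQ\in\MM_i}\EE^{\QQ}[-\bsw\cdot\bB]$ with $\MM_i$ the effective domain of $\alpha_i$ (rather than all of $\MM_a^i$); this is precisely the sandwiching estimate already carried out in the proof of Lemma \ref{lem:sup_sets}, so it needs no additional argument.
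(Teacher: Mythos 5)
Your proof is correct and follows the paper's strategy in all essentials: both reduce the problem to minimizing the aggregate capital requirement $\sum_i\rho_i(\boa_i\cdot\bB)$ over feasible allocations, both read off the equilibrium price from the first-order conditions via Lemma \ref{lem:gradient of rho}, and both obtain existence from a coercivity estimate whose engine is Lemma \ref{lem:sup_sets} combined with Assumptions \ref{ass:not empty intersectio} and \ref{A1}. The packaging differs slightly --- you minimize over the subspace $\mathbf{F}$ and phrase coercivity through the recession function $\Psi^{\infty}$, while the paper eliminates the $I$-th block, works on $\R^{(I-1)\times n}$, and derives a contradiction from constrained minimizers on balls of radius $m$ --- but there is one substantive difference worth recording. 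In its final strict-positivity step the paper bounds the coercivity constant $F$ from below by $\sup_{\QQ\in\MM}\EE^{\QQ}[\bsdel\cdot\bB]-\inf_{\QQ\in\MM}\EE^{\QQ}[\bsdel\cdot\bB]$ for the single aggregate direction $\bsdel=\sum_{j=1}^{I-1}\boa^{(0)}_j$ and invokes Assumption \ref{A1} for that one vector; this degenerates to $F\geq 0$ when $\sum_{j=1}^{I-1}\boa^{(0)}_j=\bze$ while some individual block is nonzero, which can happen for $I\geq 3$. Your argument --- showing that equality in $\Psi^{\infty}(\bsv)\geq 0$ would force $\QQ\mapsto\EE^{\QQ}[\bsv_i\cdot\bB]$ to be constant on $\MM$ for every $i$, and then applying Assumption \ref{A1} to an arbitrary nonzero block $\bsv_j$ --- covers this corner case and is therefore the tighter version of the estimate; otherwise the two proofs are the same.
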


\begin{proof}
We first define the strictly convex function $f:\R^{(I-1)\times
n}\to\R$ by
\begin{equation}\label{equ:function_f}
    f(\boa)=\rho_1(\boa_1\cdot \bB)+\rho_2(\boa_2\cdot \bB)+...+\rho_{I-1}(\boa_{I-1}\cdot \bB)+\rho_{I}((-\sum_{i=1}^{I-1}\boa_i)\cdot \bB).
\end{equation}
If for some $\tilde{\boa}=(\tilde{\boa}_1,\dots,
\tilde{\boa}_{I-1})\in\R^{(I-1)\times n}$ we have $\nabla
f(\tilde{\boa})=\mathbf{0}$, then $\tilde{\boa}$ is the unique
minimizer of $f$. Moreover, such $\tilde{\boa}$
 satisfies
$\nabla\rho_i(\tilde{\boa}_i\cdot\bB)=\nabla\rho_I(-\left(\sum_{i=1}^{I-1}\tilde{\boa}_i\right)\cdot\bB)$,
for every $i=1,2,...,I-1$.
The latter means that for any
$\QQ_i\in\partial\rho_i(\tilde{\boa}_i\cdot\bB)$, $1\leq i \leq I-1$ and any
$\QQ_I\in\partial\rho_i(-(\sum_{i=1}^{I-1}\tilde{\boa}_i)\cdot\bB)$, we have
$$\EE^{\QQ_i}[\bB]=\EE^{\QQ_I}[\bB].$$
Therefore, the price vector $\hat{\bsp}=\EE^{\QQ_i}[\bB]$ satisfies
$Z_i(\hat{\bsp})=\tilde{\boa}_i$ for every $i=1,2,...,I-1$ and
$Z_I(\hat{\bsp})=-\sum_{i=1}^{I-1}\tilde{\boa}_i$.  In other words, if
$\hat{\boa}$ denotes the allocation whose rows are given by
$\hat{\boa}_i=\tilde{\boa}_i$, for $i=1,2,...,I-1$ and
$\hat{\boa}_I=-\sum_{i=1}^{I-1}\tilde{\boa}_i$, the pair
$(\hat{\bsp},\hat{\boa})$ is a partial equilibrium price
allocation. In fact, it is the unique one, since if we assume the
existence of another PEPA
$(\check{\bsp},\check{\boa})\neq(\hat{\bsp},\hat{\boa})$, we get that
$\check{\bsp}=\EE^{\QQ}[\bB]$, for any
$\QQ\in\partial\rho_i(\check{\boa}_i\cdot\bB)$, which, in turn,
implies that $\nabla f(\check{\boa})=\mathbf{0}$. The latter equation
contradicts the uniqueness of the minimizer of the function $f$.

We are left with the task of showing that $\nabla f(\boa)$ has a root,
and assume, per contra, that this is not the case. Then, by the
continuity of $f$, we deduce that for, each $m\in\mathbb{N}$, there
exists $\boa^{(m)}\in \mathbf{D}_m=\{\boa\in\R^{(I-1)\times
  n}:\norm{\boa}_1=\sum_{k=1}^{(I-1)}\sum_{j=1}^n |a_{k,j}| \leq m\}$ such that $f(\boa^{(m)})\leq
f(\boa)$ for all $\boa\in\mathbf{D}_m $. Furthermore, by the strict
convexity of $f$, it follows that $||\boa^{(m)}||_1=m$. Hence, thanks
to the results of, e.g., Chapter 1 in \cite{BorLew00},
a contradiction would be reached if the following coercivity condition held:
\begin{equation}\label{equ:contradiction}
    F=\underset{m\to\infty}{\liminf}\frac{f(\boa^{(m)})}{m}>0.
\end{equation}
By passing to a subsequence (if necessary), we can assume without loss
of generality that the limits
$F=\underset{k\to\infty}{\lim}\frac{f(\boa^{(k)})}{k}\in\R$ and
$\boa^{(0)}_i=\lim \tfrac{\boa^{(k)}_i}{k}\in\R^n$, $i=1,2,\dots,I-1$ exist.
Since
\[
\abs{\frac{\rho_i(\boa_i^{(k)}\cdot\bB)}{k}- \frac{\rho_i(k
    \boa_i^{(0)}\cdot\bB)}{k}}\leq
\norm{\tfrac{\boa_i^{(k)}}{k}-\boa_i^{(0)}}\ \norm{\bB}_{(\linf)^n}\to
0,\] Lemma \ref{lem:sup_sets} implies that
\begin{equation}
\nonumber
   \begin{split}
     \underset{k\to\infty}{\lim}\frac{\rho_i(\boa^{(k)}_i\cdot
       \bB)}{k}&=\sup_{\QQ\in\MM_i} \EE^{\QQ}[\boa_i^{(0)}\cdot\bB],\text{ for $1\leq i\leq I-1$, and }\\
     \underset{k\to\infty}{\lim}\frac{\rho_{I}( -\sum_{j=1}^{I-1}
       \boa^{(k)}_j\cdot \bB)}{k}&= \sup_{\QQ\in\MM_I}
     \EE^{\QQ}[-\sum_{j=1}^{I-1} \boa_j^{(0)}\cdot\bB].
   \end{split}
\end{equation}
Consequently, (\ref{equ:contradiction}) follows from
\begin{equation}
 \nonumber
   \begin{split}
 F&=
\sum_{j=1}^{I-1} \sup_{\QQ\in\MM_j} \EE^{\QQ}[\boa_j^{(0)}\cdot\bB]
+\sup_{\QQ\in\MM_I} \EE^{\QQ}[-\sum_{j=1}^{I-1} \boa_j^{(0)}\cdot\bB]
\\ &\geq
\sup_{\QQ\in\MM} \EE^{\QQ}[ \sum_{j=1}^{I-1} \boa_j^{(0)}\cdot\bB]
-\inf_{\QQ\in\MM} \EE^{\QQ}[\sum_{j=1}^{I-1} \boa_j^{(0)}\cdot\bB]>0,
   \end{split}
\end{equation}
where the strictness of the last inequality follows from
Assumption \ref{A1}.
\end{proof}
\begin{remark}
  It follows from Theorem \ref{thm:PEPA} that the PEPA corresponding
  to a bundle $\bB$ is of the form $(\hat{\bsp},\boa)$ with $\boa\neq
  0$ {\em if and only if} the agents are \underline{not} in a {\em
    constrained Pareto-optimal configuration}; meaning that the
    marginal prices $\EE^{\QQ^i}[\bB]$ are not all equal.  A simple
    consequence of this statement is that, in that case,
    $\agset^{\bB}\neq\set{0}$.
\end{remark}

\bigskip

\section{The well-posedness of the equilibrium pricing}\label{sec:stability}

The exact shape of agents' acceptance sets, which incorporate their
risk preferences, endowments and investment goals, is extremely
difficult to estimate in practice.  It is therefore natural to ask
whether the induced equilibrium pricing is stable with respect small
perturbation in the agents' acceptance sets. To be more precise, we
want to check whether the equilibrium pricing scheme, presented in
section \ref{sec:PEPA}, is a \textit{well-posed problem} in the sense
of Hadamard (see \cite{Had02}), i.e., if its solution exists, is
unique and stable with respect to the input data (the agents'
acceptance sets in this case).  Having solved the problem of existence
and uniqueness (see Theorem \ref{thm:PEPA}), we turn our attention
to the following question: can we specify a convergence (concept)
$\overset{\circledast}{\longrightarrow}$ for $I$-tuples of
the weak-$\ast$ closed acceptance sets $\left(
  \AA_i^{(m)}\right)_{i=1}^I=\left(\AA_1^{(m)},\AA_2^{(m)},...,\AA_I^{(m)}\right)$,
 for which
\begin{equation}\label{question}
\left(\AA^{(m)}_1,\AA^{(m)}_2,...,
\AA^{(m)}_I\right)\overset{\circledast}{\longrightarrow}
\left(\AA_1,\AA_2,...,\AA_I\right)\Longrightarrow
    \left(\hat{\bsp}^{(m)},\hat{\boa}^{(m)}\right)\to(\hat{\bsp},\hat{\boa}),
\end{equation}
where $(\hat{\bsp}^{(m)},\hat{\boa}^{(m)})$ is the PEPA obtained by
the acceptance sets $\left(\AA_i^{(m)}\right)_{i=1}^I$ and
$(\hat{\bsp},\hat{\boa})$ is the corresponding to
$\left(\AA_i\right)_{i=1}^I$ PEPA?

As we shall explain shortly, it turns out that the right notion is related to
 \textit{Kuratowski} convergence
(see Chapter 8 in \cite{Luc06} and Chapter 7 in \cite{RocWet98} for a
further analysis):
\begin{definition}\label{def:Kur-conve}
A sequence of closed subsets $C_m\subseteq\R^l$, $l\in\mathbb{N}$, converges to the subset $C$ in
Kuratowski sense (and we write $C_m\overset{K}{\longrightarrow}C$) if
\begin{equation}\label{equ:Kur-conve}
  \text{Ls }C_m\subseteq C\subseteq \text{Li }C_m,
\end{equation}
where $$\text{Li }C_m=\sets{c\in\R^l}{c=\lim c_k, c_k\in C_k \text{
    eventually}}$$ and
$$\text{Ls }C_m=\sets{c\in\R^l}{c=\lim c_k,
  c_k\in C_{n_k}, n_k \text{ a subsequence of integers}}.$$
\end{definition}
We say that a sequence $\{f_m\}_{m\in\N}$ of lower semi-continuous functions
$f_m:\R^l\to\R$ converges to a function $f$ in the Kuratowski sense (and
we write $f_m\overset{K}{\longrightarrow}f$) if
$\epi(f_m)\overset{K}{\longrightarrow}\epi(f)$. We remind the reader
that the epigraph of a function $f:\R^n\to\R$ is the set
$\epi(f)=\sets{(\boa,c)\in\R^n\times\R}{f(\boa)\leq c}$.
 A characterization of
the Kuratowski convergence for sequences of functions is given by
Theorem 8.6.3 in \cite{Luc06} (see also Proposition 7.2 in
\cite{RocWet98}); $f_m\overset{K}{\longrightarrow}f$ if and only if
the following two conditions hold:
\begin{itemize}
\item [(a)] For every $x\in\R^k$ and every sequence $x_n$ such that
  $x_n\to x$, $\liminf f_n(x_n)\geq f(x)$ and
\item [(b)] For every $x\in\R^k$ there exists a sequence $x_n$ such
  that $x_n\to x$ and $\limsup f_n(x_n)\leq f(x)$.
\end{itemize}
The Kuratowski convergence and its versions for more general
topological spaces have been extensively used in the study of the
well-posedness of a variety of variational problems (see
\cite{RocWet98} for problems in $\R^n$ and \cite{DonZol93} and
\cite{Luc06} for general spaces).

In what follows, for each agent $i$, we consider a sequence of
weak-$\ast$ closed acceptance sets $\AA_i^{(m)}$ and a limiting
weak-$\ast$ closed acceptance set $\AA_i$, all of which satisfy the
axioms \textit{Ax1}-\textit{Ax4}.  The induced convex capital
requirements are denoted by $\rho_i^{(m)}(\cdot)$ and $\rho_i(\cdot)$
respectively, and $\MM_i^{(m)}$ and $\MM_i$ stand for the effective
domains of the corresponding penalty functions, $\alpha_i^{(m)}$ and
$\alpha_i$.

In the effort to show that the Kuratowski convergence allows for a positive answer to our central question, we establish the following auxiliary result:
\begin{lemma}\label{lem:rho is strict convex}
  For a given bundle of claims $\bB$, if
  $\AA_i^{(m)}(\bB)\overset{K}{\longrightarrow}\AA_i(\bB)$ for every
  $i\in\{1,2,...,I\}$, then the sequence of functions
  $\boa\ni\R^n\mapsto\rho_{i}^{(m)}(\boa\cdot\bB)$ converges
  point-wise to the function
  $\boa\ni\R^n\mapsto\rho_{i}(\boa\cdot\bB)$.
\end{lemma}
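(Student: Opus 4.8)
The plan is to recognize $\AA_i(\bB)$ as the epigraph of the finite convex function $r_i:\R^n\to\R$, $r_i(\boa):=\rho_i(\boa\cdot\bB)$, so that the hypothesis $\AA_i^{(m)}(\bB)\overset{K}{\longrightarrow}\AA_i(\bB)$ becomes epi-convergence $r_i^{(m)}\overset{K}{\longrightarrow}r_i$ of the associated functions, and then to upgrade this epi-convergence to genuine pointwise convergence using the fact that convex capital requirements are non-expansive.

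First I would establish the identity $\AA(\bB)=\epi(r)$ for an arbitrary weak-$\ast$ closed acceptance set $\AA$ obeying \textit{Ax1}--\textit{Ax4}, with $r(\boa):=\rho_{\AA}(\boa\cdot\bB)$. This is immediate from two facts already available: cash-invariance, $\rho_{\AA}(\boa\cdot\bB+c)=\rho_{\AA}(\boa\cdot\bB)-c$, and the equivalence $B\in\AA\iff\rho_{\AA}(B)\le 0$, which holds because weak-$\ast$ closedness implies property \eqref{equ:propertyax5} (Proposition~4.7 in \cite{FolSch04}). Chaining the definitions gives $(\boa,c)\in\AA(\bB)\iff\boa\cdot\bB+c\in\AA\iff\rho_{\AA}(\boa\cdot\bB)\le c\iff(\boa,c)\in\epi(r)$. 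As a byproduct $\AA(\bB)$ is a closed (indeed convex) subset of $\R^{n+1}$, being the epigraph of a continuous finite convex function, so the Kuratowski convergence in the statement is well-defined; and, applied to each $\AA_i^{(m)}$ and to $\AA_i$, the identity turns the hypothesis into $\epi(r_i^{(m)})\overset{K}{\longrightarrow}\epi(r_i)$, i.e.\ $r_i^{(m)}\overset{K}{\longrightarrow}r_i$ in the sense defined before the lemma.

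Next I would fix $\boa\in\R^n$ and split the desired convergence into two one-sided bounds using the characterization of Kuratowski convergence of functions recalled just before the lemma. The $\liminf$ bound $\liminf_m r_i^{(m)}(\boa)\ge r_i(\boa)$ is condition~(a) applied to the constant sequence $\boa_m\equiv\boa$. For the $\limsup$ bound I would take, via condition~(b), a sequence $\boa_m\to\boa$ with $\limsup_m r_i^{(m)}(\boa_m)\le r_i(\boa)$, and transfer the estimate back to the fixed point $\boa$ using that each $\rho_i^{(m)}$ is non-expansive for $\norm{\cdot}_{\infty}$ (a consequence of monotonicity and cash-invariance): then $r_i^{(m)}(\boa)\le r_i^{(m)}(\boa_m)+\norm{(\boa-\boa_m)\cdot\bB}_{\infty}$, and the last term tends to $0$, so $\limsup_m r_i^{(m)}(\boa)\le r_i(\boa)$. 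Combining the two bounds gives $r_i^{(m)}(\boa)\to r_i(\boa)$, which is the claim.

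The main obstacle to watch is exactly this last passage: Kuratowski (epi-)convergence does not by itself imply pointwise convergence, and condition~(b) only pins down $r_i^{(m)}$ along an auxiliary sequence, so the uniform Lipschitz property of the maps $\boa\mapsto\rho_i^{(m)}(\boa\cdot\bB)$ — which is special to capital requirements rather than to arbitrary lower semicontinuous functions — is doing the real work. One could instead quote the classical theorem that, for convex functions on $\R^n$ with finite convex limit, epi-convergence is equivalent to locally uniform convergence (Theorem~8.6.3 in \cite{Luc06}, Theorem~7.17 in \cite{RocWet98}), but the elementary argument above is self-contained.
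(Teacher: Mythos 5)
Your proof is correct, and its first half — identifying $\AA_i(\bB)$ with the epigraph of $r_i(\boa)=\rho_i(\boa\cdot\bB)$ so that the hypothesis becomes epi-convergence $r_i^{(m)}\overset{K}{\longrightarrow}r_i$ — is exactly what the paper does (you are in fact slightly more careful, spelling out that the identification $\AA(\bB)=\epi(r)$ uses both cash invariance and the equivalence $B\in\AA\iff\ra(B)\le 0$, the latter requiring the weak-$\ast$ closedness via Proposition~4.7 of F\"ollmer--Schied). Where you diverge is the final step. The paper closes the argument by citing Theorem~7.17 of Rockafellar--Wets: for convex functions on $\R^n$, epi-convergence to a convex, lower semicontinuous limit whose effective domain has nonempty interior implies pointwise convergence; the needed hypotheses are then checked by noting that $r_i$ is finite, convex and lower semicontinuous because $\rho_i$ is a $\sigma(\linf,\lone)$-lower semicontinuous convex risk measure. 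You instead give a self-contained two-sided estimate: the $\liminf$ bound from condition~(a) along the constant sequence, and the $\limsup$ bound by transporting condition~(b)'s auxiliary sequence back to the fixed point via the non-expansiveness $\abs{r_i^{(m)}(\boa)-r_i^{(m)}(\boa_m)}\le\norm{(\boa-\boa_m)\cdot\bB}_{\infty}$, which follows from monotonicity and cash invariance and holds uniformly in $m$. Both routes are valid; the paper's buys brevity by invoking general convex-analytic machinery, while yours trades the citation for an elementary argument that exploits the uniform Lipschitz property special to capital requirements and, as a bonus, does not need the nonempty-interior hypothesis of the quoted theorem. Since you also acknowledge the classical theorem as an alternative, the two proofs are fully reconcilable.
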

\begin{proof}
By \eqref{risk measure definition}, for a bundle $\bB\in(\linf)^n$,
the set $\AA(\bB)$
is the \textit{epigraph} of the function
$\R^n\ni\boa\mapsto\ra(\boa\cdot\bB)\in\R$.
Hence, $\AA_i^{(m)}(\bB)\overset{K}{\longrightarrow}\AA_i(\bB)$
  is equivalent to the Kuratowski convergence of the sequence of
  functions $\boa\ni\R^n\mapsto\rho_{i}^{(m)}(\boa\cdot\bB)$.

  It is shown in \cite{RocWet98}, Theorem 7.17, that for any sequence
  $\left(f_m\right)_{m\in\mathbb{N}}$ of convex functions on $\R^n$,
  $f_m\overset{K}{\longrightarrow}f$ implies that $f_m\rightarrow f$
  point-wise in $R^n$, provided that $f$ is a convex, lower
  semi-continuous function  and its effective domain has
  non-empty interior.
 It is, therefore,  enough to observe that the function
  $\boa\ni\R^n\mapsto\rho_{i}(\boa\cdot\bB)$ is convex and lower
  semi-continuous in $\R^n$, since $\rho_i:\linf\to\R$ is convex and
  $\sigma(\linf,\lone)$-lower semi-continuous risk measure.
\end{proof}
As the reader can easily check, Kuratowski convergence will not, in
general, preserve strict convexity. In order to guarantee that the
limiting acceptance set $\AA_i$ is strictly convex with respect to the
fixed bundle of claims $\bB$, we must assume that the strict convexity
of $\AA_i^{(m)}$ with respect to $\bB$ satisfies a certain uniformity
criterion.
\begin{definition}\label{def:uni.str.convex}
  A sequence of acceptance sets $(\AA^{(m)})_{m\in\mathbb{N}}$ is
  \textit{uniformly strictly convex} with respect to a bundle
  $\bB\in(\linf)^n$, if for every
  $(\boa,c),(\boldsymbol{\delta},k)\in\AA^{(m)}(\bB)$ such that
  $\boa\neq\boldsymbol{\delta}$, the following statement holds:

  for every $\lambda\in (0,1)$ there exists a random variable
  $E\in\linf_+$, such that $\QQ[E>0]>0$, for some $\QQ\in\partial
  \rho_i^{(m)}{((\lambda\boa+(1-\lambda)\boldsymbol{\delta})\cdot\bB)}$
  and $$\lambda(\boa\cdot\bB+c)+
  (1-\lambda)(\boldsymbol{\delta}\cdot\bB+k)-E\in\AA^{(m)},$$
for all $m\in\N$.
\end{definition}

  It follows from the definition of Kuratowski convergence, that if
  $(\AA_i^{(m)})_{m\in\mathbb{N}}$ is uniformly strictly convex with
  respect to $\bB$ and
  $\AA_i^{(m)}(\bB)\overset{K}{\longrightarrow}\AA(\bB)$, then $\AA_i$ is
  also strictly convex with respect to $\bB$.  This fact and
  Lemma \ref{lem:gradient of rho} imply, in particular, that the
  function $\boa\ni\R^n\mapsto\rho_{i}(\boa\cdot\bB)$ is strictly
  convex and  differentiable on $\R^n$, if we further assume
  that $\bB$ is not redundant, i.e., there is no
  $\boldsymbol{\delta}\in\R^n$, such that
  $\boldsymbol{\delta}\cdot\bB\in\ar_i$.

\begin{assumption}\label{ass:uniform convexity}
The sequence  $\{\AA_i^{(m)}\}_{m\in\N}$ of acceptance sets is uniformly
strictly convex with respect to the bundle $\bB$.
\end{assumption}

\begin{assumption}\label{ass:stability}
 $\emptyset\neq \bigcap_{i=1}^I\MM_i^{(m)}\subseteq\MM$, for all $m\in\N$.
\end{assumption}

\begin{assumption}\label{ass:PEPA for each m}
For each $m\in\N$ and $\boldsymbol{\delta}\in\R^n\setminus\{\mathbf{0}\}$
\[
\inf_{\QQ\in\MM^{(m)}} \EE^{\QQ}[\boldsymbol{\delta}\cdot\bB]
<
\sup_{\QQ\in\MM^{(m)}} \EE^{\QQ}[\boldsymbol{\delta}\cdot\bB].
\]
\end{assumption}

It follows from Theorem \ref{thm:PEPA} that under the Assumptions
\ref{ass:uniform convexity}, \ref{ass:stability} and \ref{ass:PEPA for
  each m} there exists a unique PEPA,
$(\hat{\bsp}^{(m)},\hat{\boa}^{(m)})$, for every
$m\in\mathbb{N}$. Furthermore, the induced strict convexity of $\AA_i$
with respect to $\bB$ means that the conditions for existence and
uniqueness of PEPA hold even for the limiting risk measures
$\rho_i$. Moreover, it turns out that those same conditions guarantee
that the problem is well posed:
\begin{theorem}\label{thm:stability}
  Under Assumptions \ref{ass:uniform convexity}, \ref{ass:stability} and \ref{ass:PEPA for each m},
  for each $m\in\mathbb{N}$ there exists
  a unique PEPA $(\hat{\bsp}^{(m)},\hat{\boa}^{(m)})$ for agents with
  acceptance sets $\left(\AA_i^{(m)}\right)_{i=1}^I$. Also, the
  convergence $$\AA_i^{(m)}(\bB)\overset{K}{\longrightarrow}\AA_i(\bB)$$
  for every $i\in\{1,2,...,I\}$ implies that
\begin{itemize}
\item [(i)] There exists a unique PEPA $(\hat{\bsp},\hat{\boa})$ for
  agents with acceptance sets $\left(\AA_i\right)_{i=1}^I$ and
\item [(ii)]
  $(\hat{\bsp}^{(m)},\hat{\boa}^{(m)})\longrightarrow(\hat{\bsp},\hat{\boa})$
  in $\R^n\times\R^{n\times I}$.
\end{itemize}
\end{theorem}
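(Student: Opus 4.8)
The plan is to reduce the convergence statement to a variational-stability result about the convex functions $f^{(m)}(\boa)=\sum_{i=1}^{I-1}\rho_i^{(m)}(\boa_i\cdot\bB)+\rho_I^{(m)}\big(-(\sum_{i=1}^{I-1}\boa_i)\cdot\bB\big)$ whose unique minimizers are the allocation-parts $\hat{\boa}^{(m)}$ (with the last row determined by feasibility), exactly as in the proof of Theorem~\ref{thm:PEPA}. First I would record that Assumptions~\ref{ass:uniform convexity}, \ref{ass:stability} and \ref{ass:PEPA for each m} let Theorem~\ref{thm:PEPA} be applied for every $m$, yielding a unique PEPA $(\hat{\bsp}^{(m)},\hat{\boa}^{(m)})$; and that, by the remarks following Definition~\ref{def:uni.str.convex}, the Kuratowski limit $\AA_i$ is itself strictly convex with respect to $\bB$ and $\bB$ is non-redundant for $\AA_i$ (here Assumption~\ref{ass:stability} gives $\MM\supseteq\bigcap_i\MM_i^{(m)}\neq\emptyset$ so Assumption~\ref{ass:not empty intersectio} holds in the limit, and Assumption~\ref{ass:PEPA for each m} together with the Kuratowski convergence of the penalty-function effective domains forces Assumption~\ref{A1} in the limit — this is the one bookkeeping point that needs care). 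Hence Theorem~\ref{thm:PEPA} also applies to $(\AA_i)_{i=1}^I$, giving part~(i).

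For part~(ii), the key is Lemma~\ref{lem:rho is strict convex}: since $\AA_i^{(m)}(\bB)$ is the epigraph of $\boa\mapsto\rho_i^{(m)}(\boa\cdot\bB)$, the hypothesis $\AA_i^{(m)}(\bB)\overset{K}{\longrightarrow}\AA_i(\bB)$ says exactly that these convex functions converge in the Kuratowski (epi-)sense, and, because the limit is finite-valued with full-dimensional domain, Theorem~7.17 of \cite{RocWet98} upgrades this to pointwise convergence on $\R^n$. Summing, $f^{(m)}\to f$ pointwise on $\R^{(I-1)\times n}$, and since all the functions are finite convex, this pointwise convergence is automatically uniform on compacts and in fact $f^{(m)}\overset{K}{\longrightarrow}f$ as well. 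Now invoke the standard epi-convergence stability theorem (Theorem~7.31/7.33 in \cite{RocWet98}, or Chapter~8 of \cite{Luc06}): if $f^{(m)}\overset{K}{\longrightarrow}f$, the $f^{(m)}$ are minimized at unique points $\tilde{\boa}^{(m)}$, and $f$ is minimized at the unique point $\tilde{\boa}$, then any bounded subsequence of $\tilde{\boa}^{(m)}$ clusters at $\tilde{\boa}$. So it remains only to show $\{\tilde{\boa}^{(m)}\}$ is bounded; then every subsequence has a further subsequence converging to $\tilde{\boa}$, whence $\tilde{\boa}^{(m)}\to\tilde{\boa}$, the feasibility relation gives $\hat{\boa}^{(m)}\to\hat{\boa}$, and the price convergence $\hat{\bsp}^{(m)}=\nabla\rho_i^{(m)}(\tilde{\boa}_i^{(m)}\cdot\bB)\to\nabla\rho_i(\tilde{\boa}_i\cdot\bB)=\hat{\bsp}$ follows from the $C^1$-convergence of convex functions (Theorem~25.7 in \cite{RocWet98} / Attouch's theorem) — convergence of convex functions forces convergence of gradients at convergent points.

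The main obstacle is the coercivity/boundedness step: showing $\sup_m\norm{\tilde{\boa}^{(m)}}_1<\infty$. I would argue by contradiction in the style of the proof of Theorem~\ref{thm:PEPA}: if $\norm{\tilde{\boa}^{(m)}}_1\to\infty$ along a subsequence, normalize $\boa^{(0)}_i=\lim \tilde{\boa}^{(m)}_i/\norm{\tilde{\boa}^{(m)}}_1$ (not all zero) and use Lemma~\ref{lem:sup_sets} to compute $\liminf_m f^{(m)}(\tilde{\boa}^{(m)})/\norm{\tilde{\boa}^{(m)}}_1 \geq \sum_{j} \limsup_m \sup_{\QQ\in\MM_j^{(m)}}\EE^{\QQ}[\boa_j^{(0)}\cdot\bB] + \limsup_m\sup_{\QQ\in\MM_I^{(m)}}\EE^{\QQ}[-\sum_j\boa_j^{(0)}\cdot\bB]$; the term $\sup_{\QQ\in\MM_j^{(m)}}(\cdot)\geq\sup_{\QQ\in\MM^{(m)}}(\cdot)$ by Assumption~\ref{ass:stability}, so this is at least $\sup_{\QQ\in\MM^{(m)}}\EE^\QQ[\sum_j\boa_j^{(0)}\cdot\bB]-\inf_{\QQ\in\MM^{(m)}}\EE^\QQ[\sum_j\boa_j^{(0)}\cdot\bB]$, which is strictly positive uniformly in $m$ — the delicate point being that Assumption~\ref{ass:PEPA for each m} must be leveraged in a way that does not degenerate as $m\to\infty$; one extracts this uniform gap by noting that along the subsequence the relevant sup/inf over $\MM^{(m)}$ also converge (after a further subsequence) and the Kuratowski convergence of the dual domains keeps the limiting gap strictly positive thanks to Assumption~\ref{A1} for the limit. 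This contradicts $f^{(m)}(\tilde{\boa}^{(m)})\leq f^{(m)}(\bze)$, which is bounded above because $f^{(m)}(\bze)=\sum_i\rho_i^{(m)}(0)=0$. That pins down the boundedness and closes the argument.
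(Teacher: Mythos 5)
Your architecture matches the paper's for most of the argument: same reduction to the functions $f_m$ and $f$ from the proof of Theorem \ref{thm:PEPA}, same use of Lemma \ref{lem:rho is strict convex} to get pointwise convergence $\rho_i^{(m)}(\boa\cdot\bB)\to\rho_i(\boa\cdot\bB)$, same observation that Assumptions \ref{ass:stability} and \ref{ass:PEPA for each m} transfer Assumptions \ref{ass:not empty intersectio} and \ref{A1} to the limit (since $\MM^{(m)}\subseteq\MM$ only enlarges the sup--inf gap), and the same gradient-convergence argument for the prices (the citation should be Theorem 25.7 of Rockafellar's \emph{Convex Analysis} \cite{Roc70}, not \cite{RocWet98}). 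Where you diverge is the passage from convergence of the $f_m$ to convergence of their minimizers: the paper invokes Tykhonov well-posedness of the limiting problem together with Lemma II.21 of \cite{DonZol93}, which compares the sublevel sets $\sets{f_m\leq b+\min f_m}$ with those of $f$ and thereby delivers boundedness and convergence of $\tilde{\boa}^{(m)}$ in one stroke; you instead use the epi-convergence cluster-point theorem plus a separate boundedness step.

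That boundedness step is where your sketch has a genuine gap. Your coercivity argument needs a lower bound on $f^{(m)}(\tilde{\boa}^{(m)})/\norm{\tilde{\boa}^{(m)}}_1$ that is \emph{uniform in $m$}, but the tools you cite do not provide it: Lemma \ref{lem:sup_sets} is proved for a single fixed risk measure (the penalties $\alpha_i^{(m)}(\QQ^{\eps})$ of the near-optimal measures are not controlled uniformly in $m$, so dividing by $\norm{\tilde{\boa}^{(m)}}_1$ does not kill them); and the inclusion $\MM^{(m)}\subseteq\MM$ of Assumption \ref{ass:stability} goes the wrong way --- it makes the gap $\sup_{\QQ\in\MM^{(m)}}\EE^{\QQ}[\bsdel\cdot\bB]-\inf_{\QQ\in\MM^{(m)}}\EE^{\QQ}[\bsdel\cdot\bB]$ \emph{smaller} than the limiting gap of Assumption \ref{A1}, so positivity in the limit does not yield a uniform positive lower bound along the sequence. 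The appeal to ``Kuratowski convergence of the dual domains'' is not established anywhere in the paper and would require a separate argument (epi-convergence does not in general control recession functions). The good news is that the conclusion you need is true and can be reached without any uniform dual estimate: since the $f_m$ are convex and converge to $f$ locally uniformly (pointwise convergence of finite convex functions on $\R^{(I-1)\times n}$ upgrades automatically, as you note), and since $f$ is coercive with unique minimizer $\tilde{\boa}$, one picks $R$ with $f\geq f(\tilde{\boa})+1$ on the sphere $\norm{\boa-\tilde{\boa}}=R$; for large $m$ one then has $f_m\geq f_m(\tilde{\boa})+\tfrac12$ on that sphere, and convexity of $f_m$ along rays from $\tilde{\boa}$ forces its minimizer into the ball of radius $R$. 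This elementary argument (which is essentially what the paper's cited Lemma II.21 packages) should replace your coercivity paragraph; with it, the rest of your proof goes through.
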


\begin{proof}
  The existence and the uniqueness of the PEPA for agents with
  acceptance sets $\left(\AA_i^{(m)}\right)_{i=1}^I$ follows directly
  from Theorem \ref{thm:PEPA}.  By Lemma \ref{lem:rho is strict
    convex}, the Kuratowski convergence
  $\AA_i^{(m)}(\bB)\overset{K}{\longrightarrow}\AA_i(\bB)$ implies
  that $\rho_i^{(m)}(\boa\cdot\bB)\to\rho_i(\boa\cdot\bB)$, for every
  $\boa\in\R^n$ and that the function
  $\boa\ni\R^n\mapsto\rho_i(\boa\cdot\bB)$ is strictly convex. Then,
  the existence and the uniqueness of the PEPA
  $(\hat{\bsp},\hat{\boa})$, for agents with acceptance sets
  $\left(\AA_i\right)_{i=1}^I$ is guaranteed again by Theorem
  \ref{thm:PEPA}.

  Following the lines of the proof of Theorem \ref{thm:PEPA}, for each
  $m\in\mathbb{N}$ we define the strictly-convex function
  $f_m:\R^{n\times(I-1)}\to\R$ by
  \begin{equation}\label{equ:function f_m}
    f_m(\boa)=\rho^{(m)}_1(\boa_1\cdot \bB)+\rho^{(m)}_2(\boa_2\cdot \bB)+...+\rho^{(m)}_{I-1}(\boa_{I-1}\cdot \bB)+\rho^{(m)}_{I}((-\sum_{i=1}^{I-1}\boa_i)\cdot \bB),
\end{equation}
and note that it admits a unique minimizer,
$\tilde{\boa}^{(m)}\in\R^{n\times(I-1)}$ (where in fact,
$\tilde{\boa}_i^{(m)}=\hat{\boa}_i^{(m)}$ for every
$i=1,2,...,I-1$). Similarly, we define the function
\begin{equation}\label{equ:function f_m}
    f(\boa)=\rho_1(\boa_1\cdot \bB)+\rho_2(\boa_2\cdot \bB)+...+\rho_{I-1}(\boa_{I-1}\cdot \bB)+\rho_{I}((-\sum_{i=1}^{I-1}\boa_i)\cdot \bB).
\end{equation}
which is also strictly convex and has a unique minimizer
$\tilde{\boa}\in\R^{n\times(I-1)}$ (where,
$\tilde{\boa}_i=\hat{\boa}_i$ for every $i=1,2,...,I-1$). Note that
the point-wise convergence
$\rho_i^{(m)}(\boa\cdot\bB)\to\rho_i(\boa\cdot\bB)$, trivially implies
that $f_m(\boa)\to f_m(\boa)$, for every $\boa\in\R^{n\times(I-1)}$.
In order to show that $\hat{\boa}^{(m)}\to\hat{\boa}$, as
$m\to\infty$, we first recall a well-known result (see for instance
Example I.7 in \cite{DonZol93}) that if $f$ is a convex,
lower-semicontinuous function and has a minimizer $\tilde{\boa}$, then
for every sequence $\boldsymbol{\delta}^{(m)}\in\R^{n\times(I-1)}$
such that
$$f(\boldsymbol{\delta}^{(m)})\to f(\tilde{\boa}),$$
it holds that $\boldsymbol{\delta}^{(m)}\to \tilde{\boa}$. In other
words, the problem of minimizing $f$ in $\R^{n\times(I-1)}$ is
\textit{well posed in the sense of Tykhonov}. This implies that for
every $\varepsilon>0$ there exists $b\in\R_+$ such that
\begin{equation}\label{equ:stability2}
  \sets{\boa\in\R^{n\times(I-1)}}{f(\boa)\leq b+f(\tilde{\boa})}
  \subseteq\sets{\boa\in\R^{n\times(I-1)}}{||\boa-\tilde{\boa}||<\varepsilon}.
\end{equation}

By Lemma II.21 in \cite{DonZol93}, for every $b\in\R_+$ and
 sufficiently large $m$ we have
\begin{equation}\label{equ:stability1}
  \sets{\boa\in\R^{n\times(I-1)}}{f_m(\boa)\leq b+
    f_m(\tilde{\boa}^{(m)})}\subseteq
  \sets{\boa\in\R^{n\times(I-1)}}{f(\boa)\leq 2b+f(\tilde{\boa})}.
\end{equation}
Combination of \eqref{equ:stability1} and \eqref{equ:stability2}
yields the convergence $\tilde{\boa}^{(m)}\to\tilde{\boa}$, which
trivially implies the convergence of partial equilibrium allocations,
$\hat{\boa}^{(m)}\to\hat{\boa}$.   The definition of the
equilibrium price yields that
$$\nabla\rho_i^{(m)}(\hat{\boa}_i^{(m)}\cdot\bB)=-\hat{\bsp}^{(m)},$$
for every $m\in\mathbb{N}$.  Theorem 25.7 in \cite{Roc70}
implies that the convergence
$\rho_i^{(m)}(\boa\cdot\bB)\to\rho_i(\boa\cdot\bB)$ for every
$\boa\in\R^n$ and the fact that the limiting function
$\boa\ni\R^n\mapsto\rho_i(\boa\cdot\bB)$ is differentiable in $\R^{n}$
yield that
$$ \nabla\rho_i^{(m)}(\boa\cdot\bB)\to\nabla\rho_i(\boa\cdot\bB),$$
for every $\boa\in\R^n$ and every $i=\{1,2,...,I\}$. Furthermore, the same Theorem states that this
convergence is uniform on compacts in
$\R^n$, so
$$\hat{\bsp}^{(m)}=\nabla\rho_i^{(m)}(\hat{\boa}_i^{(m)}\cdot\bB)
\longrightarrow \nabla\rho_i(\hat{\boa}_i\cdot\bB)=\hat{\bsp}.$$
\end{proof}
We conclude with an example in which we show what Kuratowski convergence
looks like in a familiar setting:
\begin{example}
  We consider the utility-based acceptance sets discussed in Example
  \ref{exp:utility} for the agent $i$ (see \cite{HugKra04} for
  technical details) and we consider a sequence of utility functions
  $\left(U^{(m)}_i\right)_{m\in\mathbb{N}}$, a sequence of probability
  measures $\left(\PP^{(m)}\right)_{m\in\mathbb{N}}$, and a sequence
  of initial wealths $\left(x^{(m)}_i\right)_{m\in\mathbb{N}}$.
  For every $B\in\linf$, $x\in\R_+$ and $m\in\mathbb{N}$, we define the
  indirect utility
$$u^{(m)}_i(x|B)=\underset{X\in\mathcal{X}}{\sup}\EE^{\PP^{(m)}}[U_i^{(m)}(X+B)],$$
where $\mathcal{X}$ is a set of admissible strategies (see page 848 in
\cite{HugKra04} for the exact definition).  The corresponding sequence
of acceptance sets is then given by
$$\AA_i^{(m)}=\sets{B\in\linf}{u^{(m)}_i(x_i^{(m)}|B)\geq u^{(m)}_i(x_i^{(m)}|0)}.$$
for every $m\in\mathbb{N}$.  It was proved in \cite{KarZit07}, Theorem
1.5, that the following convergence conditions
$$\PP^{(m)}\to\PP \text{ in total variation, }
U^{(m)}_i\to U_i \text{ point-wise in }\R_+ \text{ and }x^{(m)}_i\to
x_i,$$ (together with some additional technical assumptions), yield
that for every non-redundant bundle $\bB\in(\linf)^n$, we have that
\begin{equation}\label{equ:stability exp}
    u^{(m)}_i(x_i^{(m)}|\boa^{(m)}\cdot\bB)\to u_i(x_i|\boa\cdot\bB),
\end{equation}
for every sequence $\boa^{(m)}\in\R^n$ that converges to some
$\boa\in\R^n$.  It is, then, straightforward to get that
\eqref{equ:stability exp} imply that
$\AA_i^{(m)}(\bB)\overset{K}{\longrightarrow}\AA_i(\bB)$, which in
turn guarantees that the equilibrium price-allocation of $\bB$ is
well-posed.
\end{example}

\bigskip



\end{document}